\newtheorem{theorem}{Theorem}
\newtheorem{lemma}{Lemma}
\newtheorem{corollary}{Corollary}
\newtheorem{definition}{Definition}
\newenvironment{mechanism}[1]
  {\innermech}
  {\endinnermech}
\begin{document}
\title{Facility Location with Double-peaked Preferences\thanks{Aris Filos-Ratsikas and Jie Zhang were supported by the Sino-Danish Center for the Theory of Interactive Computation, funded by the Danish National Research Foundation and the National Science Foundation of China (under the grant 61061130540), and by the Center for research in the Foundations of Electronic Markets (CFEM), supported by the Danish Strategic Research Council. Jie Zhang was also supported by ERC Advanced Grant 321171 (ALGAME). Minming Li was partly supported by a grant from the Research Grants Council of the Hong Kong Special Administrative Region, China [Project No. CityU 117913]. Qiang Zhang was supported by FET IP project MULTIPEX 317532.
}}

\author[$\dag$]{Aris Filos-Ratsikas}
\author[$\S$]{Minming Li}
\author[$\star$]{Jie Zhang}
\author[$\ddagger$]{Qiang Zhang}

\affil[$\dag$]{\small{Department of Computer Science, Aarhus University, Denmark.}}
\affil[$\S$]{\small{Department of Computer Science, City University of Hong Kong, HK.}}
\affil[$\star$]{\small{Department of Computer Science, University of Oxford, UK.}}
\affil[$\ddagger$]{\small{Institute of Informatics, University of Warsaw, Poland.}}

\date{}
%
%

\maketitle
\begin{abstract}
We study the problem of locating a single \emph{facility} on a real line based on the reports of self-interested agents, when agents have \emph{double-peaked preferences}, with the peaks being on opposite sides of their locations. We observe that double-peaked preferences capture real-life scenarios and thus complement the well-studied notion of single-peaked preferences. We mainly focus on the case where peaks are equidistant from the agents' locations and discuss how our results extend to more general settings.  We show that most of the results for single-peaked preferences do not directly apply to this setting; this makes the problem essentially more challenging. As our main contribution, we present a simple truthful-in-expectation mechanism that achieves an approximation ratio of $1+b/c$ for both the social and the maximum cost, where $b$ is the distance of the agent from the peak and $c$ is the minimum cost of an agent. For the latter case, we provide a $3/2$ lower bound on the approximation ratio of any truthful-in-expectation mechanism. We also study deterministic mechanisms under some natural conditions, proving lower bounds and approximation guarantees. We prove that among a large class of reasonable mechanisms, there is no deterministic mechanism that outperforms our truthful-in-expectation mechanism.
\end{abstract}

\section{Introduction}
We study the problem of locating a single \emph{facility} on a real line, based on the input provided by selfish agents who wish to minimize their costs. Each agent has a \emph{location} $x_i \in \mathbb{R}$ which is her private information and is asked to report it to some central authority, which then decides where to locate the facility, aiming to optimize some function of the agents' reported locations. This model corresponds to problems such as finding the ideal location for building a primary school or a bus stop along a street, so that the total distance of all agents' houses from the location is minimized, or so that no agent's house will lie too far away from that location.

In our setting, we assume that agents have \emph{double-peaked} preferences, i.e. we assume that each agent $i$ has two unique most preferred points or \emph{peaks}, located at some distances from $x_i$ on opposite sides, where her cost is minimum. Traditionally, preferences in facility location problems are assumed to be \emph{single-peaked}, i.e. each agent's location is her most preferred point on the line and her cost increases linearly (at the same rate) to the left and the right of that peak. Sometimes however, single-peaked preferences do not model real-life scenarios accurately.
Take for instance the example mentioned above, where the government plans to build a primary school on a street. An agent with single-peaked preferences would definitely want the school built next to her house, so that she wouldn't have to drive her children there everyday. However, it is quite possible that she is also not very keen on the inevitable drawbacks of having a primary school next to her house either, like unpleasant noise or trouble with parking. On the other hand, a five-minute walking distance is sufficiently far for those problems to no longer be a factor but also sufficiently close for her children to be able to walk to school. There are two such positions, (symmetrically) in each direction, and those would be her two peaks.  

Our primary objective is to explore double-peaked preferences in facility location settings similar to the ones studied extensively for single-peaked preferences throughout the years \cite{procaccia2009approximate,schummer2002strategy,lu2009tighter,lu2010asymptotically,alon2010strategyproof,fotakis2010winner,escoffier2011strategy,fotakis2012power,dokow2012mechanism,feldman2013strategyproof}. For that reason, following the literature we assume that the cost functions are the same for all agents and that the cost increases linearly, at the same rate, as the output moves away from the peaks. The straightforward extension to the double-peaked case is piecewise-linear cost functions, with the same slope in all intervals, which gives rise to the natural model of \emph{symmetric} agents, i.e. the peaks are equidistant from the agent's location. Note that this symmetry is completely analogous to the single-peaked case (for facility location problems, e.g. see \cite{procaccia2009approximate}), where agents have exactly the same cost on two points equidistant from their peaks. Our lower bounds and impossibility results naturally extend to non-symmetric settings, but some of our mechanisms do not. We discuss those extensions in Section \ref{generalizations}. 

Our model also applies to more general spaces, beyond the real line. One can imagine for instance that the goal is to build a facility on the plane where for the same reasons, agents would like the facility to be built at some distance from their location, in \emph{every direction}. This translates to an agent having infinitely many peaks, located on a circle centered around her location. In that case of course, we would no longer refer to agents' preferences as double-peaked but the underyling idea is similar to the one presented in this paper. We do not explore such extensions here; we leave that for future work.  

Agents are self-interested entities that wish to minimize their costs. We are interested in mechanisms that ensure that agents are not incentivized to report anything but their actual locations, namely \emph{strategyproof} mechanisms. We are also interested in \emph{group strategyproof} mechanisms, i.e., mechanisms that are resistant to manipulation by coalitions of agents. Moreover, we want those mechanisms to achieve some good performance guarantees, with respect to our goals. If our objective is to minimize the sum of the agent's costs, known as the \emph{social cost}, then we are looking for strategyproof mechanisms that achieve a social cost as close as possible to that of the optimal mechanism, which need not be strategyproof. The prominent measure of performance for mechanisms in computer science literature is the approximation ratio \cite{dughmi2010truthful,guo2010strategyproof,ashlagi2010mix,caragiannis2011improved}, i.e., the worst possible ratio of the social cost achieved by the mechanism over the minimum social cost over all instances of the problem. The same holds if our objective is to minimize the \emph{maximum cost} of any agent. In the case of \emph{randomized mechanisms}, i.e., mechanisms that output a probability distribution over points in $\mathbb{R}$, instead of a single point, as a weaker strategyproofness constraint, we require \emph{truthfulness-in-expectation}, i.e., a guarantee that no agent can reduce her expected cost from misreporting.

\subsection{Double-peaked preferences in practice}

Single-peaked preferences were introduced in \cite{black1986theory} as a way to avoid \emph{Condorcet cycles} in majority elections. Moulin \cite{moulin1980strategy} characterized the class of strategyproof mechanisms in this setting, proving that \emph{median voter schemes} are essentially the only strategyproof mechanisms for agents with single-peaked preferences. 
Double-peaked preferences have been mentioned in social choice literature (e.g. see \cite{cooper1999thestrategic}), to describe settings where preferences are not single-peaked, voting cycles do exist and majority elections are not possible. In broader social choice settings, they can be used to model situations where e.g. a left-wing party might prefer a more conservative but quite effective policy to a more liberal but ineffective one on a left-to-right political axis. In fact, Egan \cite{egan2013something} provides a detailed discussion on double-peaked preferences in political decisions. He uses a 1964-1970 survey about which course of action the United States should take with regard to the Vietnam war as an example where the status quo (keep U.S, troops in Vietnam but try to terminate the war) was ranked last by a considerable fraction of the population when compared to a left-wing policy (pull out entirely) or a right-wing policy (take a stronger stand). This demonstrates that in a scenario where the standard approach would be to assume that preferences are single-peaked, preferences can instead be double-peaked. Egan provides additional evidence for the occurence of double-peaked preferences supported by experimental results based on surveys on the U.S. population, for many different problems (education, health care, illegal immigration treatment, foreign oil treatment e.t.c.). More examples of double-peaked preferences in real-life scenarios are presented in \cite{rosen2004public}. The related work demonstrates that although they might not be as popular as their single-peaked counterpart, double-peaked preferences do have applications in settings more general than the street example described earlier. On the other hand, the primary focus of this paper is to study double-peaked preferences on facility location settings and therefore the modelling assumptions follow the ones of the facility location literature.

\subsection{Our results}

Our main contribution is a truthful-in-expectation mechanism (\ref{leftrightmedian}) that achieves an approximation ratio of $1+b/c$ for the social cost and $\max\{1+b/c,2\}$ for the maximum cost, where $b$ is the distance between an agent's location and her peak and $c$ is her minimum cost. We also prove that no truthful-in-expectation mechanism can do better than a $3/2$ approximation for the maximum cost proving that at least for the natural special case where $b=c$, Mechanism \ref{leftrightmedian} is not far from the best possible. For deterministic mechanisms, we prove that no mechanism in a wide natural class of strategyproof mechanisms can achieve an approximation ratio better than $1+b/c$ for the social cost and $1+2b/c$ for the maximum cost and hence cannot outperform Mechanism \ref{leftrightmedian}. To prove this, we first characterize the class of strategyproof, anonymous and position invariant mechanisms for two agents by a single mechanism (\ref{mech}). Intuitively, anonymity requires that all agents are handled equally by the mechanism while position invariance essentially requires that if we shift an instance by a constant, the location of the facility should be shifted by the same constant as well. This is a quite natural condition and can be interpreted as a guarantee that the facility will be located \emph{relatively} to the reports of the agents and independently of the underlying network (e.g. the street). 

We prove that the approximation ratio of Mechanism \ref{mech} for the social cost is $\Theta(n)$, where $n$ is the number of agents and conjecture that no deterministic strategyproof mechanism can achieve a constant approximation ratio in this case. For the maximum cost, the ratio of Mechanism \ref{mech} is $\max\{1+2b/c,3\}$ which means that the mechanism is actually the best in the natural class of anonymous and position invariant mechanisms. For any deterministic strategyproof mechanism, we prove a lower bound of $2$ on the approximation ratio, proving that at least for the natural case of $b=c$, Mechanism \ref{mech} is also not far from optimal.
Finally, we prove an impossibility result; there is no group strategyproof, anonymous and position invariant mechanism for the problem. This is in constrast with the single-peaked preference setting, where there is a large class of group strategyproof mechanisms that satisfy those properties. Our results are summarized in Table \ref{resultstable}.

\section{Preliminaries}\label{preliminaries}
Let $N=\{1,2,\ldots,n\}$ be a set of {\em agents}. We consider the case where agents are located on a line, i.e., each agent $i \in N$ has a location $x_i \in \mathbb R$. We will occasionally use $x_i$ to refer to both the position of agent $i$ and the agent herself. We will call the collection $\mathbf x = \langle x_1, \dots, x_n\rangle $ a {\em location profile} or an {\em instance}. 

A {\em deterministic mechanism} is a function $f: \mathbb R^n \mapsto \mathbb R$  that maps a given location profile to a point in $\mathbb{R}$, the location of the {\em facility}. We assume that agents have  {\em double-peaked preferences}, symmetric with respect to the origin. We discuss how our results extend to non-symmetric agents in Section \ref{generalizations}. Given any instance $\mathbf x$ and a location $y \in \mathbb{R}$, the cost of agent $i$ is
\begin{align*}
\mathrm{cost}(y, x_i) =
\begin{cases}
c+|x_i - b - y| & \text{if}\ y \le x_i \\
c+|x_i + b - y| & \text{if}\ y > x_i
\end{cases}
\end{align*}
where $c$ and $b$ are positive constants. We will say that $y$ \emph{admits} a cost of $\mathrm{cost}(y, x_i)$ for agent $i$ on instance 
$\mathbf{x}$. For a mechanism that outputs $f(\mathbf{x})$ on instance $\mathbf{x}$, the cost of agent $i$ is $\mathrm{cost}(f(\mathbf{x}),x_i)$. 
Intuitively, each agent has two most favorable locations, i.e., $x_i - b$ and $x_i + b$, which we refer to as the {\em peaks} of agent $i$. Note that these peaks are actually the troughs of the curve of the cost function, but much like most related work, we refer to them as peaks. The parameter $c>0$ is the minimum cost incurred to an agent when the facility is built on one of her peaks.\footnote{It is not hard to see by our results that if we let an agent's cost be zero on her peaks, then in very general settings, no determnistic strategyproof mechanism can guarantee a finite approximation ratio.} Note that the special case, where $b=c$ corresponds to the natural setting where the incurred minimum cost of an agent is interpreted as the distance she needs to cover to actually reach the facility. This case is particularly appealing, since the bounds we obtain are clean numbers, independent of $b$ and $c$. The bounds for the natural case can be obtained directly by letting $b=c$ in all of our results. 

A {\em randomized mechanism} is a function $f: \mathbb R^n \mapsto \Delta (\mathbb R)$, where $\Delta (\mathbb R)$ is the set of probability distributions over $\mathbb R$. It maps a given location profile to probabilistically selected locations of the facility. The expected cost of agent $i$ is $\mathbb E _{y \sim \mathcal D} \left[\mathrm{cost}(y, x_i) \right]$, where $\mathcal D$ is the probability distribution of the mechanism outputs.

We will call a deterministic mechanism $f$ {\em strategyproof} if no agent would benefit by misreporting her location, regardless of the locations of the other agents. This means that for every $\mathbf x\in \mathbb R^n$, every $i\in N$ and every $x'_i\in \mathbb R$, $\mathrm{cost}(f(\mathbf x),x_i)\le \mathrm{cost}(f(x'_i, \mathbf {x}_{-i}),x_i)$,  where $\mathbf {x}_{-i}=\langle x_1,\dots,x_{i-1},x_{i+1},\dots,x_n \rangle$. A mechanism is {\em truthful-in-expectation} if it guarantees that every agent always minimizes her expected cost by reporting her location truthfully. Throughout the paper we will use the term \emph{strategyproofness} when refering to deterministic mechanisms and the term \emph{truthfulness} when refering to randomized mechanisms.

A mechanism is {\em group strategyproof} if there is no coalition of agents, who by jointly misreporting their locations, affect the outcome in a way such that the cost of none of them increases and the cost of at least one of them strictly decreases. In other words, there is no $S \subseteq N$ such that for some misreports $x_S'$ of agents in $S$ and some reports $\mathbf{x}_{-S}$ of agents in $N\backslash S$,  $\mathrm{cost}(f(x_S', \mathbf {x}_{-S}),x_i) \leq \mathrm{cost}(f(\mathbf x),x_i)$
for all $i \in S$, and $\mathrm{cost}(f(x_S', \mathbf {x}_{-S}),x_j) < \mathrm{cost}(f(\mathbf x),x_j)$ for at least one $j \in S$.

Given an instance $\mathbf x$ and a location $y \in \mathbb{R}$, the {\em social cost} and the {\em maximum cost} of $y$ are defined respectively as: 
\[
SC_y(\mathbf{x}) = \sum^n_{i=1} \mathrm{cost} (y, x_i)\ \ \  \textrm{,} \ \  \  MC_y(\mathbf{x}) = \max_{i \in N} \mathrm{cost} (y, x_i).
\]

We will say that $y$ \emph{admits} a social cost of $SC_y(\mathbf{x})$ or a maximum cost of $MC_y(\mathbf{x})$. We will call $y \in \mathbb{R}$ an \emph{optimal location} (for the social cost), if $y \in \arg\min_y SC_y(\mathbf{x})$. The definition for the maximum cost is analogous. Let $SC_{\mathrm{opt}}(\mathbf x)$ and $MC_{\mathrm{opt}}(\mathbf x)$ denote the social cost and the maximum cost of an optimal location respectively, on instance $\mathbf{x}$. For a mechanism $f$ that outputs $f(\mathbf{x})$ on instance $\mathbf{x}$, we will call $SC_\mathrm{f(\mathbf{x})}(\mathbf x)$ the social cost of the mechanism and we will denote it by $SC_\mathrm{f}(\mathbf x)$; and analogously for the maximum cost. 

We are interested in strategyproof mechanisms that perform well with respect to the goal of minimizing either the social cost or the maximum cost. We measure the performance of the mechanism by comparing the social/maximum cost it achieves with the optimal social/maximum cost, on any instance $\mathbf x$.

The approximation ratio of mechanism $f$, with respect to the social cost,  is given by
\begin{align*}
r=\sup_{\mathbf{x}} \frac{SC_\mathrm{f}(\mathbf x)}{SC_{\mathrm{opt}}(\mathbf x)}.
\end{align*}
The approximation ratio of mechanism $f$, with respect to maximum cost,  is defined similarly.

For randomized mechanisms, the definitions are similar and the approximation ratio is calculated with respect to the expected social or maximum cost, i.e., the expected sum of costs of all agents and expected maximum cost of any agent, respectively.

Finally we consider some properties which are quite natural and are satisfied by many  mechanisms (including the optimal mechanism). A mechanism $f$ is {\em anonymous}, if for every location profile $\mathbf x$ and every permutation $\pi$ of the agents, $f(x_1, \ldots, x_n) = f(x_{\pi(1)}, \ldots, x_{\pi(n)})$. We say that a mechanism $f$ is {\em onto}, if for every point $y \in \mathbb R$ on the line, there exists a location profile $\mathbf x$ such that $f(\mathbf x) = y $. Without loss of generality, for anonymous mechanisms, we can assume $x_1\le  \dots \le x_n$. 

A property that requires special mention is that of \emph{position invariance}, which is a very natural property as discussed in the introduction. This property was independently defined by \cite{feigenbaum2013approximately} where it was referred to as \emph{shift invariance}. One can view position invariance as an analogue to \emph{neutrality} in problems like the one studied here, where there is a continuum of outcomes instead of a finite set.

\begin{definition}
	\label{dfn-pi}
	A mechanism $f$ satisfies {\em position invariance}, if for all location profiles $\mathbf{x}=\langle x_1,...,x_n \rangle$ and $t \in \mathbb{R}$, it holds $f(x_1+t, x_2+t, \ldots, x_n+t) = f(\mathbf x) + t$. In this case, we will call such a mechanism \emph{position invariant}. We will refer to instances $\mathbf x$ and $\langle x_1+t, x_2+t, \ldots, x_n+t \rangle$ as {\em position equivalent}.
\end{definition}

Note that position invariance implies the onto condition. Indeed, for any location profile $\mathbf{x}$, with $f(\mathbf{x})=y$, we have $f(x_1+t, x_2+t, \ldots, x_n+t)=y'=y+t$ for any $t \in \mathbb{R}$, so every point $y' \in \mathbb{R}$ is a potential output of the mechanism.

\section{A truthful-in-expectation mechanism}

We start the exposition of our results with our main contribution, a truthful-in-expectation mechanism that achieves an approximation ratio of $1+b/c$ for the social cost and $\max\{1+b/c,2\}$ for the maximum cost.

\begin{mechanism}{M1}\label{leftrightmedian}
	Given any instance $\mathbf{x}=\langle x_1,...,x_n\rangle$, find the median agent $x_{m}=\mathrm{median}(x_1,...,x_n)$, breaking ties in favor of the agent with the smallest index. Output $f(\mathbf{x})=x_m-b$ with probability $\frac{1}{2}$ and $f(\mathbf{x})=x_m+b$ with probability $\frac{1}{2}$.
\end{mechanism}

\begin{theorem}
	Mechanism \ref{leftrightmedian} is truthful-in-expectation.
\end{theorem}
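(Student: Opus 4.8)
The plan is to compute the expected cost of the median agent's report for an arbitrary agent $i$ and show that reporting $x_i$ truthfully never does worse than any misreport $x_i'$. The key observation is that the mechanism's output depends on the reports only through the identity/location of the median agent $x_m$, and a single agent $i$ can influence $x_m$ only in limited ways: by reporting differently, agent $i$ can shift which agent is the median (and hence the value $x_m$) but only ``monotonically'' — if $i$'s true position is to the left of the current median, lowering her report cannot move $x_m$ to the right, and raising it can move $x_m$ only up to the position of the next agent to the right of $i$, etc. So the first step is to fix $\mathbf{x}_{-i}$ and describe the set of achievable values of $x_m$ as $x_i'$ ranges over $\mathbb{R}$; this set is an interval of ``candidate medians'' $[L, R]$ determined by $\mathbf{x}_{-i}$ together with the constraint imposed by the true value $x_i$ (for instance, when $n$ is odd, truthful reporting picks out one particular value in this set).

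The second step is the cost computation. Given that the output is $x_m - b$ w.p. $1/2$ and $x_m + b$ w.p. $1/2$, the expected cost to agent $i$ is
\[
\tfrac12\,\mathrm{cost}(x_m - b, x_i) + \tfrac12\,\mathrm{cost}(x_m + b, x_i).
\]
I would evaluate this as a function $g(x_m)$ of the median location, splitting into cases according to where $x_m$ sits relative to $x_i - b$, $x_i$, $x_i + b$, using the piecewise definition of $\mathrm{cost}$ (remembering the tie-break rule at $y = x_i$). The crucial claim is that $g$ is minimized exactly when $x_m = x_i$, with value $c$ (indeed when $x_m=x_i$ the two outputs are $x_i-b$ and $x_i+b$, the two peaks, each admitting cost $c$), and that $g$ is nondecreasing in $|x_m - x_i|$ — more precisely, weakly increasing as $x_m$ moves away from $x_i$ in either direction. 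One should check that the averaging over the two symmetric outputs is what kills the incentive to lie: if the mechanism deterministically output $x_m - b$, an agent to the right of the median would want to pull the median rightward to bring $x_m - b$ closer to her peak $x_i - b$; the symmetric output $x_m+b$ exactly counterbalances this.

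The third step glues the two together: reporting truthfully gives median value $x_m^{\mathrm{true}}$; any misreport $x_i'$ yields some median $x_m' $ in the achievable set, and because agent $i$ cannot move the median to the ``$i$ side'' past her own true location (the agents on $i$'s side of the median stay on that side, so the new median is at least as far from $x_i$ as $x_m^{\mathrm{true}}$ — here is where the structure of the achievable interval from step one is used), we get $|x_m' - x_i| \ge |x_m^{\mathrm{true}} - x_i|$, hence $g(x_m') \ge g(x_m^{\mathrm{true}})$ by step two. I would handle the even-$n$ case and the tie-breaking convention carefully, since there the ``true'' median is already one of two central agents and a misreport could in principle swap which one is selected; one must verify the tie-break-toward-smallest-index rule does not open a profitable deviation.

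The main obstacle I anticipate is step one — precisely characterizing how a single agent's report can move the selected median, across all the cases of $n$ odd/even, $i$ being left of, right of, or equal to the median, and the tie-breaking rule — and then arguing cleanly that any reachable median is weakly farther from $x_i$ than the truthful one. The cost monotonicity in step two is a routine piecewise-linear computation by comparison; the real content is that the agent is ``stuck on her side'' of the median, which makes moving it only ever hurt her once outputs are symmetrized.
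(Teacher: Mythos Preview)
Your proposal is correct and follows essentially the same line as the paper. The paper also reduces to the two facts you isolate: (i) an agent with $x_i < x_m$ (symmetrically $x_i > x_m$) can, by misreporting, only realize a new median $x_m' \ge x_m$, and (ii) the expected cost does not decrease as the median moves away from $x_i$. Where you package (ii) as monotonicity of the single-variable function $g(x_m)$ in $|x_m - x_i|$, the paper instead does a direct three-case analysis on the position of $x_i + b$ relative to $x_m - b$, $x_m$, and $x_m + b$; the underlying computations coincide.

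The ``main obstacle'' you anticipate in step one is in fact a one-line observation and does not require characterizing the full achievable interval: the median is coordinate-wise nondecreasing, so if $x_i < x_m$ then every report $x_i'$ yields $x_m' \ge x_m > x_i$, hence $|x_m' - x_i| \ge |x_m - x_i|$ immediately. The paper dispatches this in a single sentence. The tie-breaking and even-$n$ worries are likewise harmless, since the mechanism depends only on the value $x_m$, not on which agent carries that label.
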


\begin{proof}
	First, note that the median agent does not have an incentive to deviate, since her expected cost is already minimum, neither does any agent $i$ for which $x_i=x_m$. Hence, for the deviating agent $i$ it must be either $x_i < x_m$ or $x_i> x_m$. We consider three cases when $x_i<x_m$. The proof for the case $x_i>x_m$ is symmetric. Observe that for agent $i$ to be able to move the position of the facility, she has to report $x_i'\geq x_m$ and change the identity of the median agent. Let $x'_m$ be the median agent in the new instance $\langle x'_i,x_{-i} \rangle$, after agent $i$'s deviation. If $x'_m=x_m$, then obviously agent $x_i$ does not gain from deviating, so we will assume that $x'_m>x_m$. 
	
	\textbf{Case 1:} $x_i + b \leq x_m-b$ (symmetrically $x_i-b \geq x_m+b$).
	
	In this case, the cost of agent $i$ is calculated with respect to $x_i+b$ for both possible outcomes of the mechanism. Since $x'_m-b > x_m-b$ and $x'_m+b > x_m+b$, it holds that $|(x_i+b)-(x'_m-b)| > |(x_i+b)-(x_m-b)|$ and $|(x_i+b)-(x'_m+b|) > |(x_i+b)-(x_m+b)|$ and agent $i$ can not gain from misreporting.

	\textbf{Case 2:} $x_m-b < x_i + b \leq x_m$ (symmetrically $x_m \leq x_i-b < x_m+b$).
	
	Again, the cost of agent $i$ is calculated with respect to $x_i+b$ for both outcomes of the mechanism. This time, it might be that $|(x_i+b)-(x'_m-b)| < |(x_i+b)-(x_m-b)|$ but since $(x'_m-b)-(x_m-b)=(x'_m+b)-(x_m+b)$, it will also hold that $|(x_i+b)-(x'_m+b)| > |(x_i+b)-(x_m+b)|$ and also $|(x_i+b)-(x_m-b)| - |(x_i+b)-(x'_m-b)| = |(x_i+b)-(x'_m+b)| - |(x_i+b)-(x_m+b)|$. Hence, the expected cost of agent $i$ after misreporting is at least as much as it was before.

	\textbf{Case 3:} $x_m < x_i+b \leq x_m+b$ (symmetrically $x_m-b \leq x_i-b < x_m$).
	
	The cost of agent $i$ before misreporting is calculated with respect to $x_i-b$ when the outcome is $x_m-b$ and with respect to $x_i+b$ when the outcome is $x_m+b$. For any misreport $x'_i < x_i+b$, this is still the case (for $x'_m-b$ and $x'_m+b$ respectively) and since  $(x'_m-b)-(x_m-b)=(x'_m+b)-(x_m+b)$, her expected cost is not smaller than before. For any misreport $x'_i > x_i+b$, her cost is calculated with respect to $x_i+b$ for both possible outcomes of the mechanism and for the same reason as in Case 2, her expected cost is at least as much as it was before misreporting.
\end{proof}

\subsection{Social cost}

Next, we will calculate the approximation ratio of the mechanism for the social cost. In order to do that, we will need the following lemma.

\begin{lemma}\label{betweenmlmr}
	Let $\mathbf{x}=\langle x_1,...,x_m,...,x_n\rangle$, where $x_{m}=\mathrm{median}$ $(x_1,...,x_n)$, breaking ties in favor of the smallest index. There exists an optimal location for the social cost in $[x_m-b,x_m+b]$.
\end{lemma}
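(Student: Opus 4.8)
The plan is to exploit the very simple shape of the individual cost curves. For each agent $i$, the map $y\mapsto\mathrm{cost}(y,x_i)$ is continuous and piecewise linear with slope $\pm 1$ on every linear piece: the slope is $-1$ on $(-\infty,x_i-b)$, $+1$ on $(x_i-b,x_i)$, $-1$ on $(x_i,x_i+b)$, and $+1$ on $(x_i+b,\infty)$, the only kinks being at $x_i-b$, $x_i$, $x_i+b$. Hence $y\mapsto SC_y(\mathbf{x})=\sum_i\mathrm{cost}(y,x_i)$ is continuous and piecewise linear with finitely many kinks, so it attains a minimum on the compact interval $[x_m-b,x_m+b]$, say at $y_0$. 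It therefore suffices to prove that $SC_y(\mathbf{x})$ is strictly decreasing in $y$ on $(-\infty,x_m-b]$ and strictly increasing in $y$ on $[x_m+b,\infty)$, since then $SC_y(\mathbf{x})>SC_{x_m-b}(\mathbf{x})\ge SC_{y_0}(\mathbf{x})$ for every $y<x_m-b$ and symmetrically on the right, so $y_0$ is a global optimum lying in $[x_m-b,x_m+b]$.

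I would prove the claim on $(-\infty,x_m-b]$; the other half is symmetric. Fix any $y<x_m-b$ that is not a kink of any term and add up the per-agent slopes. Every agent $i$ with $x_i\ge x_m$ satisfies $y<x_m-b\le x_i-b$, so $y$ lies on the leftmost linear piece of $\mathrm{cost}(\cdot,x_i)$ and contributes exactly $-1$. Every agent $i$ with $x_i<x_m$ contributes $\pm 1$, that is at most $+1$, regardless of which piece $y$ falls on. Hence the slope of $SC_y(\mathbf{x})$ at $y$ is at most $-\bigl|\{i:x_i\ge x_m\}\bigr|+\bigl|\{i:x_i<x_m\}\bigr|$. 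Since $x_m$ is the (lower) median with ties broken toward the smallest index, at most $\lfloor n/2\rfloor$ agents have $x_i<x_m$ whereas at least $\lfloor n/2\rfloor+1$ have $x_i\ge x_m$, so the slope is strictly negative. A continuous function that is differentiable except at finitely many points and has negative derivative wherever differentiable on an interval is strictly decreasing there, so $SC_y(\mathbf{x})$ is strictly decreasing on $(-\infty,x_m-b]$.

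The mirror argument gives the right side: for $y>x_m+b$, every agent $i$ with $x_i\le x_m$ has $y>x_m+b\ge x_i+b$ and thus contributes $+1$; these agents are again a strict majority, so the slope of $SC_y(\mathbf{x})$ is strictly positive and $SC_y(\mathbf{x})$ is strictly increasing on $[x_m+b,\infty)$. Together with the first paragraph this proves the lemma.

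The only subtle point is the count of agents on either side of the median: the descent on the left (and the ascent on the right) must survive the worst case in which every ``wrong-side'' agent pushes against it, and this works exactly because the strict majority of same-side agents each contribute a full unit of slope while each minority agent contributes at most one unit the other way. Everything else is routine piecewise-linear bookkeeping; one could equivalently replace the monotonicity argument by a direct exchange argument (if an optimal $y^{*}$ satisfies $y^{*}<x_m-b$, move it to $x_m-b$ and check termwise that $SC$ does not increase), but phrasing it through monotonicity keeps the case analysis to a minimum.
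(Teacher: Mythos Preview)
Your slope-counting approach is sound and is essentially a repackaging of the paper's exchange argument: the paper assumes an optimal $y$ lies outside $[x_m-b,x_m+b]$ and shows that sliding it to the nearer endpoint does not increase $SC$, which is your monotonicity statement read in reverse. There is, however, one inaccuracy in your symmetric half.

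The ``strict majority'' claim on the right side fails when $n$ is even. With the lower-median tie-break ($m=n/2$ in sorted order), the set $\{i:x_i\le x_m\}$ is only guaranteed to have size $n/2$, not $n/2+1$. Hence the slope of $SC_y$ on $(x_m+b,\infty)$ can be zero rather than strictly positive: for instance with $n=2$ and $x_1+b<x_2-b$, on the interval $(x_1+b,x_2-b)$ agent~$1$ contributes $+1$ and agent~$2$ contributes $-1$, so $SC$ is flat there. Thus $SC_y$ is only \emph{non-decreasing} on $[x_m+b,\infty)$, not strictly increasing. This does not damage the lemma, which merely asserts the \emph{existence} of an optimum in $[x_m-b,x_m+b]$: weak monotonicity on the right together with your (correctly strict) decrease on the left already forces the global minimum of the continuous piecewise-linear $SC$ to be attained in the compact interval. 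The paper's own proof hits exactly the same asymmetry and treats the even-$n$ right-hand case separately, concluding only $SC_{x_m+b}(\mathbf{x})\le SC_y(\mathbf{x})$ rather than strict inequality. So your argument goes through once you downgrade ``strictly increasing'' to ``non-decreasing'' on the right half-line; the two halves are genuinely not symmetric under the stated tie-breaking rule.
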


\begin{proof}
	Assume for contradiction that this is not the case. Then, for any optimal location $y$, it must be that either $y< x_m-b$  or $y>x_m+b$.
	
	Assume first that $y<x_m-b$. Since $x_m$ is the median agent, it holds that for at least $\lceil n/2 \rceil$ agents, $x_i-b \geq x_m-b$, that is $x_m-b$ admits a smaller cost for at least $\lceil n/2 \rceil$ agents when compared to $y$. Let $X_1$ be the set of those agents. On the other hand, for each agent $x_i<x_m$, $x_m-b$ may admit a smaller or larger cost than $y$, depending on her position with respect to $y$. In the worst case, the cost is larger for every one of those agents, which happens when $x_i+b \leq y$ for every agent with $x_i < x_m$. Let $X_2$ be the set of those agents. Now observe that for any two agents $x_a \in X_1$ and $x_b \in X_2$, it holds that $\mathrm{cost}(x_a,y)-\mathrm{cost}(x_a,x_m-b) = \mathrm{cost}(x_b,x_m-b)-\mathrm{cost}(x_b,y)$. Since $|X_1| \geq |X_2|$, it holds that that $SC_{x_m-b}(\mathbf x) \leq SC_{y}(\mathbf x)$. Since it can not be that $SC_{x_m-b}(\mathbf x) < SC_{y}(\mathbf x)$, $x_m-b$ is an optimal location and we get a contradiction. 
	
	Now assume $y> x_m+b$. Let $y=x_m+b$. If the number of agents is odd, then we can use an exactly symmetric argument to prove that $SC_{x_m+b} \leq SC_{y}$. If the number of agents is even, the argument can still be used, since our tie-breaking rule selects agent $x_{n/2}$ as the median. Specifically, $x_m+b$ admits a smaller cost for exactly $n/2$ of the agents (including agent $x_{n/2}$) and in the worst case, $y$ admits a smaller cost for $n/2$ agents as well. If $X_1$ and $X_2$ are the sets of those agents respectively, then again it holds that $\mathrm{cost}(x_a,y)-\mathrm{cost}(x_a,x_m+b) = \mathrm{cost}(x_b,x_m+b)-\mathrm{cost}(x_b,y)$ for $x_a \in X_1$ and $x_b \in X_2$ and we get a contradiction as before.
\end{proof}

We now proceed to proving the approximation ratio of Mechanism~\ref{leftrightmedian}.

\begin{theorem}\label{tiesc}
	Mechanism \ref{leftrightmedian} has an approximation ratio of $1+\frac{b}{c}$ for the social cost.
\end{theorem}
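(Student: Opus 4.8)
The plan is to bound the expected social cost of Mechanism~\ref{leftrightmedian} agent by agent against an optimal location, using the single structural fact that each agent's cost function is $1$-Lipschitz, and then to convert this additive bound into the multiplicative ratio $1+b/c$ via the trivial lower bound $\mathrm{cost}(y,x_i)\ge c$.

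\textbf{Upper bound.} Write $L=x_m-b$ and $R=x_m+b$, so that the expected social cost of the mechanism is $\tfrac12 SC_L(\mathbf x)+\tfrac12 SC_R(\mathbf x)=\sum_{i\in N}\tfrac12\bigl(\mathrm{cost}(L,x_i)+\mathrm{cost}(R,x_i)\bigr)$. The key first step is to observe that for every agent $i$ the map $y\mapsto\mathrm{cost}(y,x_i)$ is piecewise linear with every slope in $\{-1,+1\}$ and continuous at $x_i$, hence $1$-Lipschitz on $\mathbb R$. By Lemma~\ref{betweenmlmr} there is an optimal location $y^\star$ for the social cost with $y^\star\in[L,R]$. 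Applying $1$-Lipschitzness twice gives $\mathrm{cost}(L,x_i)\le\mathrm{cost}(y^\star,x_i)+(y^\star-L)$ and $\mathrm{cost}(R,x_i)\le\mathrm{cost}(y^\star,x_i)+(R-y^\star)$; adding these and using $(y^\star-L)+(R-y^\star)=R-L=2b$ yields the per-agent estimate $\tfrac12(\mathrm{cost}(L,x_i)+\mathrm{cost}(R,x_i))\le\mathrm{cost}(y^\star,x_i)+b$. Summing over $i\in N$ gives $\tfrac12 SC_L(\mathbf x)+\tfrac12 SC_R(\mathbf x)\le SC_{\mathrm{opt}}(\mathbf x)+nb$. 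Finally, since $\mathrm{cost}(y,x_i)\ge c$ for all $y$ and $i$, we have $SC_{\mathrm{opt}}(\mathbf x)\ge nc$, so $nb\le\tfrac bc\,SC_{\mathrm{opt}}(\mathbf x)$, and the expected social cost is at most $(1+\tfrac bc)\,SC_{\mathrm{opt}}(\mathbf x)$.

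\textbf{Tightness.} For the matching lower bound I would exhibit, for every odd $n=2\ell+1$, the profile with $\ell$ agents at $-b$, $\ell$ agents at $b$, and one agent at $0$. The median is the agent at $0$, so the mechanism outputs $-b$ or $b$ with probability $\tfrac12$ each; a short computation shows that in either case the social cost equals $nc+(n-1)b$ (each of the $n-1$ agents at $\pm b$ pays $c+b$ and the central agent pays $c$), whereas $y=0$ is optimal with $SC_{\mathrm{opt}}(\mathbf x)=nc+b$ (the $n-1$ agents at $\pm b$ sit on a peak, and the central agent pays $c+b$); one checks $SC_y(\mathbf x)=nc+b+(n-2)|y|$ on $[-b,b]$, and invokes Lemma~\ref{betweenmlmr} to rule out locations outside this interval. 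The ratio $\frac{nc+(n-1)b}{nc+b}$ tends to $1+\tfrac bc$ as $n\to\infty$, so the bound is best possible.

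\textbf{Main obstacle.} The upper bound is short once the Lipschitz property is noticed; the only care needed there is to invoke Lemma~\ref{betweenmlmr} so that the two distances to $y^\star$ sum to exactly $2b$ rather than more. The less obvious part is the tightness argument: the bound cannot be made tight on a single small instance, because the median agent necessarily sits at $x_m$ and is served at cost exactly $c$ by \emph{both} outputs of the mechanism, contributing nothing to the loss; one is therefore forced to let $n\to\infty$ so that this one ``free'' agent is washed out, and to verify that the tie-breaking rule of Mechanism~\ref{leftrightmedian} indeed selects the intended central agent as the median.
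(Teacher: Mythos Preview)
Your proof is correct and follows essentially the same route as the paper. The paper's upper bound also invokes Lemma~\ref{betweenmlmr}, proves the per-agent inequality $\mathrm{cost}(x_m-b,x_i)\le\mathrm{cost}(y^\star,x_i)+\delta$ (and the symmetric one with $2b-\delta$) by a short case analysis on which peak the cost is measured from, sums, and divides by $SC_{\mathrm{opt}}\ge nc$; your phrasing via the $1$-Lipschitz property of $y\mapsto\mathrm{cost}(y,x_i)$ is exactly that case analysis compressed into one line. For tightness the paper uses a slightly asymmetric profile ($k-1$ agents at one end, one in the middle, the rest at the other end) but arrives at the identical ratio $\frac{nc+(n-1)b}{nc+b}\to 1+b/c$, so your symmetric odd-$n$ instance is an equally valid choice.
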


\begin{proof}
	Consider an arbitrary instance $\mathbf{x}=\langle x_1,...,x_n\rangle$ and let $x_m$ be the median agent. By Lemma \ref{betweenmlmr}, there exists an optimal location $y \in [x_m-b,x_m+b]$. Let $\delta=y-(x_m-b)$. For every agent $i$, it holds that $\text{cost}(x_i,x_m-b) \leq \text{cost}(x_i,y)+\delta$. To see this, first observe that $|(x_i-b)-(x_m-b)| \leq |(x_i-b)-y|+\delta$ and  that  $|(x_i+b)-(x_m-b)| \leq |(x_i+b)-y|+\delta$. If the cost of an agent admitted by $y$ and $x_m-b$ is calculated with respect to the same peak, then $\min(|(x_i-b)-(x_m-b)|,|(x_i+b)-(x_m-b)|) \leq \min(|(x_i-b)-y|,|(x_i+b)-y|)+\delta$ and the inequality holds. If the cost is calculated with respect to different peaks for $y$ and $x_m-b$, it must be that $\text{cost}(x_i,x_m-b)=c+|(x_i-b)-(x_m-b)|$ and $\text{cost}(x_i,y)=c+|x_i+b-y|$, because $x_m-b < y$. Since $|(x_i-b)-(x_m-b)| \leq |(x_i+b)-(x_m-b)| \leq |(x_i+b)-y|+\delta$, the inequality holds. Similarily, we can prove that $\text{cost}(x_i,x_m+b) \leq \text{cost}(x_i,y)+(2b-\delta)$ for every agent $i$. Hence, we can upper bound the cost of Mechanism~\ref{leftrightmedian} by
	$\frac{1}{2}\sum_{i=1}^{n}\text{cost}(x_i,x_m-b) +$ $\frac{1}{2}\sum_{i=1}^{n}\text{cost}(x_i,x_m+b)$
	$\leq \frac{1}{2} \sum_{i=1}^{n} \left( \text{cost}(x_i,y)+\delta \right)+\frac{1}{2} \sum_{i=1}^{n} \left( \text{cost}(x_i,y) + 2b-\delta \right)$ $= SC_y(\mathbf{x}) +nb$ = $SC_{\mathrm {opt}}(\mathbf{x}) + nb$.
	The approximation ratio then becomes $1+\frac{nb}{SC_{\mathrm{opt}}(\mathbf{x})}$, which is at most $1+\frac{b}{c}$, since $SC_{\mathrm{opt}}(\mathbf{x})$ is at least $nc$.
	
	For the lower bound, consider the location profile $\mathbf{x}=\langle x_1,...,x_n\rangle$ with $x_1=...=x_{k-1}=x_k-b=x_{k+1}-2b=...=x_n-2b$. Note that the argument works both when $n=2k$ and when $n=2k+1$ because Mechanism \ref{leftrightmedian} selects agent $x_k$ as the median agent in each case. The optimal location is $x_1+c$ whereas Mechanism \ref{leftrightmedian} equiprobably outputs $f_{\mathrm{\ref{leftrightmedian}}}(\mathbf{x})=x_{k}-b$ or $f_{\mathrm{\ref{leftrightmedian}}}(\mathbf{x})=x_{k}+b$. The cost of the optimal location is $SC_{\mathrm{opt}}(\mathbf{x}) = nc + b$ whereas the cost of Mechanism \ref{leftrightmedian} is $SC_{\mathrm{\ref{leftrightmedian}}}(\mathbf{x})=$ $nc + (1/2)(n-1)b$ $+(1/2)(n-1)b$ $=nc+(n-1)b$.
	The approximation ratio then becomes $\frac{nc+(n-1)b}{nc+b}$ $= 1 \frac{b}{c}\cdot\frac{n-2}{n+(b/c)}$. 
	As the number of agents grows to infinity, the approximation ratio of the mechanism on this instance approaches $1+b/c$. This completes the proof.
\end{proof}

\subsection{Maximum cost}

We also consider the maximum cost and prove the approximation ratio of Mechanism \ref{leftrightmedian} as well as a lower bound on the approximation ratio of any truthful-in-expectation mechanism. The results are summarized in Table \ref{resultstable}.

\begin{theorem}\label{tiescmax}
	Mechanism \ref{leftrightmedian} has an approximation ratio of $\max\{1+b/c,2\}$ for the maximum cost.
\end{theorem}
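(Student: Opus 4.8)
The plan is to bound both the mechanism's expected maximum cost and the optimal maximum cost in terms of the structure around the median agent $x_m$, and then compare. First I would prove an analogue of Lemma~\ref{betweenmlmr} for the maximum cost: there is an optimal location for the maximum cost that lies within distance $b$ of $x_m$, or at least some control of where the optimal location sits relative to $x_m$. Actually, the cleaner route is to observe directly that $MC_{\mathrm{opt}}(\mathbf{x}) \ge c$ always, and that $MC_{\mathrm{opt}}(\mathbf{x})$ is also at least the ``spread'' term: if the leftmost relevant peak and rightmost relevant peak are far apart, any single point is far from one of them. Specifically, for the optimal deterministic point $y^*$, the cost of the extreme agents forces $MC_{\mathrm{opt}}(\mathbf{x}) \ge c + \tfrac{1}{2}\big((x_n+b) - (x_1-b)\big)$ in the worst configurations, or more carefully a bound in terms of how the agents straddle $y^*$.

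Next I would upper bound the mechanism's expected maximum cost. For each of the two outputs $x_m \pm b$, I would express $MC_{x_m-b}(\mathbf{x})$ and $MC_{x_m+b}(\mathbf{x})$ using the cost formula, noting that the worst agent for output $x_m-b$ is among the extreme agents, and use the inequalities $\mathrm{cost}(x_i, x_m - b) \le \mathrm{cost}(x_i, y) + \delta$ and $\mathrm{cost}(x_i, x_m+b) \le \mathrm{cost}(x_i, y) + (2b - \delta)$ already established in the proof of Theorem~\ref{tiesc} (these are pointwise in $i$, so they survive the max). Taking $y$ to be an optimal location for the maximum cost and $\delta = y - (x_m - b) \in [0, 2b]$, this gives
\[
\tfrac{1}{2} MC_{x_m-b}(\mathbf{x}) + \tfrac{1}{2} MC_{x_m+b}(\mathbf{x}) \le MC_{\mathrm{opt}}(\mathbf{x}) + \tfrac{1}{2}\delta + \tfrac{1}{2}(2b-\delta) = MC_{\mathrm{opt}}(\mathbf{x}) + b.
\]
Wait — this does not immediately give $\max\{1+b/c,2\}$; combined with $MC_{\mathrm{opt}} \ge c$ it only yields $1 + b/c$. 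So the $2$ in the bound must come from a separate, tighter argument when $b/c$ is small: there I would argue that since the mechanism always outputs a point that is a peak of the median agent, and the optimal maximum cost is at least half the distance between the two ``farthest'' peaks that must be served, the additive-$b$ slack is actually dominated — I would instead directly show $MC_{\mathrm{\ref{leftrightmedian}}}(\mathbf{x}) \le 2 \cdot MC_{\mathrm{opt}}(\mathbf{x})$ by showing every point $x_m \pm b$ is within $MC_{\mathrm{opt}}$ of the optimal point (distance-wise) plus reusing $MC_{\mathrm{opt}} \ge$ that distance, so each realized cost is at most $MC_{\mathrm{opt}} + (\text{dist to } y^*) \le 2 MC_{\mathrm{opt}}$, using that $|x_m \pm b - y^*| \le MC_{\mathrm{opt}} - c \le MC_{\mathrm{opt}}$ when $y^*$ is within $[x_m-b, x_m+b]$.

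The main obstacle I expect is pinning down exactly which of the two expressions $1+b/c$ or $2$ governs in which regime and proving the crossover cleanly — in particular, showing the lower-bound (tightness) instances: one family of instances approaching ratio $1+b/c$ (relevant when $b \ge c$), likely a variant of the social-cost lower-bound construction with all agents clustered so that the two peaks of the median straddle a tight optimum, and another family approaching ratio $2$ (relevant when $b \le c$), probably two agents placed so that the optimal maximum cost is exactly $c$ but each of $x_m - b$ and $x_m + b$ forces cost close to $2c$ on some agent. I would also need to double-check the additive bound's constant: the slack is $b$, not $2b$, because the two outcomes each pay on opposite sides, so the bound $1 + b/c$ (not $1 + 2b/c$) is what the expectation delivers, consistent with the statement. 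Assembling these two regimes and verifying the constructions are matching is the delicate part; the upper-bound inequalities themselves are routine given the work already done for Theorem~\ref{tiesc}.
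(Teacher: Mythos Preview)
Your upper-bound argument has a real gap: you assume the optimal location $y$ for the maximum cost satisfies $\delta = y - (x_m - b) \in [0,2b]$, i.e.\ $y \in [x_m-b,x_m+b]$. Unlike Lemma~\ref{betweenmlmr} for the social cost, this is \emph{false} for the maximum cost. Take $n$ agents with $x_1$ far to the left and $x_2=\dots=x_n$; the median is $x_2$, but the max-cost optimum is the midpoint $(x_1+x_2)/2$, which can be arbitrarily far from $[x_m-b,x_m+b]$. On such instances your additive bound $MC_{\ref{leftrightmedian}}\le MC_{\mathrm{opt}}+b$ simply fails: with $d=(x_2-x_1)/2-b$ one computes $MC_{\ref{leftrightmedian}}=c+2d+b$ and $MC_{\mathrm{opt}}=c+d$, so the gap is $b+d$, not $b$. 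Your separate factor-$2$ argument also breaks: the claim $|x_m\pm b - y^*|\le MC_{\mathrm{opt}}-c$ cannot hold for both signs (if $y^*=x_m-b$ then $|x_m+b-y^*|=2b$, which need not be $\le MC_{\mathrm{opt}}-c$).

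What the paper does is precisely a case split on whether $y^*$ lies in $[x_m-b,x_m+b]$. Inside, your additive-$b$ bound is correct and gives $1+b/c$. Outside, the key extra observation is that the \emph{median agent herself} pays at least $c+\delta$ at $y^*$ (where $\delta$ is the distance from $y^*$ to the nearer endpoint), so $MC_{\mathrm{opt}}\ge c+\delta$; this absorbs the extra $\delta$ and yields $MC_{\ref{leftrightmedian}}\le 2MC_{\mathrm{opt}}+b-c$, hence ratio $\le 2$ when $b\le c$ and $\le 1+b/c$ when $b\ge c$. For tightness, one family suffices: the instance above with parameter $d$ gives ratio $1+\frac{b+d}{c+d}$, which tends to $1+b/c$ as $d\to 0$ and to $2$ as $d\to\infty$. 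Your proposed second tight family (``$MC_{\mathrm{opt}}=c$ exactly'') cannot reach ratio $2$, since $MC_{\mathrm{opt}}=c$ forces $y^*$ to be a peak of every agent, in particular of $x_m$, so one of the mechanism's two outputs is already optimal.
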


\begin{proof}
	
	Let $\mathbf{x}=\langle x_1,...,x_n\rangle$ be an arbitrary instance and let $x_m$ be the median agent. We will consider two cases, based on the location of $f_{\mathrm{opt}}(\mathbf{x})$ with respect to $x_m-b$ (or symmetrically $x_m+b$). \\
	
	\textbf{Case 1:} $f_{\mathrm{opt}}(\mathbf{x}) < x_m-b$ (or $f_{\mathrm{opt}}(\mathbf{x}) > x_m+b$).
	
	Let $\delta = (x_m-b)-f_{\mathrm{opt}}(\mathbf{x})$. For the same reason as in the proof of Theorem \ref{tiesc}, for every agent $i$, it holds that $\text{cost}(x_i,x_m-b) \leq \text{cost}(x_i,f_{\mathrm{opt}}(\mathbf{x}))+\delta$ and also that $\text{cost}(x_i,x_m+b) \leq \text{cost}(x_i,f_{\mathrm{opt}}(\mathbf{x}))+(2b+\delta)$.
	
	The maximum cost of Mechanism \ref{leftrightmedian} is
	\begin{eqnarray*}
		MC_{\mathrm{\ref{leftrightmedian}}}(\mathbf{x}) &=& \frac{1}{2}\max_{i \in N}\text{cost}(x_i,x_m-b) + \frac{1}{2}\max_{i \in N}\text{cost}(x_i,x_m+b)\\
		&\leq& \frac{1}{2}\left(\max_{i \in N}\text{cost}(x_i,f_{\mathrm{opt}}(\mathbf{x}))+\delta \right) + \frac{1}{2}\left(\max_{i \in N}\text{cost}(x_i,f_{\mathrm{opt}}(\mathbf{x}))+(2b+\delta)\right)\\
		&=& MC_{\mathrm{opt}}(\mathbf{x}) + b + \delta \leq 2MC_{\mathrm{opt}}(\mathbf{x}) +b-c
	\end{eqnarray*}
	since $MC_{\mathrm{opt}}(\mathbf{x}) \geq c+\delta$. The approximation is at most $2+\frac{b-c}{MC_{\mathrm{opt}}(\mathbf{x})}$, which is at most $1+\frac{b}{c}$ if $b\geq c$ (since $MC_{\mathrm{opt}}(\mathbf{x}) \geq c$) and at most $2$ if $b<c$ (since $MC_{\mathrm{opt}}(\mathbf{x}) > 0$).
	
	\textbf{Case 2:} $x_m-c \leq f_{\mathrm{opt}}(\mathbf{x}) \leq x_m+b$.
	
	Now, let $\delta=f_{\mathrm{opt}}(\mathbf{x})-(x_m-b)$. Again, it holds that $\text{cost}(x_i,x_m-b) \leq \text{cost}(x_i,f_{\mathrm{opt}}(\mathbf{x}))+\delta$ and also that $\text{cost}(x_i,x_m+b) \leq \text{cost}(x_i,f_{\mathrm{opt}}(\mathbf{x}))+(2b-\delta)$.
	
	The maximum cost of Mechanism \ref{leftrightmedian} is
	\begin{eqnarray*}
		MC_{\mathrm{\ref{leftrightmedian}}}(\mathbf{x}) &=& \frac{1}{2}\max_{i \in N}\text{cost}(x_i,x_m-b) + \frac{1}{2}\max_{i \in N}\text{cost}(x_i,x_m+b)\\
		&\leq& \frac{1}{2}\left(\max_{i \in N}\text{cost}(x_i,f_{\mathrm{opt}}(\mathbf{x}))+\delta \right) + \frac{1}{2}\left(\max_{i \in N}\text{cost}(x_i,f_{\mathrm{opt}}(\mathbf{x}))+(2b-\delta)\right)\\
		&=& MC_{\mathrm{opt}}(\mathbf{x}) + b 
	\end{eqnarray*}
	and the approximation ratio is at most $1+\frac{b}{c}$ (since $MC_{\mathrm{opt}}(\mathbf{x}) \geq c$).\\
	
	For the matching lower bound, consider an instance $\mathbf{x}=\langle x_1,...,x_n\rangle$ on which $x_1+b < x_2-b$ and $x_i=x_2$ for all $i \notin \{1,2\}$. It is $f_{\mathrm{opt}}(\mathbf{x})= \frac{x_1+x_2}{2}$, i.e. the middle of the interval between $x_1$ and $x_2$, whereas Mechanism \ref{leftrightmedian} selects equiprobably among $x_2-b$ and $x_2+b$. Let $d = f_{\mathrm{opt}}(\mathbf{x}) - (x_1+b)$. Then $MC_{\mathrm{opt}}(\mathbf{x}) = c+d$, whereas $MC_{\mathrm{\ref{leftrightmedian}}}(\mathbf{x}) = c + \frac{1}{2}2d + \frac{1}{2}(2d+2b)= b+c+2d$. The approximation ratio is $1+\frac{b+d}{c+d}$ which is $1+\frac{b}{c}$ as $d$ goes to $0$ and $2$ as $d$ goes to infinity.
	\hfill $\square$
\end{proof}


Next, we provide a lower bound on the approximation ratio of any truthful-in-expectation mechanism.

\begin{theorem}\label{randlowermax}
	Any truthful-in-expectation mechanism has an approximation ratio of at least $\frac{3}{2}$ for the maximum cost.
\end{theorem}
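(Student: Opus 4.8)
The plan is to lower‑bound the ratio $\rho$ of an arbitrary truthful‑in‑expectation mechanism $f$ by a quantity that tends to $3/2$ along a family of instances, and then let the family run to its limit. Fix a large parameter $D>2b$ and set $E:=D-2b>0$. I would take the base profile $I=\langle 0,D\rangle$ (for $n>2$, place the remaining $n-2$ agents at $0$, which affects nothing below). A direct computation of the piecewise‑linear costs gives $MC_{\mathrm{opt}}(I)=c+E/2$, attained at the midpoint $D/2$, and moreover $\mathrm{cost}(y,0)+\mathrm{cost}(y,D)\ge 2c+E$ for every $y\in\mathbb R$, with equality on $[b,D-b]$. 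Hence, writing $\mathcal D=f(I)$, $p_L:=\mathbb E_{\mathcal D}[\mathrm{cost}(y,0)]$ and $p_R:=\mathbb E_{\mathcal D}[\mathrm{cost}(y,D)]$, linearity of expectation yields
\[
p_L+p_R\ \ge\ 2c+E ,
\]
without using the approximation guarantee yet.

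Next I would bound $p_L$ from above via truthfulness. Consider the agent at $0$ misreporting $2b-D$, giving the profile $I_1=\langle 2b-D,D\rangle$. One checks that $I_1$ has a unique optimum at $y=b$ — which is a peak of the true location $0$ — with $MC_{\mathrm{opt}}(I_1)=c+E$. The key pointwise claim is
\[
\mathrm{cost}(y,0)\ \le\ \max\bigl(\mathrm{cost}(y,2b-D),\,\mathrm{cost}(y,D)\bigr)-E\qquad\text{for all }y\in\mathbb R,
\]
i.e. the true cost of the agent at $0$ always lies at least $E$ below the maximum cost of profile $I_1$ at the same point. Granting this, $\mathbb E_{f(I_1)}[\mathrm{cost}(y,0)]\le \mathbb E_{f(I_1)}[\text{max‑cost of }I_1]-E\le \rho\,MC_{\mathrm{opt}}(I_1)-E=\rho(c+E)-E$, and truthfulness of the agent at $0$ forces $p_L\le \rho(c+E)-E$. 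The mirror deviation — the agent at $D$ misreporting $2D-2b$, yielding $\langle 0,2D-2b\rangle$ with optimum $D-b$ — gives $p_R\le\rho(c+E)-E$ by the reflected version of the same claim.

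Combining the three displayed inequalities, $2c+E\le p_L+p_R\le 2\bigl(\rho(c+E)-E\bigr)$, which rearranges to $\rho\ge \dfrac{2c+3E}{2c+2E}$. This holds for every admissible $D$ (hence every $E>0$), and $\dfrac{2c+3E}{2c+2E}\to\dfrac32$ as $E\to\infty$; therefore $\rho\ge 3/2$, as claimed. Note that the instance $I$ itself is never used with the approximation guarantee — only $I_1$ and its mirror are — and the argument needs no anonymity or position‑invariance assumption.

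The only genuinely technical step is the pointwise claim, and I expect it to be the main obstacle. Its proof is a case analysis: order the breakpoints (the two peaks of each of $2b-D$, $0$, $D$), and on each resulting interval identify which peak is active for each of $\mathrm{cost}(\cdot,0)$, $\mathrm{cost}(\cdot,2b-D)$, $\mathrm{cost}(\cdot,D)$. On the long "middle" intervals one finds $\mathrm{cost}(y,0)$ equals the active right‑hand branch of the maximum minus exactly $E$ (so equality holds there), while on the outer intervals the slack is strictly larger than $E$; this is routine but has to be carried out carefully because of the number of pieces. A secondary subtlety worth flagging is that $\rho\,MC_{\mathrm{opt}}(I_1)$ bounds only $\mathbb E[\text{max‑cost}]$ on $I_1$, not the support of $f(I_1)$ — the pointwise inequality is precisely what lets us convert the controlled \emph{expected maximum cost} of $I_1$ into a bound on the \emph{expected cost of the (fictitious) agent at} $0$.
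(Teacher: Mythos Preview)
Your argument is correct and reaches the same $3/2$ bound, but by a genuinely different route from the paper's. The paper proceeds asymmetrically: on the base instance with gap $\lambda$ it shows (Lemma~\ref{lemmalambda}) that \emph{some} agent, say the one at $x_2$, has expected cost at least $c+\lambda/2$; truthfulness then forces that on the instance $\langle x_1,x_2+\lambda\rangle$ (gap $2\lambda$) the expected distance of the output from the new optimum $x_2-b$ is at least $\lambda/2$, so by Lemma~\ref{lemma3lambda} the expected maximum cost there is at least $c+3\lambda/2$ against an optimum of $c+\lambda$. You instead argue symmetrically and self\nobreakdash-referentially: you keep the base instance $I$, lower\nobreakdash-bound $p_L+p_R\ge 2c+E$, and upper\nobreakdash-bound each of $p_L,p_R$ by $\rho(c+E)-E$ using the assumed ratio $\rho$ on the two deviated instances together with your pointwise inequality; combining gives $\rho\ge(2c+3E)/(2c+2E)$.

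Both proofs rest on the same two ingredients---the sum\nobreakdash-of\nobreakdash-costs bound and the identity ``max cost $=c+\text{half-gap}+|y-\text{optimum}|$'' for a wide two\nobreakdash-agent instance---so they are close cousins. Your version trades the WLOG step and the paper's Lemma~\ref{lemma3lambda} for the pointwise domination claim and the assumption of a finite $\rho$ (harmless, since $\rho=\infty$ is trivially $\ge 3/2$). The paper's version has the advantage of exhibiting an explicit bad instance; yours is pleasantly symmetric. Incidentally, your ``main obstacle'' is cleaner than you fear: since the right peak of $0$ is exactly the optimum $b$ of $I_1$, one has $\mathrm{cost}(y,0)\le c+|y-b|$ for all $y$ (with equality for $y\ge 0$), while the maximum cost of $I_1$ equals $c+E+|y-b|$ for all $y$; the claimed inequality $\mathrm{cost}(y,0)\le M_{I_1}(y)-E$ follows in one line, no lengthy case analysis needed.
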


First, we state a couple of lemmas which are in essence very similar to those used in the proof of the single-peaked preferences case in \cite{procaccia2009approximate}. Let $\mathbf{x}=\langle x_1,x_2 \rangle$ be an instance such that $x_1+b<x_2-b$ and let $\lambda = (x_2-b)-(x_1+b)$. Let $f$ be a truthful-in-expectation mechanism and let $\mathcal{D}$ be the distribution that $y=f(\mathbf{x})$ follows on instance $\mathbf{x}$.

\begin{lemma} \label{lemmalambda}
	On instance $\mathbf{x}$, at least one of $\mathbb{E}_{y \sim \mathcal D}\left[\text{cost}(x_1,y)\right]\geq c+\frac{\lambda}{2}$ and \\ $\mathbb{E}_{y \sim \mathcal D}\left[\text{cost}(x_2,y)\right]\geq c+\frac{\lambda}{2}$ holds. 
\end{lemma}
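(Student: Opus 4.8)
The plan is to reduce the statement to a pointwise lower bound on the \emph{sum} of the two agents' costs and then apply linearity of expectation together with a simple averaging argument. The key observation is that the two ``inner'' peaks $x_1+b$ and $x_2-b$ here play exactly the role that the agents' locations play in the single-peaked analysis of \cite{procaccia2009approximate}: by hypothesis $x_1+b<x_2-b$, and on the whole interval $[x_1+b,x_2-b]$ each agent lies on the ``far side'' of her own inner peak (indeed $x_1<x_1+b$ and $x_2-b<x_2$), so the relevant linear pieces of the two cost functions decrease and increase in $y$ at matching unit rates.

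First I would prove that for every $y\in\mathbb{R}$,
\[
\text{cost}(x_1,y)+\text{cost}(x_2,y)\ \geq\ 2c+\lambda,
\]
by a three-case split on the position of $y$. If $y\in[x_1+b,x_2-b]$, then $y>x_1$ gives $\text{cost}(x_1,y)=c+\bigl(y-(x_1+b)\bigr)$ and $y<x_2$ gives $\text{cost}(x_2,y)=c+\bigl((x_2-b)-y\bigr)$; summing, the $y$-terms cancel and we obtain exactly $2c+\bigl((x_2-b)-(x_1+b)\bigr)=2c+\lambda$. If $y<x_1+b$, then $y<x_2-b<x_2$, so $\text{cost}(x_2,y)=c+\bigl((x_2-b)-y\bigr)>c+\bigl((x_2-b)-(x_1+b)\bigr)=c+\lambda$, while $\text{cost}(x_1,y)\geq c$ trivially; hence the sum exceeds $2c+\lambda$. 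The case $y>x_2-b$ is symmetric, with the roles of the two agents exchanged. This establishes the pointwise bound for all $y$.

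Finally, taking expectations over $y\sim\mathcal D$ and using linearity yields $\mathbb{E}[\text{cost}(x_1,y)]+\mathbb{E}[\text{cost}(x_2,y)]\geq 2c+\lambda$, so at least one of the two expectations is at least $c+\lambda/2$, which is the claim. I do not expect any substantive obstacle beyond carefully tracking which branch of the piecewise-linear cost function applies to each agent in each case; note in particular that truthfulness of $f$ is not actually used in this lemma and will only enter when the lemma is invoked in the proof of Theorem \ref{randlowermax}.
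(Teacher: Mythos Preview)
Your proposal is correct and follows exactly the same approach as the paper: establish the pointwise inequality $\text{cost}(x_1,y)+\text{cost}(x_2,y)\geq 2c+\lambda$, take expectations via linearity, and average. The only difference is that the paper declares the pointwise bound ``obvious'' without justification, whereas you supply the explicit three-case split; your remark that truthfulness plays no role here is also accurate.
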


\begin{proof}
	Obviously, $\text{cost}(x_1,y) + \text{cost}(x_2,y) \geq 2c+\lambda$ for any choice of $y$, hence $\mathbb{E}_{y \sim \mathcal D}\left[\sum_{i=1}^{2}\text{cost}(x_i,y)\right] = \sum_{i=1}^{2}\mathbb{E}_{y \sim \mathcal D}[\text{cost}(x_i,y)] \geq 2c+\lambda$. Therefore, it must be that $\mathbb{E}_{y \sim \mathcal D}[\text{cost}(x_i,y)] \geq c+\frac{\lambda}{2}$ for at least one of $i=1$ or $i=2$. 
\end{proof}

\begin{lemma}\label{lemma3lambda}
	Let $f_{\mathrm{opt}}(\mathbf{x})$ be the outcome of the optimal mechanism on instance $\mathbf{x}$. If
	$\mathbb{E}_{y \sim \mathcal  D}[|y-f_{\mathrm{opt}}(\mathbf{x})|] = \Delta$,
	then the  maximum cost of the mechanism on this instance is $\mathbb{E}[MC_f(\mathbf{x})] = c+ \frac{\lambda}{2} + \Delta$.
\end{lemma}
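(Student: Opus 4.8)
The plan is to reduce the statement to a clean pointwise identity for the maximum cost as a function of the output location, and then take expectations by linearity. Concretely, I will first pin down $f_{\mathrm{opt}}(\mathbf{x})$ and $MC_{\mathrm{opt}}(\mathbf{x})$, then show that for \emph{every} $y\in\mathbb{R}$ (not just optimal ones) the maximum cost equals $c+\tfrac{\lambda}{2}+|y-f_{\mathrm{opt}}(\mathbf{x})|$, and finally integrate this identity against $\mathcal{D}$.

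First I would analyze the two cost curves. Since $x_1+b<x_2-b$, the four peaks are ordered $x_1-b<x_1+b<x_2-b<x_2+b$. For $y$ in the middle interval $[x_1+b,\,x_2-b]$ we have $y>x_1$ and $y<x_2$, so $\mathrm{cost}(y,x_1)=c+(y-(x_1+b))$ and $\mathrm{cost}(y,x_2)=c+((x_2-b)-y)$; these sum to the constant $2c+\lambda$, so their maximum is minimized exactly where they are equal, i.e. at $y=\tfrac{x_1+x_2}{2}$, where each equals $c+\tfrac{\lambda}{2}$. For $y<x_1+b$ agent $2$'s cost is at least $c+\lambda$, and for $y>x_2-b$ agent $1$'s cost is at least $c+\lambda$, so nothing outside the middle interval can do as well. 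Hence $f_{\mathrm{opt}}(\mathbf{x})=\tfrac{x_1+x_2}{2}$ and $MC_{\mathrm{opt}}(\mathbf{x})=c+\tfrac{\lambda}{2}$.

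Next I would establish the key identity: for all $y\in\mathbb{R}$,
\[
MC_y(\mathbf{x}) \;=\; c+\frac{\lambda}{2}+\bigl|\,y-f_{\mathrm{opt}}(\mathbf{x})\,\bigr|.
\]
By the symmetry of the instance it suffices to treat $y\ge f_{\mathrm{opt}}(\mathbf{x})$ and argue that agent $1$'s cost is the maximum. Writing $z=f_{\mathrm{opt}}(\mathbf{x})$, note $z-(x_1+b)=\tfrac{\lambda}{2}$, so as long as $y\ge z$ we have $y>x_1$ and thus $\mathrm{cost}(y,x_1)=c+(y-(x_1+b))=c+\tfrac{\lambda}{2}+(y-z)$, which is exactly the claimed value. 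It remains to check $\mathrm{cost}(y,x_2)\le\mathrm{cost}(y,x_1)$ for $y\ge z$; this is a short three-subcase check (\,$z\le y\le x_2-b$, $x_2-b<y\le x_2+b$, $y>x_2+b$\,), in each of which one computes $\mathrm{cost}(y,x_2)$ explicitly and finds it is smaller than $\mathrm{cost}(y,x_1)$ by at least $\lambda$. This subcase bookkeeping, ensuring agent $1$ really dominates even when $y$ is far to the right (where agent $2$'s cost also grows linearly), is the only delicate part; everything else is routine.

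Finally, taking $y\sim\mathcal{D}$ and using linearity of expectation together with the identity above,
\[
\mathbb{E}[MC_f(\mathbf{x})] \;=\; \mathbb{E}_{y\sim\mathcal{D}}\!\left[c+\frac{\lambda}{2}+\bigl|y-f_{\mathrm{opt}}(\mathbf{x})\bigr|\right] \;=\; c+\frac{\lambda}{2}+\mathbb{E}_{y\sim\mathcal{D}}\bigl[|y-f_{\mathrm{opt}}(\mathbf{x})|\bigr] \;=\; c+\frac{\lambda}{2}+\Delta,
\]
which is the claim.
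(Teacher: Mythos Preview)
Your proposal is correct and follows exactly the paper's approach: the paper's proof simply asserts the pointwise identity $MC_y(\mathbf{x})=c+\tfrac{\lambda}{2}+|y-f_{\mathrm{opt}}(\mathbf{x})|$ for every $y$ and then takes expectations, without spelling out the case analysis you provide. One small slip to fix when you write it up: in the subcase $z\le y\le x_2-b$ the gap $\mathrm{cost}(y,x_1)-\mathrm{cost}(y,x_2)$ equals $2(y-z)$, which is only $\ge 0$ (not ``at least $\lambda$''), but nonnegativity is all you need there.
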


\begin{proof}
	Since $\forall y$, $MC_f(\mathbf{x}) = c + |y-f_{\mathrm{opt}}(\mathbf{x})| + \lambda/2$, it holds
	\begin{align*}
	\mathbb{E}_{y \sim \mathcal  D}[MC_f(\mathbf{x})] = \mathbb{E}_{y \sim \mathcal  D}[c+\frac{\lambda}{2}+|y-f_{\mathrm{opt}}(\mathbf{x})|]=c+\frac{\lambda}{2}+\mathbb{E}_{y \sim \mathcal D}[|y-f_{\mathrm{opt}}(\mathbf{x})|] = c+\frac{\lambda}{2}+\Delta.
	\end{align*}
	\hfill $\square$
\end{proof}

We can now prove the theorem.

\begin{proof}
	Consider an instance with two agents on which $x_1+b < x_2-b$ and $(x_2-b)-(x_1+b)=\lambda$. It holds that $f_{\mathrm{opt}}(\mathbf{x})=\frac{x_1+x_2}{2}$. Assume there is a truthful-in-expectation mechanism $M$ on which $y=f(x_1,x_2)$ follows a distribution $\mathcal D$ on this instance. According to Lemma~\ref{lemmalambda},  at least one of $\mathbb{E}_{y \sim \mathcal  D}[\text{cost}(y,x_1)] \ge c+\lambda/2$ and $\mathbb{E}_{y \sim \mathcal  D}[\text{cost}(y,x_2)] \ge c+\lambda/2$ holds. W.l.o.g., assume the second inequality is true (if the first inequality is true then we can make a symmetric argument with agent $x_1$ deviating).
	
	Next, consider the instance $\mathbf{x'}=\langle x'_1,x'_2 \rangle$ with $x'_1 = x_1$ and $x'_2 = x_2+\lambda$. Let $f_{\mathrm{opt}}(\mathbf{x'}) = (x'_1 + x'_2)/2 = x_2-b$. Let $\mathcal D'$ be the distribution that $y'$ follows on instance $\mathbf{x'}$. By strategyproofness,
	$\mathbb{E}_{y' \sim \mathcal D'} \left[\text{cost}(y',x_2)\right] \ge \mathbb{E}_{y \sim \mathcal  D}[\text{cost}(y,x_2)] \geq c + \lambda/2$, since $x'_2$ could be a deviation of agent 2 on instance $\mathbf{x}$. This implies $\mathbb{E}_{y' \sim \mathcal D'} \left[ |y'-(x_2-b)| \right] \ge \frac{\lambda}{2}$. To see this, assume otherwise for contradiction, we would have
	\begin{eqnarray*}
		\mathbb{E}_{y' \sim \mathcal D'} \left[\text{cost}(y',x_2)\right] &=& \mathbb{E}_{y' \sim \mathcal D'} \left[c+\min\{|y'-(x_2-b)|, |y'-(x_2+b)| \}\right] \\
		&\le& c+\min \left\{ \mathbb{E}_{y' \sim \mathcal D'} \left[ |y'-(x_2-b)| \right], \mathbb{E}_{y' \sim \mathcal D'} \left[ |y'-(x_2+b)| \right] \right\} \\
		&<& c+\frac{\lambda}{2} .
	\end{eqnarray*}
	Equivalently, we have $\mathbb{E}_{y' \sim \mathcal D'} \left[ |y'-f_\mathrm{opt}(\mathbf{x'})| \right] \ge \frac{\lambda}{2}$. Hence, by applying Lemma~\ref{lemma3lambda}, we know that the maximum cost of the mechanism on the second instance is $\mathbb{E}_{y' \sim \mathcal D'}\left[ MC_M(\mathbf{x'}) \right]\ge c+\lambda+\frac{\lambda}{2}=c+\frac{3\lambda}{2}$. Since the optimal mechanism locates the facility on $f_{\mathrm{opt}}(\mathbf{x'})$, its maximum cost is $c+\lambda$. Therefore, the approximation ratio is at least $\frac{c+3\lambda/2}{c+\lambda}$. As $\lambda$ grows to infinity, the approximation ratio approaches $\frac{3}{2}$. To generalize the proof to more than two agents, place every other agent on $f_{\mathrm{opt}}(\mathbf{x})+b$ on instance $\mathbf{x}$. Since $\lambda$ is large enough, the maximum cost is still calculated with respect to $x_2$ and all the arguments still hold.
	\hfill $\square$
\end{proof}

\section{Deterministic Mechanisms}\label{DetMech}

We now turn our attention to deterministic mechanisms. We will start by stating and proving the following lemma, which will be very useful throughout the paper. Variations of the instances used here will appear in several of our proofs.\\
\begin{table}[t]
	\centering
	\caption{Summary of our results. The lower bounds for deterministic mechanisms hold for anonymous and position invariant strategyproof mechanisms. For $\mathbf{(^*)}$, an additional lower bound of $2$ holds under no conditions. Fields indicated by \textbf{(-)} are not proven yet. For the maximum cost, the approximation ratios are actually $\max\{1+2b/c,3\}$ and $\max\{1+b/c,2\}$ respectively. The results for single-peaked preferences are also noted for comparison.}
	\label{resultstable}
	\begin{center}
		\begin{tabular}{ |c  |c  c|  c  c|   }
			\hline
			& \multicolumn{2}{c}{\textbf{Double-peaked}} & \multicolumn{2}{|c|}{\textbf{Single-peaked}}  \\[3pt] 
			& Ratio & Lower & Ratio & Lower \\[2pt] \hline
			\textbf{Social cost}  & &  &  &   \\[2pt] 
			Deterministic & $\Theta(n)$ & $1+ \frac{b}{c}$  & $1$ & $1$ \\[2pt] 
			Randomized & $1+\frac{b}{c}$ & -  & $1$ & $1$ \\[2pt]
			\hline
			\textbf{Maximum cost} &  &  &  &    \\[3pt] 
			Deterministic$^*$ & $1+\frac{2b}{c}$ & $1+ \frac{2b}{c}$  & $2$ & $2$  \\[2pt]
			Randomized & $1+\frac{b}{c}$ & $3/2$ & $3/2$ & $3/2$ \\[2pt]
			\hline
		\end{tabular}
	\end{center}
\end{table}

When $n=2$, we call the following instance\footnote{We note that this is actually a family of different instances, for each possible value of $x_1$ on the real line. Any one of those instances can be used for our purposes and hence we use the term loosely.} the {\em primary instance}.

{\bf Primary instance:} $\mathbf x=\langle x_1,x_2 \rangle$ with $x_1+2b+\epsilon = x_2-b$, where $\epsilon$ is an arbitrarily small positive quantity.

\begin{lemma}\label{Pinstance}
On the primary instance, there is no anonymous, position invariant and strategyproof mechanism such that  $f(\mathbf x) \in [x_1+b,x_2-b]$.
\end{lemma}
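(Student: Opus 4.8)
The plan is to argue by contradiction. Suppose $f$ is anonymous, position invariant and strategyproof, and that on the primary instance $\mathbf x=\langle x_1,x_2 \rangle$ (with $x_2-x_1=3b+\epsilon$) it outputs $z:=f(\mathbf x)\in[x_1+b,x_2-b]$. By position invariance I first normalise $x_1=0$, so that $x_2=3b+\epsilon$ and $z\in[b,2b+\epsilon]$; this loses nothing, since $f(\mathbf x)\in[x_1+b,x_2-b]$ holds for the primary instance based at $x_1$ exactly when it holds for the one based at $0$.

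The engine of the argument is a \emph{peak-pinning} observation exploiting that $\mathrm{cost}(y,x_i)$ attains its global minimum $c$ precisely when $y\in\{x_i-b,x_i+b\}$: in a two-agent profile, if some agent whose true location is $p$ has a report under which $f$ outputs a point of $\{p-b,p+b\}$, then by strategyproofness her cost under truthful reporting is also $c$, so $f$ of the truthful profile lies in $\{p-b,p+b\}$ too. Starting from the seed $f(\langle 0,3b+\epsilon\rangle)=z$ and letting agent $2$ truly sit at $z+b$ while (mis)reporting $3b+\epsilon$, this yields $f(\langle 0,z+b\rangle)\in\{z,z+2b\}$, and symmetrically (for $z>b$) $f(\langle 0,z-b\rangle),\,f(\langle z-b,3b+\epsilon\rangle)\in\{z-2b,z\}$; each pinned fact can in turn serve as a new seed, and combined with position invariance (to translate any pinned instance to a position-equivalent one) this produces a family of instances on which $f$ is determined up to a two-element choice.

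I would then run a case analysis on these choices, in each branch applying strategyproofness once more together with the explicit cost formulas to reach a contradiction. For instance, in the branch $f(\langle 0,z-b\rangle)=z-2b$ one further pins (via one more peak-pinning step and a translation) that $f(\langle 0,3b-z\rangle)\in\{-b,b\}$, and comparing, through strategyproofness, the cost this gives agent $2$'s true type (peaks $\{2b-z,4b-z\}$) against what she can secure by reporting $z-b$ forces an impossibility on a range of $z$, the rest being closed by iterating the pinning. The boundary values $z=x_1+b$ and $z=x_2-b$, where one of the two basic pinning moves degenerates, need a dedicated argument: peak-pinning off $f(\langle 0,3b+\epsilon\rangle)=x_1+b$ forces $f(\langle 0,0\rangle)\in\{-b,b\}$ and $f(\langle 0,2b\rangle)=b$, after which a further deviation of agent $2$ near her true peaks contradicts strategyproofness; the right endpoint value $z=x_2-b$ then reduces to the left one via the reflection $\tilde f(\mathbf x):=-f(-\mathbf x)$ — which is again anonymous, position invariant and strategyproof — since a mechanism outputting $x_2-b$ on a primary instance induces, through $\tilde f$, one outputting the left endpoint on a reflected primary instance, already excluded.

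The main obstacle is closing this case analysis cleanly: peak-pinning only ever narrows $f$ to two values, and a single use of strategyproofness after one translation tends merely to propagate constraints rather than contradict them, so one must chain several pinnings and translations and choose the deviations so that a genuine impossibility (of the shape "$b\le-\epsilon$" or "some agent strictly gains") actually appears. Keeping track of which branch of the piecewise cost function is active at every intermediate instance, and making sure the degenerate endpoint cases are genuinely covered, is where the care lies.
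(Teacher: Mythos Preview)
Your strategy---peak-pinning combined with position invariance and a case split between interior and endpoint values of $z$---is exactly the paper's, and the peak-pinning observation you isolate is the workhorse of their argument too. However, your plan is more tangled than necessary, and a couple of the steps you sketch are either unjustified or avoidable.

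For the interior case $z\in(x_1+b,x_2-b)$, you do pin the right instance: with $\delta_1:=z-(x_1+b)$, your $f(\langle z-b,\,3b+\epsilon\rangle)\in\{z-2b,z\}$ is precisely the paper's Instance~I (agent~1 moved to $x_1+\delta_1$). But you then propose chaining further pinnings to close the branches, and concede that this is the obstacle. The paper closes this case in \emph{one} further step: on Instance~I, let agent~2 deviate to $x_2+\delta_1$; the resulting profile is the primary instance shifted by $\delta_1$ (their Instance~II), so by position invariance the facility lands at $z+\delta_1\le x_2-b$. Since this is strictly closer to agent~2's left peak than either of $\{z-2b,z\}$, strategyproofness is violated---no branching, no chain. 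The symmetric half ($z$ to the right of the midpoint) is handled by the same move with roles reversed.

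For the endpoint $z=x_1+b$, your sketch asserts $f(\langle 0,2b\rangle)=b$, but peak-pinning alone only gives $f(\langle 0,2b\rangle)\in\{b,3b\}$; you have not explained why $3b$ is excluded, and ``a further deviation of agent~2 near her true peaks'' is not specific enough to see that it works. The paper instead peak-pins agent~1 at $x_1+2b$ (their Instance~III), obtaining $f\in\{x_1+b,\,x_1+3b\}$, and then closes each of the two values with a single deviation of agent~2: to a $2b$-shift of the primary in one sub-case, and---crucially using anonymity---to a position-equivalent copy of Instance~III itself in the other. Your reflection trick $\tilde f(\mathbf x):=-f(-\mathbf x)$ for the right endpoint is valid and matches the paper's appeal to symmetry.

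In short, the plan is viable, but you are working harder than you need to: each case closes after a single peak-pinning followed by a single well-chosen deviation into a translate of an already-determined instance, rather than an iterated chain of pinnings.
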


\begin{figure}
\centering
\includegraphics[scale=0.4]{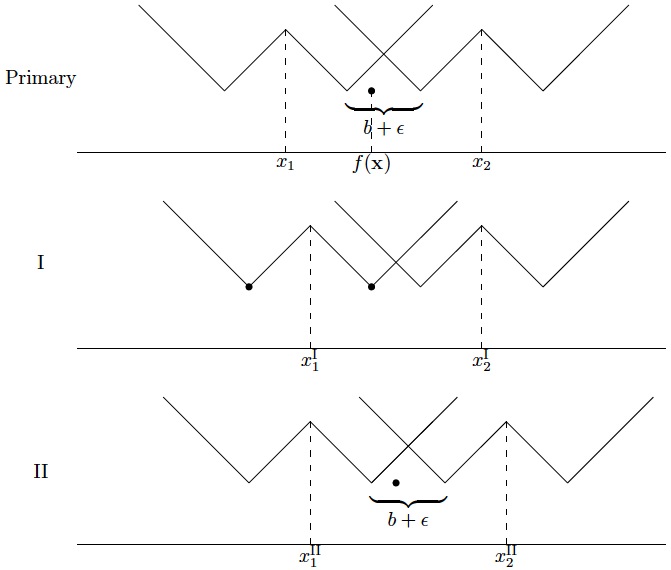}\label{groupstrategyproofcase1}
\caption{Case 1 of Lemma~\ref{Pinstance}\label{Pinstance1}. On instance $\mathrm{I}$, agent $x_2^\mathrm{I}$ can report $x_2^{\mathrm{II}}$ and decrease her cost. Instances $\mathrm{II}$ and the Primary instance are position equivalent.}
\end{figure}

\begin{figure}
	\begin{subfigure}{.5\textwidth}
\centering
\includegraphics[scale=0.4]{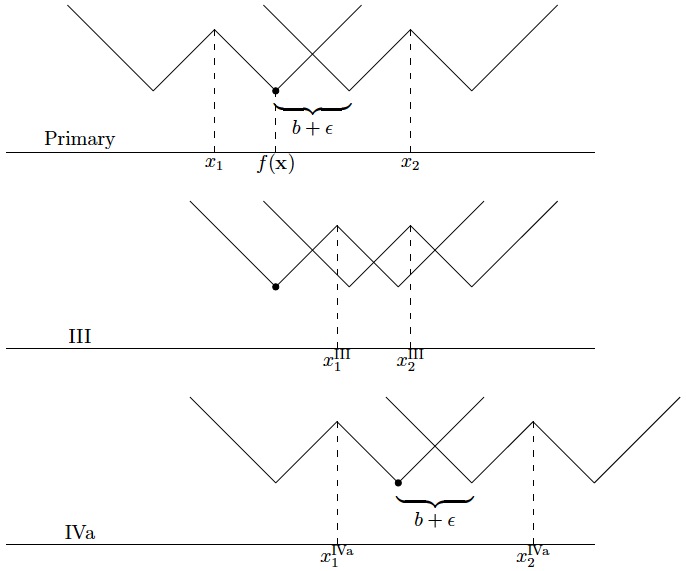}\label{Pinstance2}
\end{subfigure}
	\begin{subfigure}{.5\textwidth}
		\centering
		\includegraphics[scale=0.4]{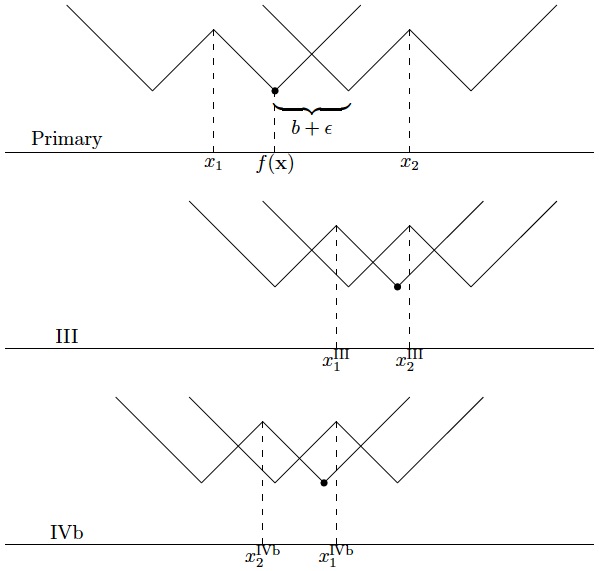}\label{Pinstance2}
	\end{subfigure}
\caption{Case 2(a) and 2(b) of Lemma~\ref{Pinstance}\label{Pinstance2}. If $f(x_1^{\mathrm{III}},x_2^{\mathrm{III}})=x_1^{\mathrm{III}}-b$ on instance $\mathrm{III}$, agent $x_2^{\mathrm{III}}$ can misreport $x_2^{\mathrm{IVa}}$ and decrease her cost, since instances $\mathrm{IVa}$ and the Primary instance are position equivalent. If $f(x_1^{\mathrm{III}},x_2^{\mathrm{III}})=x_1^{\mathrm{III}}+b$ on instance $\mathrm{III}$, then on the Primary instance, agent $x_2$ can misreport $x_2^{\mathrm{IVb}}$ and decrease her cost. By anonymity, instances $\mathrm{III}$ and $\mathrm{IVb}$ are position equivalent.}
\end{figure}

\begin{proof}
For contradiction, suppose there exists an anonymous,  position invariant   strategyproof mechanism $M$ that ouputs $f(\mathbf x)\in [x_1+b,x_2-b]$. Let's denote $\delta_1 = f(\mathbf{x}) - (x_1+b)$, $\delta_2=(x_2-b)-f(\mathbf{x})$. Throughout the proof we start with the primary instance and construct some other instances to prove the lemma. There are 2 cases to be considered.\\

\textbf{Case 1:} $0<\delta_1 \le \frac{1}{2} (b+\epsilon)$, i.e., $x_1+b<f(\mathbf{x}) \le \frac{x_1+x_2}{2}$ (symmetrically, for the case $0<\delta_2 \le \frac{1}{2} (b+\epsilon)$). See Figure~\ref{Pinstance1}.
\begin{itemize}
\item  Instance  $\mathrm{I}$ :  $\mathbf{x^{\mathrm{I}}}=\langle x^{\mathrm{I}}_1,x^{\mathrm{I}}_2 \rangle$, where $x^{\mathrm{I}}_1=x_1+\delta_1,  x^{\mathrm{I}}_2=x_2$.
\item  Instance $\mathrm{II}$ :  $\mathbf{x^{\mathrm{II}}}=\langle x^{\mathrm{II}}_1,x^{\mathrm{II}}_2 \rangle$, where $x^{\mathrm{II}}_1=x_1+\delta_1,  x^{\mathrm{II}}_2=x_2+\delta_1$.
\end{itemize}

First, we know that on instance  $\mathrm{I}$,  strategyproofness requires that $f(\mathbf {x^{\mathrm{I}}})=x_1^{\mathrm{I}}-b$ or $x_1^{\mathrm{I}}+b$, otherwise agent $x^{\mathrm{I}}_1$ could misreport $x_1$ and move the facility to position $x_1^{\mathrm{I}}+b$, minimizing her cost. Second, since instance $\mathrm{II}$ and the primary instance are position equivalent ($\mathrm{II}$ is a shifted version of the primary instance to the right by $\delta_1$), by position invariance we know $f(\mathbf {x^{\mathrm{II}}})=x_1^{\mathrm{II}}+b+\delta_1=x_1^{\mathrm{I}}+b+\delta_1$. Now let's consider agent $x_2^{\mathrm{I}}$ in instance $\mathrm{I}$. If agent $x_2^{\mathrm{I}}$ misreports $x_2^{\mathrm{II}}$, she can move the facility from $x_1^{\mathrm{I}}-b$ or $x_1^{\mathrm{I}}+b$ to $x_1^{\mathrm{I}}+b+\delta_1 \le x^{\mathrm{I}}_2-b$; in either case the facility is closer to her left peak. Therefore $x_2^{\mathrm{I}}$ can manipulate the mechanism, which violates strategyproofness. Hence $f(\mathbf{x}) \not\in (x_1+b,\frac{x_1+x_2}{2}]$.\\

\textbf{Case 2:} $\delta_1=0$, i.e., $f(\mathbf{x})=x_1+b$ (symmetrically, for the case $\delta_2=0$). See Figure~\ref{Pinstance2}.
\begin{itemize}
\item  Instance  $\mathrm{III}$ :  $\mathbf{x^{\mathrm{III}}}=\langle x^{\mathrm{III}}_1,x^{\mathrm{III}}_2 \rangle$, where $x^{\mathrm{III}}_1=x_1+2b,  x^{\mathrm{III}}_2=x_2$.
\end{itemize}

First, we know that on instance  $\mathrm{III}$,  strategyproofness requires that $f(\mathbf{x^{\mathrm{III}}})=x_1^{\mathrm{III}}-b$ or $x_1^{\mathrm{III}}+b$, otherwise agent $x^{\mathrm{III}}_1$ could misreport $x_1$ and move the facility to position $x_1^{\mathrm{III}}-b$, minimizing her cost. Now, there are two subcases.

\begin{enumerate}[(a)]
\item $f(\mathbf{x^{\mathrm{III}}})=x_{1}^{\mathrm{III}}-b$. See left hand side of Figure~\ref{Pinstance2}, where we have instance $\mathrm{IVa}$: $x_1^{\mathrm{IVa}}=x_1+2b, x_2^{\mathrm{IVa}}=x_2+2b$.


Obviously, instance $\mathrm{IVa}$ is position equivalent to the primary instance, so $f(x_1^{\mathrm{IVa}},x_2^{\mathrm{IVa}})=x_1^{\mathrm{IVa}}+b$. Note that the cost of agent $x_2^{\mathrm{III}}$ is $b+c+\epsilon$. If agent $x_2^{\mathrm{III}}$ misreports $x_2^{\mathrm{IVa}}$, then her cost becomes $b+c-\epsilon$, which is smaller than when reporting truthfully. So $f(\mathbf{x^{\mathrm{III}}})\not=x_1^{\mathrm{III}}-b$ if Mechanism $M$ is strategyproof.\\

\item $f(\mathbf{x^{\mathrm{III}}})=x_1^{\mathrm{III}}+b$. See right hand side of  Figure~\ref{Pinstance2}, where we have instance $\mathrm{IVb}$: $x_1^{\mathrm{IVb}}=x_1^{\mathrm{III}}, x_2^{\mathrm{IVb}}=x_1^{\mathrm{III}}-(b+\epsilon)$.


Obviously, instance $\mathrm{IVb}$ is position equivalent to instance $\mathrm{III}$, so $f(x_1^{\mathrm{IVb}},x_2^{\mathrm{IVb}})=x_2^{\mathrm{IVb}}+b$ as implied by position invariance and anonymity. Note that on instance $\mathrm{III}$, the cost of agent $x_2^{\mathrm{III}}$ is $b+c-\epsilon$. If agent $x_2^{\mathrm{III}}$ misreports $x_2^{\mathrm{IVb}}$, then her cost becomes $c+2\epsilon$, which is smaller than when reporting truthfully (since $\epsilon$ is arbitrarily small). So $f(\mathbf{x^{\mathrm{III}}})\not=x_1^{\mathrm{III}}+b$ if Mechanism $M$ is strategyproof.
\end{enumerate}

Hence, in order for mechanism $M$ to be strategyproof, it must be $f(\mathbf{x}) \not= x_1+b$.

In all, there is no anonymous, position invariant and strategyproof mechanism such that  $f(\mathbf x) \in [x_1+b,x_2-b]$.
\hfill $\square$
\end{proof}

\subsection{Group strategyproofness}

As we mentioned in the introduction, under the reasonable conditions of position invariance and anonymity, there is no group strategyproof mechanism for the problem. We will prove this claim by using Lemma \ref{Pinstance} and the following lemma.

When $n=2k+1, k\in \mathbb Z^+$, let $P$ be the instance obtained by locating $k+1$ agents on $x_1$ and $k$ agents on $x_2$ on the primary instance. Similarly, let $S$ be the instance obtained by locating $k$ agents on $x_1$ and $k+1$ agents on $x_2$ on the primary instance. Formally, let $\mathbf x^P=\langle x_1^P, \dots,x_n^P \rangle$, where $x_1^P=\dots=x_{k+1}^P$, $x_{k+2}^P=\dots=x_{n}^P$, and $(x_{n}^P-b)-(x_{1}^P+b)=b+\epsilon$ and $\mathbf x^S=\langle x_1^S, \dots,x_n^S \rangle$, where $x_1^S=\dots=x_{k}^S$, $x_{k+1}^S=\dots=x_{n}^S$, and $(x_{n}^S-b)-(x_{1}^S+b)=b+\epsilon$, where $\epsilon$ is the same quantity as in the primary instance.


\begin{lemma}\label{GSPlemma}
When $n=2k+1$, any  position invariant and group strategyproof mechanism that outputs $f(\mathbf{x^P})=x_{1}^{\mathrm P}+b$ on instance $\mathrm P$, must output $f(\mathbf{x})=x_1+b$ on any instance $\mathbf{x}=\langle x_1,...,x_n \rangle$, where $x_1=...=x_{k+1}$, $x_{k+2}=...=x_{n}$ and $(x_{n}-b)-(x_{1}+b)=2b$. Similarly, any  position invariant  and group strategyproof mechanism that outputs $f(\mathbf{x^S})=x_{n}^{\mathrm S}-b$ on instance $\mathrm S$, must output $f(\mathbf{x})=x_n-b$ on any instance  $\mathbf{x}=\langle x_1,...,x_n \rangle$, where $x_1=...=x_{k}$, $x_{k+1}=...=x_{n}$ and $(x_{n}-b)-(x_{1}+b)=2b$.
\end{lemma}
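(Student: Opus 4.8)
The statement is symmetric between the two halves (swap left/right and $P \leftrightarrow S$), so I would prove only the first half: assuming $f(\mathbf{x^P}) = x_1^{\mathrm P}+b$, I must show $f(\mathbf{x}) = x_1+b$ for every instance $\mathbf{x}$ with $k+1$ agents coincident at $x_1$, $k$ agents coincident at $x_n = x_1 + 4b$ (so that $(x_n-b)-(x_1+b)=2b$). The natural strategy is a \emph{group-manipulation argument} connecting the two instances. By position invariance we may, without loss of generality, shift so that the left-block locations agree: put $x_1 = x_1^{\mathrm P}$. The difference between instance $\mathrm P$ and the target instance $\mathbf{x}$ is then exactly that the $k$ right-block agents have moved from $x_n^{\mathrm P} = x_1 + 3b + \epsilon$ to $x_n = x_1 + 4b$, i.e.\ rightward by $b - \epsilon$.

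First I would pin down where $f(\mathbf{x})$ can possibly lie. I claim $f(\mathbf{x}) \in \{x_1 - b,\ x_1+b,\ x_n-b,\ x_n+b\}$, because the $k+1$-agent block at $x_1$ is a majority, and if the facility were anywhere else one of those $k+1$ agents could deviate to pull it onto one of her own peaks — more carefully, a single agent from the $x_1$-block reporting an extreme value changes the median to $x_1$, and then by the structure of strategyproofness (invoking Lemma~\ref{Pinstance}-type reasoning and the fact that this is a two-location profile up to coincidences) the output must be one of $x_1\pm b$; but also the $k$-block at $x_n$ can deviate similarly. So I would argue that $f(\mathbf{x})$ is a peak of one of the two blocks. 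Then I rule out the three unwanted options by group deviations: if $f(\mathbf{x}) = x_1 - b$, the whole right block deviates to recreate instance $\mathrm P$ (a position-equivalent copy of it), where the output is $x_1 + b$; compare the cost of a right-block agent under $x_1-b$ versus under $x_1+b$ — since $x_n - b = x_1 + 3b > x_1 + b > x_1 - b$, the point $x_1+b$ is strictly closer to the right agents' left peak $x_n - b$ than $x_1 - b$ is, so they all strictly benefit, contradicting group strategyproofness. Symmetrically, if $f(\mathbf{x}) = x_n \pm b$, I would have the left block (the $k+1$ agents) deviate to reach a position-equivalent copy of $\mathrm P$, where the output is $x_1 + b$, and check that $x_1 + b$ weakly dominates $x_n \pm b$ for every left-block agent (their peaks are $x_1 \pm b$, so $x_1+b$ is one of their troughs, cost exactly $c$, which is minimal), with the deviation changing nothing for anyone else in a way that hurts them; strict improvement for at least one left agent follows since $x_n - b = x_1 + 3b$ and $x_n + b = x_1 + 5b$ are both far from $\{x_1-b, x_1+b\}$. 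That eliminates every option except $f(\mathbf{x}) = x_1 + b$.

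The point that needs the most care — and where I expect the real work to be — is the first claim, namely that the output on $\mathbf{x}$ must be one of the four block-peaks in the first place; a priori a group-strategyproof mechanism need not behave like a median, and I cannot simply cite Moulin-style characterizations because those are for single-peaked preferences. I would handle this by a single-agent deviation argument: take one agent, say $x_1$, and consider her misreporting to some extreme location $x_1' \to +\infty$; in the resulting instance the configuration is (up to the lone faraway agent) essentially the primary-instance geometry with blocks of sizes $k$ and $k+1$, and strategyproofness for $x_1$ forces $f(\mathbf{x})$ to be a point at which she cannot do better, which combined with the analogous deviation by an $x_n$-agent squeezes $f(\mathbf{x})$ onto the peak set. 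I would also need to double-check the boundary/tie-breaking behaviour of "median" at these coincident profiles and that the various intermediate instances I construct really are position-equivalent to $\mathrm P$ (a clean shift, not something requiring anonymity in a way the hypotheses don't grant) — anonymity is available, so coincident-agent relabelings are free, which is what makes the block-deviation steps go through.
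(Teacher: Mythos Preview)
Your plan has a genuine gap, and it is precisely where you yourself flag the difficulty: the claim that $f(\mathbf{x})$ must lie in the four-point set $\{x_1-b,\,x_1+b,\,x_n-b,\,x_n+b\}$. The sketch you give for this (``send one $x_1$-agent to $+\infty$, observe the remaining profile looks like the primary instance, invoke Lemma~\ref{Pinstance}-type reasoning'') does not go through as stated. You do not know what the mechanism does on the profile with one agent at infinity, and single-agent strategyproofness only tells you that agent $x_1$ cannot improve by that deviation --- it does not force $f(\mathbf{x})$ onto a peak. The appeal to ``median'' reasoning is unwarranted here since nothing in the hypotheses makes the mechanism median-like. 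Without this step, your case analysis never gets off the ground.

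The paper's proof sidesteps this entirely, and the two moves it uses are worth absorbing. First, to show $f(\mathbf{x})\in[x_1+b,\,x_n-b]$, it uses the \emph{grand coalition}: if $f(\mathbf{x})<x_1+b$, then by the onto condition (which position invariance implies) all $n$ agents can jointly misreport to place the facility exactly at $x_1+b$; this is optimal for the left block and strictly better for the right block, contradicting group strategyproofness. No peak-restriction is needed. Second --- and this is the cleaner trick --- to rule out $f(\mathbf{x})\in(x_1+b,\,x_n-b]$, the paper \emph{reverses the direction of the deviation}: rather than deviating from $\mathbf{x}$ to reach $P$, it has the right block of instance $P$ deviate to create $\mathbf{x}$. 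On $P$ the output is $x_1^P+b$ by hypothesis, and any $f(\mathbf{x})\in(x_1+b,\,x_n-b]=(x_1^P+b,\,x_1^P+3b]$ is strictly closer to the $P$-right-block's left peak $x_1^P+2b+\epsilon$ than $x_1^P+b$ is, so they all strictly gain. This handles every point in the half-open interval at once, not just the endpoints $x_n\pm b$.

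A smaller issue: in your step ruling out $f(\mathbf{x})=x_n\pm b$, you write that the left-block deviation to a shifted copy of $P$ yields output $x_1+b$. It does not --- position invariance gives output at (new left-block position)$\,+\,b$, which is $x_1+2b-\epsilon$ if they move to $x_n-3b-\epsilon$. The deviation still improves the left block (cost $c+b-\epsilon$ versus $c+2b$ or $c+4b$), so this is fixable, but as written it is incorrect.
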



\begin{proof}
We prove the first part of the lemma. The proof of the second part is symmetric. Note that the difference between instances $\mathbf x^{\mathrm P}$ and $\mathbf x$ is that the distance between the two groups of agents is $b+\epsilon$ in $\mathbf x^{\mathrm P}$ while it is $2b$ in $\mathbf x$. First, we argue that $f(\mathbf x)\in [x_1+b,x_n-b]$. Indeed, if that was not the case, if $f(\mathbf x)<x_1+b$, by the onto condition implied by position invariance, all agents could jointly misreport some different positions and move the facility to $x_1+b$. This point admits a smaller cost for all agents; specifically the cost of agents $x_1,\dots,x_{k+1}$ is minimized while the cost of agents $x_{k+2},\dots,x_{n}$ is reduced and group strategyproofness is violated. Using a symmetric argument, we conclude that it can't be $f(\mathbf x)>x_n-b$ either.

Secondly, we argue that $f(\mathbf x) \not\in (x_1+b,x_n-b]$. Indeed, assume that was not the case. Then agents $x^{\mathrm P}_{k+2},\dots,x^{\mathrm P}_{n}$ on instance $\mathbf x^{\mathrm P}$ could jointly misreport $x_{k+2},\dots,x_{n}$ and move the facility from $f(\mathbf{x^P})=x_{1}^{\mathrm P}+b$ to $f(\mathbf x)$. Since by assumption $f(\mathbf x) \in (x_1+b,x_n-b]$, the cost of each deviating agent is smaller than her cost before deviating. Group strategyproofness is then violated and hence it must be that $f(\mathbf{x})=x_1+b$. By position invariance, it must be that $f(\mathbf{\bar{x}})=\bar{x}_1+b$ on any instance $\mathbf{\bar{x}}$ which is position equivalent to instance $\mathbf{x}$.
\hfill $\square$
\end{proof}

\begin{theorem}\label{groupstrategyproof}
There is no group strategyproof mechanism that is anonymous and position invariant. \label{gsp}
\end{theorem}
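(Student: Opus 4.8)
The plan is to derive a contradiction by combining Lemma~\ref{Pinstance} with Lemma~\ref{GSPlemma}. Suppose toward a contradiction that $M$ is anonymous, position invariant and group strategyproof. Since group strategyproofness implies strategyproofness (a singleton coalition is a coalition), Lemma~\ref{Pinstance} applies: on the primary instance $\mathbf{x}=\langle x_1,x_2\rangle$ with $x_1+2b+\epsilon=x_2-b$, we must have $f(\mathbf{x})\notin[x_1+b,x_2-b]$. First I would argue that in fact $f(\mathbf{x})$ cannot lie strictly outside $[x_1+b,x_2-b]$ either, using the onto property guaranteed by position invariance: if $f(\mathbf{x})<x_1+b$, both agents could jointly misreport to move the facility to $x_1+b$, which weakly decreases the cost of agent $x_1$ (to its minimum $c$) and strictly decreases the cost of agent $x_2$ since $x_1+b<x_2-b$; symmetrically if $f(\mathbf{x})>x_2-b$. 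Hence with $n=2$ we already get a contradiction, and we may assume $n=2k+1$ is odd (the even case $n=2k$ can be handled by an analogous family of instances, or by noting the two-agent contradiction embeds after padding with coincident agents on the optimal point — I would spell out whichever is cleanest given the tie-breaking convention of Mechanism~\ref{leftrightmedian}).

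For odd $n=2k+1$, consider the instances $\mathrm{P}$ and $\mathrm{S}$ defined before Lemma~\ref{GSPlemma}. On instance $\mathrm{P}$, by the same onto-plus-group-strategyproofness argument as above, $f(\mathbf{x^P})$ must lie in $[x_1^{\mathrm P}+b,\,x_n^{\mathrm P}-b]$, and moreover (again because the majority group of $k+1$ agents sits at $x_1^{\mathrm P}$ and the minority of $k$ at $x_n^{\mathrm P}$) it cannot lie strictly to the right of $x_1^{\mathrm P}+b$: if it did, the $k+1$ agents at $x_1^{\mathrm P}$ could jointly misreport to drag the facility to $x_1^{\mathrm P}+b$, strictly lowering each of their costs. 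So $f(\mathbf{x^P})=x_1^{\mathrm P}+b$. Symmetrically, $f(\mathbf{x^S})=x_n^{\mathrm S}-b$. Now apply Lemma~\ref{GSPlemma}: there is an instance $\mathbf{x}$ with $x_1=\dots=x_{k+1}$, $x_{k+2}=\dots=x_n$, and $(x_n-b)-(x_1+b)=2b$, on which group strategyproofness forces $f(\mathbf{x})=x_1+b$; and by anonymity the very same instance, with the two groups' labels swapped, is position equivalent to an instance of the $\mathrm{S}$-type, so the second part of Lemma~\ref{GSPlemma} forces $f$ of that (same, by anonymity) profile to equal $x_n-b$. Since $x_1+b<x_n-b$ (the gap is $2b>0$), these two conclusions contradict each other.

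The main obstacle I anticipate is getting the bookkeeping of Lemma~\ref{GSPlemma}'s two instances to line up as literally the same profile (up to anonymity and a position shift), so that the two forced outputs $x_1+b$ and $x_n-b$ genuinely collide. Concretely: the instance with $k+1$ agents at the left point and $k$ at the right point a distance $2b$ apart is, after relabeling agents (allowed by anonymity) and translating (allowed by position invariance), exactly the instance with $k+1$ agents at the right point and $k$ at the left — wait, that has the wrong group sizes, so instead I would introduce the intermediate profile with groups of size $k+1$ and $k$ at distance $2b$ and run Lemma~\ref{GSPlemma} from $\mathrm{P}$ to get output $=x_1+b$, then observe that decreasing the gap back toward $b+\epsilon$ and re-expanding on the \emph{other} side, combined with anonymity, pins the output to $x_n-b$ on a position-equivalent profile. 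Making this chain airtight — ensuring each misreporting step is a legitimate coalition deviation that weakly helps everyone and strictly helps someone — is where the care is needed; everything else is routine once Lemmas~\ref{Pinstance} and \ref{GSPlemma} are in hand.
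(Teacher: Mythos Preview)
Your treatment of $n=2$ (and the embedding for even $n$) is fine and matches the paper. The odd case, however, has two genuine gaps.

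\textbf{First gap: the ``drag'' argument.} You claim that if $f(\mathbf{x^P})$ lies strictly to the right of $x_1^{\mathrm P}+b$, the $k+1$ agents at $x_1^{\mathrm P}$ can jointly misreport to move the facility to $x_1^{\mathrm P}+b$. This does not follow. The onto property (via position invariance) only guarantees that \emph{some full profile} yields output $x_1^{\mathrm P}+b$; it says nothing about what is reachable when the $k$ agents at $x_n^{\mathrm P}$ keep their reports fixed. A coalition can only alter its own reports, so you have no handle on the image of $f$ restricted to profiles with those $k$ coordinates frozen. The paper does not take this shortcut: it re-runs the Case~1 construction of Lemma~\ref{Pinstance} with coalitions $X_1^P,X_2^P$ in place of single agents to exclude the open interval $(x_1^{\mathrm P}+b,x_n^{\mathrm P}-b)$, and then treats the two endpoints via the Case~2 machinery. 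Your ``symmetrically $f(\mathbf{x^S})=x_n^{\mathrm S}-b$'' inherits the same flaw; in the paper this conclusion is obtained only after a nontrivial argument (ruling out $f(\mathbf{x^S})=x_1^{\mathrm S}+b$ via the position equivalence of $P_{\mathrm{IVb}}$ and $S_{\mathrm{III}}$).

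\textbf{Second gap: the final collision.} You already spotted it: the profile with $k+1$ agents on the left and $k$ on the right at gap $2b$ is \emph{not} anonymity/shift equivalent to the profile with $k$ on the left and $k+1$ on the right, so the two outputs forced by Lemma~\ref{GSPlemma} live on different profiles and do not contradict each other. Your proposed patch (``decreasing the gap back and re-expanding on the other side'') is too vague to be a proof. The paper resolves this with a genuinely new construction: a three-cluster instance $\mathrm{T}$ with $k$ agents at $x_1^{\mathrm T}$, one agent $x_t$ at $x_1^{\mathrm T}+2b$, and $k$ agents at $x_1^{\mathrm T}+4b$. Lemma~\ref{GSPlemma} (applied on both sides, using the already-established outputs on $\mathrm{P}$ and $\mathrm{S}$) pins down where the facility goes when $x_t$ merges with either cluster; whichever of $x_1^{\mathrm T}+b$ or $x_n^{\mathrm T}-b$ the mechanism picks on $\mathrm{T}$, $x_t$ can coalition with the far cluster to move the facility to her other peak at no cost to herself and strict gain to her partners. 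That three-cluster instance is the missing idea in your odd-$n$ argument.
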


\begin{proof}
When $n=2$, on the primary instance, according to Lemma~\ref{Pinstance}, there is no anonymous, position invariant and strategyproof mechanism such that  $f(\mathbf x) \in [x_1+b,x_2-b]$. In addition, if the facility was placed on some point $f(\mathbf{x}) < x_1+b$ (the argument for $f(\mathbf{x})>x_2-b$ is symmetric), for any mechanism that satisfies position invariance which implies onto, agents $1$ and $2$ could jointly misreport some positions $\hat{x}_1$ and $\hat{x}_2$ such that $f(\hat{x}_1,\hat{x}_2)=x_1+b$. Obviously, this deviation admits the minimum possible cost for agent $1$ and a reduced cost for agent $2$, violating group strategyproofness.

The proof can easily be extended to the case when $n$ is even. On the primary instance, simply place $\frac{n}{2}$  agents on $x_1$ and $\frac{n}{2}$  agents on $x_2$. By considering deviations of coalitions of agents coinciding on $x_1$ or $x_2$ instead of deviations of agents $x_1$ and $x_2$ respectively, all the arguments still hold. However, additional care must be taken when $n$ is odd.

When $n=2k+1, k\in \mathbb{Z^+}$, we denote by $P_{\mathrm{J}}$ the instance after placing $k+1$ agents on $x_{1}^{\mathrm{J}}$ and $k$ agents on $x_{2}^{\mathrm{J}}$ on instance $\mathrm{J}$, where $\mathrm{J}$ is either instance $\mathrm{I}$, $\mathrm{II}$, $\mathrm{III}$, $\mathrm{IVa}$ or $\mathrm{IVb}$ of the proof of Lemma \ref{Pinstance}. Similarly, let $S_{\mathrm{J}}$ be the instance after placing $k$ agents on $x_{1}^{\mathrm{J}}$ and $k+1$ agents on $x_{2}^{\mathrm{J}}$ on  instance $\mathrm{J}$. Finally, let ${X_{1}^{\mathrm{K}}}$ be the group of agents $i$ for which $x_{i}^{\mathrm K}=x_{1}^{\mathrm{K}}$ on instance $K$, where $K$ is either $P,P_{\mathrm{I}},P_{\mathrm{II}},P_{\mathrm{III}},P_{\mathrm{IVa}},P_{\mathrm{IVb}}, S,S_{\mathrm{I}},S_{\mathrm{II}},S_{\mathrm{III}}, S_{\mathrm{IVa}}$ or $S_{\mathrm{IVb}}$, and let ${X_{2}^{\mathrm{K}}}$ be the group of agents $i$ for which $x_{i}^{\mathrm K}=x_{n}^{\mathrm{K}}$ on instance $\mathrm{K}$.

For contradiction, assume that there exists an anonymous, position invariant, and group strategyproof mechanism. On instance $P$, by group strategyproofness, it must be $f(\mathbf{x^{P}}) \in [x_1^{P}+b,x_n^{P}-b]$ and by the same arguments used in case 1 of Lemma \ref{Pinstance} (with instances $P_{\mathrm{I}},P_{\mathrm{II}}$ instead of $\mathrm{I},\mathrm{II}$ and with $X_{1}^{P}$ and $X_{2}^{P}$ instead of $x_1$ and $x_2$)\footnote{Here we assume for convenience that any mechanism outputs the same location in $[x_1^{P}+c,x_n^{P}-c]$ on the primary instance and instance $P$. This is without loss of generality because the argument for any output in $[x_1^{P}+b,x_n^{P}-b]$ is exactly the same.}, it must be $f(\mathbf{x^P}) \notin (x_1^{P}+b,x_n^{P}-b)$. Hence, it must be that either $f(\mathbf{x^P})=x_1^{P}+b$ or $f(\mathbf{x^P})=x_n^{P}-b$. Assume w.l.o.g. that $f(\mathbf{x^P})=x_1^{P}+b$; the other case can be handled symmetrically.

Following the arguments of case 2 of Lemma \ref{Pinstance} (using instance $P_{\mathrm{III}}$ instead of $\mathrm{III}$), group strategyproofness and position invariance imply that $f(\mathbf{x^{P_{\mathrm{III}}}}) = x_{1}^{P_{\mathrm{III}}}-b$ or $f(\mathbf{x^{P_{\mathrm{III}}}}) = x_{1}^{P_{\mathrm{III}}}+b$ on instance $P_{\mathrm{III}}$. By the arguments of subcase (a) (using instance $P_{\mathrm{IVa}}$ instead of $\mathrm{IVa}$), it can't be that $f(\mathbf{x^{P_{\mathrm{III}}}}) = x_{1}^{P_{\mathrm{III}}}-b$, so it must be that $f(\mathbf{x^{P_{\mathrm{III}}}}) = x_{1}^{P_{\mathrm{III}}}+b$. However, we can not simply apply the argument used in subcase (b) to get a contradiction, because since there is a different number of agents on $x_{1}^P$ and $x_{2}^P$, instances $P$ and $P_{\mathrm{IVa}}$ are no longer position equivalent. 

Next, consider instance $S$ and observe that $f(\mathbf{x^S}) \notin (x_{1}^{S}+b,x_{n}^{S}-b)$ by the same arguments as above and $f(\mathbf{x^S}) \in [x_{1}^{S}+b,x_{n}^{S}-b]$ by group strategyproofness and position invariance. Hence it is either $f(\mathbf{x^S})= x_{1}^{S}+b$ or $f(\mathbf{x^S})= x_{n}^{S}-b$. Assume first that $f(\mathbf{x^S})=x_{1}^{S}+b$. By the same arguments as above (using instances $S_{\mathrm{III}}$ and $S_{\mathrm{IVa}}$), on instance $S_{\mathrm{III}}$, it must be that $f(\mathbf{x^{S_{\mathrm{III}}}}) = x_{1}^{S_{\mathrm{III}}}+b$. Now, observe that if $X_{2}^{P_{\mathrm{III}}}$ misreport $\bar{x}_{i}^{P_{\mathrm{III}}}=x_{i}^{P_{\mathrm{III}}}-2b-2\epsilon$, then we get instance $P_{\mathrm{IVb}}$ which is position equivalent to instance $S_{\mathrm{III}}$ and hence it must be that $f(\mathbf{x^{P_{\mathrm{IVb}}}})=x_n^{P_{\mathrm{IVb}}}+b$. The cost of $X_{2}^{P_{\mathrm{III}}}$ before misreporting was $b+c-\epsilon$ while it becomes $c+2\epsilon$ after misreporting. This violates strategyproofness, which means that on instance $S$, it must be $f(\mathbf{x^S})=x_{n}^{S}-b$.

Let's denote instance $\mathrm{T}$ by $\mathbf{x^T}=\langle x_1^{\mathrm{T}},...,x_n^{\mathrm{T}} \rangle$, where  $x_1^{\mathrm{T}}=...=x_{k}^{\mathrm{T}}=x_{k+1}^{\mathrm{T}}-2b=x_{k+2}^{\mathrm{T}}-4b=...=x_{n}^{\mathrm{T}}-4b$. Let $X_{1}^{\mathrm{T}}$ be the set of agents $i$ for which $x_i^{\mathrm{T}}=x_1^{\mathrm{T}}$, and let $X_{2}^{T}$ be the set of agents $j$ for which $x_j^{\mathrm{T}}=x_n^{\mathrm{T}}$, and let $x_{t}$ be agent $x_{k+1}^{\mathrm{T}}$. On instance $\mathrm{T}$, it must be either $f(\mathbf{x}^{T})=x_{1}^{\mathrm{T}}+b$ or $f(\mathbf{x}^{T})=x_{n}^{\mathrm{T}}-b$, otherwise agent $x_{t}$ could misreport $x_{t}'=x_{1}^{\mathrm{T}}$ and then by Lemma \ref{GSPlemma} and the fact that on instance $P$ it is $f(\mathbf{x^P})=x_1^{P}+b$, it should be $f(x_1^{\mathrm{T}},...,x_{t}',...,x_n^{\mathrm{T}})=x_t-b$, which admits a cost of $c$ for agent $x_t$. Similarily, by Lemma \ref{GSPlemma} and the fact that on instance $S$, it is $f(\mathbf{x^S})=x_n^{S}-b$, if agent $x_t$ misreports $x_{t}''= x_{n}^{\mathrm{T}}$, then it should be $f(x_1^{\mathrm{T}},...,x_{t}'',...,x_n^{\mathrm{T}})=x_t+b$. If $f(\mathbf{x^{T}})=x_{n}^{\mathrm{T}}-b$, agent $x_t$ could form a coalition with agents $X_{1}^{\mathrm{T}}$ and by misreporting $x_{t}'$, move the facility to $x_{t}-b$, a choice that would admit the same cost for her, but a strictly smaller cost for every other member of the coalition. If $f(\mathbf{x^{T}})=x_{1}^{\mathrm{T}}+b$ then agent $x_t$ could form a coalition with agents $X_{2}^{\mathrm{T}}$ and by misreporting $x_{t}''$, move the facility to $x_t+b$, a choice that would admit the same cost for her, but a strictly smaller cost for every other member of the coalition. In each case, there is a coalition of agents that can benefit from misreporting and group strategyproofness is violated. This completes the proof.
\hfill $\square$
\end{proof}


\subsection{Strategyproofness}

We now turn our attention to (simple) strategyproofness. It is well known \cite{moulin1980strategy} that when agents have single-peaked preferences, outputing the location of the $k$th agent ($k$th order statistic), results in a strategyproof mechanism. This is not the case however, for double-peaked preferences and any choice of $k$.

\begin{lemma}\label{nokorder}
Given any instance $\mathbf{x}=\langle x_1,...,x_n \rangle$,  any mechanism that outputs $f(\mathbf x)=x_i-b$, for $i=2,...,n$ or any mechanism that outputs $f(\mathbf x)=x_i+b$, for $i=1,...,n-1$, is not strategyproof.
\end{lemma}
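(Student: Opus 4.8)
The plan is to exhibit, for each such mechanism, a single instance together with a profitable unilateral deviation. Consider first a mechanism of the form $f(\mathbf{x}) = x_i - b$ with $i \in \{2,\dots,n\}$, where $x_1 \le \dots \le x_n$ denotes the sorted profile. I would take an instance with $x_1 < x_2 < \dots < x_n$ chosen so that $0 < x_i - x_{i-1} < 2b$ and, when $i \le n-1$, additionally $x_{i+1} > x_{i-1} + 2b$ (i.e. all agents above $x_i$ are pushed far to the right); such a configuration is clearly consistent, since $x_i < x_{i-1}+2b$ forces $x_{i+1}$ to lie to the right of $x_i$ as required. Then the agent currently located at $x_{i-1}$ misreports $x'_{i-1} = x_{i-1} + 2b$. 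Because $x_i < x_{i-1}+2b < x_{i+1}$ (or simply $x_{i-1}+2b > x_n$ when $i=n$), a short bookkeeping of the new sorted profile $x_1 < \dots < x_{i-2} < x_i < x_{i-1}+2b < x_{i+1} < \dots < x_n$ shows that $x_{i-1}+2b$ is exactly its $i$-th order statistic, so the mechanism now outputs $(x_{i-1}+2b) - b = x_{i-1} + b$, the right peak of this agent, and hence she attains cost exactly $c$. It then remains to observe that her original cost (with the facility at $x_i - b$) is strictly larger than $c$: it equals $c$ only if $x_i - b \in \{x_{i-1}-b, x_{i-1}+b\}$, i.e. $x_i = x_{i-1}$ or $x_i = x_{i-1}+2b$, both of which our choice of instance rules out. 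Thus the deviation is strictly profitable and the mechanism is not strategyproof.

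For a mechanism $f(\mathbf{x}) = x_i + b$ with $i \in \{1,\dots,n-1\}$ I would run the mirror-image argument: take $x_1 < \dots < x_n$ with $0 < x_{i+1} - x_i < 2b$ and, when $i \ge 2$, $x_{i-1} < x_{i+1} - 2b$, and let the agent located at $x_{i+1}$ misreport $x'_{i+1} = x_{i+1} - 2b$. This value becomes the $i$-th order statistic of the new profile (it lands strictly between $x_{i-1}$ and $x_i$, or becomes the new minimum when $i=1$), so the facility moves to $(x_{i+1}-2b)+b = x_{i+1}-b$, the left peak of that agent, dropping her cost to $c$, whereas her original cost with the facility at $x_i+b$ is, by the same elementary check as above, strictly greater than $c$.

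The argument is essentially a direct construction, so there is no genuine conceptual obstacle; the only point requiring care is the order-statistic bookkeeping after the deviation, together with the boundary cases $i=n$ (respectively $i=1$), where the fake report becomes the new maximum (respectively minimum) of the profile rather than landing strictly between two existing agents. These are handled by the very same choice of instance, simply by dropping the now-vacuous separation constraint on $x_{i+1}$ (respectively $x_{i-1}$).
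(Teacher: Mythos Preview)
Your proof is correct and follows essentially the same approach as the paper: exhibit an instance where an agent adjacent to the pivotal $i$-th order statistic can misreport so as to become the new $i$-th order statistic and have the mechanism land exactly on one of her peaks. The paper only writes out the case $f(\mathbf{x})=x_1+b$ (with $x_2=x_1+b$ and agent~$2$ deviating to $x_1-b$) and declares the remaining cases similar, whereas you handle the general index $i$ explicitly and include the boundary cases $i=n$ and $i=1$; the underlying idea is the same.
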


\begin{proof}
We prove the case when $f(\mathbf x)=x_1+b$. The arguments for the other cases are similar. Consider any instance where $x_2=x_1+b$ and $x_i-b > x_2+b$ for $i=3,...,n$. Since $f(\mathbf{x})=f(x_1,x_2,x_3,...,x_n)=x_1+b$, the cost of agent 2 is $\mathrm{cost}(f(\mathbf{x}),x_2)=b+c$. If agent 2 misreports $x_2'=x_1-b$, the outcome will be $f(\mathbf{x'})=f(x_1,x_2',x_3,...,x_n)=x_2'+c=x_1$, and $\mathrm{cost}(f(\mathbf{x'}),x_2)=c<b+c=\mathrm{cost}(f(\mathbf{x}),x_2)$. Agent $2$ has an incentive to misreport, therefore the mechanism is not strategyproof.
\hfill $\square$
\end{proof}

This only leaves two potential choices among $k$th order statistics, either $f(\mathbf x)=x_{1}-b$ or $f(\mathbf x)=x_{n}+b$. Consider the following mechanism.

\begin{mechanism}{M2}\label{mech}
Given any instance $\mathbf{x}=\langle x_1,...,x_n \rangle$, locate the facility always on the left peak of agent 1,  i.e. $f(\mathbf x)=x_{1}-b$ or always on the right peak of agent $n$, i.e. $f(\mathbf{x})=x_n+b$.
\end{mechanism}

From now on, we will assume that \ref{mech} locates the facility on $x_1-b$ on any instance $\mathbf{x}$. The analysis for the other case is similar.

\begin{theorem}
Mechanism \ref{mech} is strategyproof.
\end{theorem}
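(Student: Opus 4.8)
The plan is to exploit the single structural fact that, with the convention $f(\mathbf{x})=x_1-b$ and $x_1=\min_i x_i$, the facility is always located weakly to the left of every agent, i.e. $f(\mathbf{x})=x_1-b< x_1\le x_i$ for all $i$. Hence, by the definition of $\mathrm{cost}$ (the branch $y\le x_i$), every agent $i$'s cost is measured against her \emph{left} peak $x_i-b$, and since $x_1-b\le x_i-b$ we get the clean closed form $\mathrm{cost}(f(\mathbf{x}),x_i)=c+|x_i-b-(x_1-b)|=c+(x_i-x_1)$. In particular, the leftmost agent incurs cost exactly $c$, which is the minimum value $\mathrm{cost}(\cdot,\cdot)$ can ever take, so she never has an incentive to deviate.

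First I would fix an agent $i$, an arbitrary misreport $x_i'$, and set $x_{\min}^{-i}=\min_{j\ne i}x_j$. A misreport changes the output only through the reported minimum, so the new output is $\big(\min\{x_{\min}^{-i},x_i'\}\big)-b$. Then I would split into two cases. Case (i): agent $i$ is not a (unique) leftmost agent, so $x_{\min}^{-i}\le x_i$ and $x_{\min}^{-i}=x_1$ in the original profile; reporting $x_i'\ge x_1$ leaves the output and hence her cost unchanged, while reporting $x_i'<x_1$ moves the facility to $x_i'-b<x_1-b\le x_i-b$, giving cost $c+(x_i-x_i')>c+(x_i-x_1)$, a strict increase. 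Case (ii): agent $i$ is the unique leftmost agent, i.e. $x_i=x_1<x_{\min}^{-i}$; her truthful cost is already $c$, the global minimum over all possible facility locations, so no misreport can lower it. This exhausts all cases, and the symmetric mechanism $f(\mathbf{x})=x_n+b$ is handled identically by reflecting the line, which proves strategyproofness.

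The argument is essentially routine; the only point requiring mild care is the bookkeeping in Case (ii): when the unique leftmost agent misreports upward, the facility can move to $x_{\min}^{-i}-b$, a point that may lie on either side of $x_i$, so one must note that in either sub-situation the resulting cost is still at least $c$ and thus cannot beat her truthful cost of $c$. I expect this case distinction — tracking, for each deviation, whether the deviating agent still determines the reported minimum and against which peak her cost is then evaluated — to be the only (very mild) obstacle.
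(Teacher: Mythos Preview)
Your proposal is correct and follows essentially the same route as the paper's proof: both derive the closed form $\mathrm{cost}(f(\mathbf{x}),x_i)=c+(x_i-x_1)$, observe that the leftmost agent already has the global minimum cost $c$, and for any other agent show that misreports either leave the output unchanged (when $x_i'\ge x_1$) or push the facility further left and strictly increase cost (when $x_i'<x_1$). Your Case~(ii) is slightly more explicit than the paper's one-line ``her cost is already minimum,'' but the argument is the same.
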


\begin{proof}
Obviously, agent $1$ has no incentive to misreport, since her cost is already minimum. For any other agent $i, i=2,\dots,n$, the cost is $\mathrm{cost}(f(\mathbf x),x_{i}) = c+x_i-x_1$. For any misreport $x'_{i} \geq x_1$, the facility is still located on  $x_1-b$ and every agent's cost is the same as before. For some misreport $x'_{i} < x_{1}$, the facility moves to $f(x_1,...,x_i',...,x_n)=x'_{i}-b<x_1-b$, i.e. further away from any of agent $i$'s peaks and hence this choice admits a larger cost for her.
\hfill $\square$
\end{proof}

In the following, we prove that for the case of two agents, Mechanism \ref{mech} is actually the only strategyproof mechanism that satisfies anonymity and position invariance. We start with the following lemma.

\begin{lemma} \label{leftpeak}
For any instance  $\mathbf x= \langle x_1,x_2 \rangle$, where  $x_1+b<x_2-b$, if an anonymous, position invariant and  strategyproof mechanism outputs $f(x_1,x_2)=x_1-b$, then it must output $f(x'_1,x'_2)=x_1'-b$ for any instance $\mathbf{x'}=\langle x_1',x_2'\rangle $ with $x_1' \le x_2'$.
\end{lemma}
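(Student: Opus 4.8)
The plan is to propagate the value $x_1-b$ to every admissible position of the second agent, using strategyproofness for the propagation and position invariance both to normalize the first agent's position and, crucially, to block a degenerate ``escape'' case. First, since any instance $\langle x_1',x_2'\rangle$ with $x_1'\le x_2'$ is position equivalent to $\langle x_1,\, x_2'+(x_1-x_1')\rangle$ and $x_2'+(x_1-x_1')\ge x_1$, position invariance reduces the lemma to showing $f(x_1,t)=x_1-b$ for every $t\ge x_1$, given the hypothesis that $f(x_1,x_2)=x_1-b$ for one $x_2$ with $x_2-b>x_1+b$.

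The core is a one-step claim: if $f(x_1,a)=x_1-b$ for some $a>x_1+2b$ and $a<a'<a+b$, then $f(x_1,a')=x_1-b$. Write $\hat y=f(x_1,a')$ and apply strategyproofness to agent $2$ between $\langle x_1,a\rangle$ and $\langle x_1,a'\rangle$ in both directions. Reporting $a$ in place of $a'$ gives $\mathrm{cost}(\hat y,a')\le \mathrm{cost}(x_1-b,a')=c+a'-x_1$, which confines $\hat y$ to $[x_1-b,\,2a'+b-x_1]$; reporting $a'$ in place of $a$ gives $\mathrm{cost}(\hat y,a)\ge \mathrm{cost}(x_1-b,a)=c+a-x_1$, which forces either $\hat y=x_1-b$ or $\hat y\ge 2a+b-x_1$. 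The second alternative -- the facility ``escaping'' far to the right -- is exactly the point where pairwise strategyproofness between these two instances is not enough, and I would eliminate it with position invariance: shifting $\langle x_1,a\rangle$ by $\eta:=a'-a\in(0,b)$ gives $f(x_1+\eta,a')=x_1-b+\eta$, and strategyproofness for agent $1$ on $\langle x_1,a'\rangle$ (who could report $x_1+\eta$) yields $\mathrm{cost}(\hat y,x_1)\le \mathrm{cost}(x_1-b+\eta,x_1)=c+\eta$, hence $\hat y\le x_1+b+\eta<x_1+2b$; but $a>x_1+2b$ makes $2a+b-x_1>x_1+5b$, a contradiction. Iterating the one-step claim from $a=x_2$ (each successive endpoint again exceeds $x_1+2b$, so the hypothesis persists) yields $f(x_1,t)=x_1-b$ for all $t\ge x_2$.

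It remains to treat $x_1\le t<x_2$. For $x_1+2b<t<x_2$ I would run the same two-sided strategyproofness argument directly between $\langle x_1,x_2\rangle$ and $\langle x_1,t\rangle$; here the escape case would need $\hat y\ge 2x_2+b-x_1$ while at the same time $\hat y\le 2t+b-x_1<2x_2+b-x_1$, an outright contradiction, so no position-invariance step is needed. For the overlapping range $x_1\le t\le x_1+2b$ I would move agent $2$ downward from a far-right instance $\langle x_1,x_2^{\ast}\rangle$ with $x_2^{\ast}=\max(x_2,x_1+6b)$, whose value $x_1-b$ is known from the previous paragraph: strategyproofness confines $f(x_1,t)$ to $[x_1-b,\,x_1+5b]\subseteq(-\infty,\,x_2^{\ast}-b]$, and then the reverse inequality $\mathrm{cost}(f(x_1,t),x_2^{\ast})\ge c+x_2^{\ast}-x_1$ forces $f(x_1,t)\le x_1-b$, hence equality. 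Combining the three ranges $t\ge x_2$, $x_1+2b<t<x_2$, and $x_1\le t\le x_1+2b$ with the reduction from the first paragraph gives the lemma.

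The only real obstacle is the ``escape to the right'' branch in the one-step claim: it is consistent with strategyproofness between the two instances in isolation, and is ruled out only by a position-invariance shift of size less than $b$ -- which is precisely why the one-step claim must use an increment bounded by $b$ and the rightward propagation has to be carried out incrementally rather than in a single leap.
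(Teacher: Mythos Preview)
Your argument is correct, but it takes a genuinely different and more laborious route than the paper's. The paper does not iterate a one-step claim with increment bounded by $b$; instead it handles every $x_2'\ge x_1$ in a single stroke. The key observation is that strategyproofness of agent~2 at the \emph{original} position $x_2$ (who must not benefit from reporting $x_2'$) already confines $f(x_1,x_2')$ to $(-\infty,\,x_1-b]\cup[2x_2+b-x_1,\,\infty)$, and this constraint is uniform in $x_2'$. To eliminate the right branch, the paper builds one auxiliary instance $\langle x_1,\,f(x_1,x_2')+b\rangle$, on which strategyproofness forces the facility onto one of agent~2's peaks; then agent~1 misreports so as to create a shifted copy of the original $\langle x_1,x_2\rangle$, which by position invariance pulls the facility leftward and contradicts strategyproofness for agent~1. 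The left-of-$x_1-b$ possibility is dispatched in one line by letting agent~2 at $x_2'$ report $x_2$.

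What each approach buys: the paper's proof is short and needs no case split on the range of $t$, no induction, and no ``far-right anchor'' $x_2^{\ast}$; the single auxiliary instance does all the work. Your proof avoids that construction by replacing it with a small-shift position-invariance trick for agent~1, but the price is the incremental propagation (forced precisely because your shift $\eta$ must stay below $b$) and the separate handling of the three ranges $t\ge x_2$, $x_1+2b<t<x_2$, and $x_1\le t\le x_1+2b$. Both routes use anonymity only in the initial reduction to $t\ge x_1$.
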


\begin{proof}
Let $(x_2-b) - (x_1+b) = \gamma$. For any other instance $\mathbf{x'}=\langle x'_1,x'_2\rangle $ with $x_1' \le x_2'$, we assume $x'_1=x_1$ without loss of generality due to position invariance. We will consider potential deviations of agent $2$ of instance $\mathbf x$. Note that position invariance and anonymity suggest that we only need to consider deviations $x_2'\geq x_1$ because any instance on which $x_2'< x_1$ is position equivalent to some instance $\langle \hat{x}_1,\hat{x}_2\rangle$ with $\hat{x}_1=x_1$ and $\hat{x}_2 > x_1$.

On instance $\mathbf{x}$, the cost of agent $2$ is $2b+c+\gamma$, so for any deviation of $x_2$, her cost must be at least $2b+c+\gamma$, as required by strategyproofness. This implies that if agent $2$ misreports $x'_2$, then on the resulting instance it must be either $f(x_1,x'_2) \in (-\infty,x_1-b]$ or $f(x_1,x'_2)\in [x_2+3b+\gamma,+\infty)$.

First, assume $f(x_1,x_2')\in [x_2+3b+\gamma,\infty)$. Let instance $\mathrm{I}$ be $\mathbf{x^I}=\langle x^\mathrm{I}_1,x^\mathrm{I}_2\rangle $, where $x^\mathrm{I}_1=x_1, x^\mathrm{I}_2=f(x_1,x_2')+b$. On instance $\mathrm{I}$ it must be either $f(x^\mathrm{I}_1,x^\mathrm{I}_2)=f(x_1,x_2')$ or $f(x^\mathrm{I}_1,x^\mathrm{I}_2)=f(x_1,x_2')+2b$, otherwise agent 2 can deviate from $x^\mathrm{I}_2$ to $x_2'$ and move the facility to $x_2^{\mathrm{I}}-b$, minimizing her cost and violating strategyproofness. However then, on instance $\mathrm{I}$, agent $1$ can misreport $\bar{x}_1^{\mathrm{I}}=x^\mathrm{I}_2-2b-\gamma$ and move the facility to $x^{\mathrm{I}}_2-3b-\gamma$, reducing her cost and violating strategyproofness. This follows from the fact that the resulting instance $\langle \bar{x}_1^\mathrm{I},x_2^\mathrm{I}\rangle$ and instance $\mathbf{x}$ are position equivalent and that $f(x_1,x_2)=x_1-b$ on instance $\mathbf{x}$. Hence, it can not be that $f(x_1,x_2') \in [x_2+3b+\gamma,\infty)$ on instance $\mathbf{x'}$.

Second, assume $f(x_1,x_2')\in (-\infty,x_1-b)$. Then, since $x_2' > x_1$, agent 2 can deviate from $x_2'$ to $x_2$ and move the facility to $x_1-b$, i.e. closer to her actual position. Hence, it can't be that $f(x_1,x_2') \in (-\infty,x_1-b)$ on instance $\mathbf{x'}$ either.

In conclusion, it must be $f(x_1,x_2')=f(x_1',x_2')=x_1'-b$ for any instance $\mathbf{x'}=(x_1',x_2')$ with $x_1' \le x_2'$.
\end{proof}


\begin{theorem}\label{2unique}
When $n=2$, the only strategyproof mechanism that satisfies position invariance and anonymity is Mechanism \ref{mech}.
\end{theorem}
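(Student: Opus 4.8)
\textbf{Proof proposal for Theorem \ref{2unique}.}

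The plan is to show that any strategyproof, anonymous, position invariant mechanism $f$ on two agents must coincide with Mechanism \ref{mech}, i.e. must output either $x_1-b$ on every instance with $x_1\le x_2$, or $x_2+b$ on every such instance. The natural strategy is to first pin down the behaviour of $f$ on a single ``spread-out'' instance and then propagate that behaviour everywhere using Lemma \ref{leftpeak} (and its mirror image). First I would take the primary instance $\mathbf{x}=\langle x_1,x_2\rangle$ with $x_1+2b+\epsilon = x_2-b$. By Lemma \ref{Pinstance}, $f(\mathbf x)\notin [x_1+b,x_2-b]$. I then want to argue that the only positions left for $f(\mathbf x)$ that survive strategyproofness are exactly $x_1-b$ and $x_2+b$: if $f(\mathbf x)$ lies strictly below $x_1-b$ or strictly above $x_2+b$, or in one of the ``gap'' intervals $(x_1-b,x_1+b)$ or $(x_2-b,x_2+b)$, a suitable single-agent deviation (of agent $1$ or agent $2$ respectively, via a shifted/position-equivalent instance exactly as in the proofs of Lemmas \ref{Pinstance} and \ref{leftpeak}) moves the facility strictly closer to the deviator's nearer peak, contradicting strategyproofness.

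Once I know $f$ on the primary instance equals $x_1-b$ or $x_2+b$, I split into these two cases. In the case $f(\text{primary})=x_1-b$, I can essentially invoke Lemma \ref{leftpeak}: that lemma is stated for an arbitrary instance $\langle x_1,x_2\rangle$ with $x_1+b<x_2-b$ on which the mechanism outputs $x_1-b$, and concludes $f(x_1',x_2')=x_1'-b$ for all $x_1'\le x_2'$. The primary instance satisfies $x_1+b<x_2-b$, so Lemma \ref{leftpeak} immediately gives that $f$ is the ``always $x_1-b$'' version of Mechanism \ref{mech} on all instances with distinct-enough agents, and the degenerate instances where $x_1+b\ge x_2-b$ are handled by a short continuity/position-invariance argument or by a direct deviation argument (an agent at $x_2$ close to $x_1$ could otherwise mimic being far away). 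Symmetrically, if $f(\text{primary})=x_2+b$, the mirror image of Lemma \ref{leftpeak} gives the ``always $x_2+b$'' version. These two cases exhaust the possibilities, so $f$ must be Mechanism \ref{mech}.

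The main obstacle I anticipate is the first step: cleanly ruling out \emph{all} outputs on the primary instance except the two peaks $x_1-b$ and $x_2+b$. Lemma \ref{Pinstance} only kills the closed middle interval $[x_1+b,x_2-b]$, so I still need to separately dispose of the two ``peak-gap'' intervals and the two unbounded tails, and for each I must exhibit the right position-equivalent instance and deviation — these are variations on the constructions already in Lemmas \ref{Pinstance} and \ref{leftpeak} (shift the instance so that the manipulating agent's target peak lands on the current facility location, or so that the instance becomes the primary instance whose output is known). A secondary but minor obstacle is making the ``degenerate instances'' argument rigorous: for instances with $x_1+b\ge x_2-b$ the hypothesis of Lemma \ref{leftpeak} does not literally apply, so I would argue that an agent at such an $x_2$ can report a location far to the right, reaching an instance covered by Lemma \ref{leftpeak} where the facility is at the left peak of (the image of) agent $1$, and strategyproofness then forces the same outcome on the original instance. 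Everything else is bookkeeping with position invariance and anonymity.
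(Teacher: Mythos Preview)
Your proposal uses the same ingredients as the paper (Lemma \ref{Pinstance}, Lemma \ref{leftpeak}, and single-agent deviations through position-equivalent instances), but the order in which you deploy them is off, and this creates a genuine gap in your first step.

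You want to first show that on the primary instance $f(\mathbf x)\in\{x_1-b,x_2+b\}$, and only \emph{then} invoke Lemma \ref{leftpeak}. The difficulty is that ruling out the gap intervals $(x_1-b,x_1+b)$, $(x_2-b,x_2+b)$ and the tails requires knowing where the facility lands after a deviation, and at that stage you have no instance with a pinned-down output to deviate \emph{to}: Lemma \ref{Pinstance} gives only the negative information ``not in $[x_1+b,x_2-b]$''. Your suggestion to ``shift so that the instance becomes the primary instance whose output is known'' is circular, since the primary output is precisely what you are trying to determine.

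The paper avoids this by running the argument by contradiction and using Lemma \ref{leftpeak} \emph{inside} the ruling-out step rather than after it. Assuming $M\neq$ \ref{mech} and $f(x_1,x_2)<x_1+b$, it passes to the instance $\langle x_1',x_2\rangle$ with $x_1'=f(x_1,x_2)-b$; strategyproofness forces $f(x_1',x_2)\in\{x_1'-b,x_1'+b\}$. The key point is that the sub-case $f(x_1',x_2)=x_1'-b$ is not ruled out by a further deviation at all: instead, Lemma \ref{leftpeak} applied to $\langle x_1',x_2\rangle$ directly yields $M=$ \ref{mech}, contradicting the standing hypothesis. Only the sub-case $x_1'+b$ is killed by a deviation (of agent $2$ on the primary instance, via position invariance). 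Thus Lemma \ref{leftpeak} is the lever that lets you dispose of outputs below $x_1+b$; it is not merely the finishing step after you already know $f(\mathbf x)=x_1-b$.

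A minor point: your ``secondary obstacle'' about degenerate instances with $x_1+b\ge x_2-b$ is not an obstacle. Lemma \ref{leftpeak} as stated concludes $f(x_1',x_2')=x_1'-b$ for \emph{all} $x_1'\le x_2'$, so once it fires you are done on every instance without any separate continuity or deviation argument.
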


\begin{proof}\label{A2unique}
For contradiction, suppose there exists an anonymous, position invariant strategyproof mechanism $M$ which is different from Mechanism~\ref{mech} and consider the primary instance used in Lemma \ref{Pinstance}.

We first argue that it must be that $f(x_1,x_2) \in [x_1+b,x_2-b]$. Assume on the contrary that $f(x_1,x_2) < x_1+b$ (the argument for $f(x_1,x_2)>x_2-b$ is symmetric). Then, consider the instance $\langle x'_1,x'_2 \rangle$, where $x_2'=x_2$ and $x_1'=f(x_1,x_2)-b$. On this instance, it must be that $f(x_1',x_2') = x_1'-b$ or $f(x_1',x_2') = x_1'+b$, otherwise agent 1 can deviate from $x_1'$ to $x_1$ and move the facility to $x_1'+b$, minimizing her cost. In addition, if $f(x_1',x_2') = x_1'-b$, then according to Lemma~\ref{leftpeak}, the unique strategyproof mechanism that satisfies position invariance and anonymity is Mechanism \ref{mech}, and thus our assumption is violated. So let's assume $f(x_1',x_2')=x_1'+b$. Then, on the primary instance, agent $2$ could report $\hat{x}_2 =x_2+(x_1+b-f(x_1,x_2))$ and by position invariance (since instances $\langle x_1',x_2'\rangle$ and $\langle x_1,\hat{x}_2\rangle$ are position equivalent), it should be $f(x_1,\hat{x}_2)=x_1+b$. This would give agent $2$ an incentive to misreport, violating strategyproofness. Hence, on the primary instance it must be that $f(x_1,x_2) \in [x_1+b,x_2-b]$.

However, according to Lemma~\ref{Pinstance}, it is impossible for a strategyproof mechanism to output $f(x_1,x_2) \in [x_1+b,x_2-b]$ on the primary instance. In conclusion, no mechanism other than Mechanism \ref{mech} is strategyproof, anonymous and position invariant.
\hfill $\square$
\end{proof}

\subsection{Social Cost}

We have seen that Mechanism \ref{mech} is strategyproof, but how well does it perform with respect to our goals, namely minimizing the social cost or the maximum cost? In other words, what is the approximation ratio that Mechanism \ref{mech} achieves against the optimal choice? First, we observe that the optimal mechanism, which minimizes the social cost, is not strategyproof.

\begin{theorem}
The optimal mechanism with respect to the social cost, $f_{\mathrm{opt}}(\mathbf{x}) = \arg \min\limits_y \sum\limits_{i=1}^{n} \mathrm{cost}(y,x_i)$, is not strategyproof.
\end{theorem}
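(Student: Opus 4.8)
The plan is to exhibit a single small instance on which the social-cost-optimal mechanism can be manipulated by one agent. The natural candidate is an instance with two agents whose peaks interleave, since then the optimal location is forced to sit strictly between the two inner peaks, which is a point that is not a peak of either agent; intuitively an agent should then want to pull the facility toward one of her own peaks. Concretely, I would take $\mathbf{x}=\langle x_1,x_2\rangle$ with $x_1+b<x_2-b$, so that $f_{\mathrm{opt}}(\mathbf{x})=\tfrac{x_1+x_2}{2}$, the midpoint of the interval $[x_1+b,x_2-b]$ (this is exactly the optimal location identified in the maximum-cost lower bound argument, and one checks directly that it also minimizes the sum of the two piecewise-linear costs, since moving left of it decreases $\mathrm{cost}(\cdot,x_2)$ at unit rate but increases $\mathrm{cost}(\cdot,x_1)$ at unit rate, and symmetrically on the right, so every point of $[x_1+b,x_2-b]$ is in fact optimal — I would pick the instance so that the optimal mechanism's tie-breaking does not rescue it, or simply argue for whichever optimal point it returns).

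Next I would compute agent $1$'s cost under truthful reporting: since $f_{\mathrm{opt}}(\mathbf{x})=\tfrac{x_1+x_2}{2}\ge x_1+b> x_1$, her cost is $c+\bigl|\tfrac{x_1+x_2}{2}-(x_1+b)\bigr| = c + \tfrac{(x_2-b)-(x_1+b)}{2}$, call this quantity $c+\tfrac{\lambda}{2}$ where $\lambda=(x_2-b)-(x_1+b)>0$. Then I would have agent $1$ misreport a location $x_1'$ far to the right, specifically $x_1' = x_2$ (or any point making the two reported locations coincide, or at least making the optimal location land on or very near agent $1$'s true right peak $x_1+b$). On the profile $\langle x_1', x_2\rangle$ with $x_1'=x_2$, every location admits cost at least $c$ for each agent and the point $x_2-b$ (which is within distance $2b$-type range) — more carefully, I would choose $x_1'$ so that the new optimal location is $x_1+b$ itself (e.g. place $x_1'$ so that $\tfrac{x_1'+x_2}{2}=x_1+b$, i.e. $x_1'=2(x_1+b)-x_2 = x_1 - \lambda$, wait that moves left) — so instead I would move agent 1's report to make her right peak the unique social optimum by pushing the other structure; the cleanest choice is $x_1'=x_2-2b$ so that $x_1'+b = x_2-b$ and the reported peaks nest, forcing the optimum onto a point that gives agent $1$ cost exactly $c+|(\text{that point})-(x_1+b)|$ which is strictly less than $c+\tfrac{\lambda}{2}$ provided $\lambda$ is large. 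I would verify the inequality $\mathrm{cost}(f_{\mathrm{opt}}(x_1',x_2),x_1) < \mathrm{cost}(f_{\mathrm{opt}}(x_1,x_2),x_1)$ by direct substitution.

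The key steps in order: (i) fix the two-agent interleaved instance and confirm $f_{\mathrm{opt}}(\mathbf{x})$ is the midpoint of $[x_1+b,x_2-b]$ together with the tie-breaking subtlety; (ii) compute agent $1$'s truthful cost as $c+\lambda/2$; (iii) specify the misreport $x_1'$ that relocates the social optimum closer to agent $1$'s true peak; (iv) recompute agent $1$'s cost on the manipulated profile and show it is strictly smaller, for a suitable choice of parameters (taking $\lambda$ large, or $\epsilon$ small, kills any slack); (v) note, if desired, that padding with extra agents co-located at one of the peaks extends the counterexample to arbitrary $n$. The main obstacle is step (i): the social cost here is minimized on an entire interval, not a single point, so I must argue about what the (arbitrary, but fixed) optimal mechanism actually outputs and make sure the manipulation works regardless of its tie-breaking choice within that interval — this is handled by choosing the misreport so that on the new profile the optimal interval shrinks to a point (nested peaks) or lies strictly on agent $1$'s side, and by taking $\lambda$ large enough that the gain exceeds any $O(1)$ ambiguity; the rest is routine piecewise-linear arithmetic.
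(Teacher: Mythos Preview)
Your two–agent plan runs into a real obstacle that the hand-wave in step (i) does not resolve. On $\langle x_1,x_2\rangle$ with $x_1+b<x_2-b$, the social cost is \emph{constant} on the entire interval $[x_1+b,\,x_2-b]$, so the ``optimal mechanism'' may return any point there. The ambiguity is of size $\lambda$, not $O(1)$, so ``take $\lambda$ large'' cannot absorb it. Worse, your concrete misreports go the wrong way: reporting $x_1'=x_2-2b$ makes $x_2-b$ the unique optimum, and agent~1's true cost there is $c+\lambda$, which is \emph{larger} than her cost at any point of the original optimal interval. And if the tie-breaking happens to pick the left endpoint $x_1+b$ on the original profile, agent~1 already has cost $c$ and cannot improve; one can check that on this same profile no misreport by agent~2 helps her either (every deviation either leaves the output at $x_1+b$ or pushes it further left). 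So your chosen instance simply produces no violation for that particular optimal mechanism.

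The paper sidesteps the interval-of-optima problem by using three agents: with $x_2-b<x_1+b<x_2$ (so $(x_1+b)-(x_2-b)=\epsilon$) and $x_3$ far to the right, the social optimum is pinned at $x_2+b$ (up to a two-point tie handled by a symmetry argument), giving agent~1 cost $c+2b-\epsilon$; when agent~1 misreports some $x_1'<x_2-2b$, the optimum relocates to $x_2-b$ and her cost drops to $c+\epsilon$. The third agent is what breaks the degeneracy you are fighting. If you want to rescue a two-agent argument you would need a different configuration (with overlapping peaks) and a genuine case analysis over tie-breaking rules, which is considerably more work than the paper's three-agent construction.
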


\begin{proof}
Consider an instance $\mathbf x=\langle x_1,x_2,x_3\rangle $, such that  $x_2+b<x_3-b$, $x_2-b < x_1 + b < x_2$ and  $(x_1+b)-(x_2-b) = \epsilon$, where $\epsilon$ is an arbitrarily small positive quantity. On this instance, the optimal facility location is $x_2+b$ and the cost of agent $x_1$ is $c+2b-\epsilon$. Suppose now that agent $x_1$ reports $x'_1 < x_2-2b$. Moreover, suppose that when there are two locations $y_1$ and $y_2$, with $y_1 < y_2$ that admit the same social cost, the optimal mechanism outputs $y_1$. If the mechanism outputs $y_2$ instead, we can use a symmetric argument on the instance $\mathbf{x'}=\langle x'_1,x'_2,x'_3 \rangle =\langle x'_1,x_2,x_2+2b-\epsilon\rangle$ with agent $3$ misreporting $x_3$.\footnote{In fact, even if the optimal mechanism outputs a distribution over points that all admit the minimum social cost, the argument still works.} By this tie-breaking rule, on instance $\mathbf{x}=\langle x_1',x_2,x_3\rangle$, the location of the facility is $x_2-b$ and the cost of agent $x_1$ is $c+\epsilon$, i.e. smaller than before. Hence, the optimal mechanism is not strategyproof. To extend this to an arbitrary number of agents, let $x_j=x_2$ for every other agent $x_j$.
\hfill $\square$
\end{proof}

Unfortunately, when considering the social cost, in the extremal case, the approximation ratio of Mechanism \ref{mech} is dependent on the number of agents. The approximation ratio is given by the following theorem.

\begin{theorem}\label{deter-ratio}
For $n > 3$ agents, Mechanism \ref{mech} achieves an approximation ratio of $\Theta(n)$ for the social cost.
\end{theorem}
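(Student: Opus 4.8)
The plan is to prove matching upper and lower bounds of $\Theta(n)$. For the \emph{upper bound}, recall that Mechanism~\ref{mech} always outputs $f(\mathbf x)=x_1-b$, so the cost of each agent $i$ is $\mathrm{cost}(x_1-b,x_i)=c+(x_i-x_1)$ (since $x_1-b\le x_i-b$ for all $i$, the left peak governs). Hence $SC_{\mathrm{\ref{mech}}}(\mathbf x)=nc+\sum_{i=1}^n(x_i-x_1)$. On the other hand, for \emph{any} location $y$ and agent $i$ we have $\mathrm{cost}(y,x_i)\ge c$, and also $\mathrm{cost}(y,x_i)\ge c+|x_i-x_1|-2b$ by the triangle inequality comparing peaks. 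Summing and using $SC_{\mathrm{opt}}(\mathbf x)\ge nc$ together with a bound like $SC_{\mathrm{opt}}(\mathbf x)\ge \sum_i(x_i-x_1)-2nb$ (or more carefully, that the optimal social cost is at least some constant fraction of the diameter for fixed $n$), I would show $SC_{\mathrm{\ref{mech}}}(\mathbf x)/SC_{\mathrm{opt}}(\mathbf x)=O(n)$: the additive gap $\sum_i(x_i-x_1)+2nb$ is at most $O(n)$ times $\max\{nc,\ \text{diameter}\}$, which in turn is $O(n)\cdot SC_{\mathrm{opt}}$ since $SC_{\mathrm{opt}}\ge nc$ and also $SC_{\mathrm{opt}}$ is at least a constant times the diameter (locate $y$ in the middle).

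For the \emph{lower bound}, I would exhibit an instance forcing ratio $\Omega(n)$. Take $n-1$ agents clustered near a point $p$ and one agent $x_1=p-L$ far to the left, with $L$ large. Mechanism~\ref{mech} outputs $x_1-b=p-L-b$, giving each of the $n-1$ clustered agents cost roughly $c+L$, so $SC_{\mathrm{\ref{mech}}}\ge (n-1)(c+L)\approx nL$. The optimal solution places the facility near $p$ (say at $p-b$ or wherever best for the cluster), making the cluster pay $\Theta(nc)$ and the lone agent pay $\Theta(L)$, so $SC_{\mathrm{opt}}=O(nc+L)$. Choosing $L=\Theta(nc)$ balances these and yields $SC_{\mathrm{\ref{mech}}}/SC_{\mathrm{opt}}=\Theta(n^2 c)/\Theta(nc)=\Theta(n)$. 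Care is needed with which peak governs each agent's cost (ensuring $L>2b$ so the clustered agents' cost at $x_1-b$ really is $c+(x_i-x_1)$) and with the exact tie-breaking for the median-free definition of Mechanism~\ref{mech}, but these are routine.

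The main obstacle I anticipate is the \emph{upper bound}: one must rule out instances where $SC_{\mathrm{opt}}$ is very small (close to $nc$) while the diameter $\sum_i(x_i-x_1)$ is enormous. This cannot happen, because if the diameter is $D$ then placing the facility at a central point forces $SC_{\mathrm{opt}}=\Omega(D)$ (at least a constant fraction of the agents are at distance $\Omega(D)$ from any single point, up to the $\pm b$ peak shift), so in fact $SC_{\mathrm{opt}}=\Omega(\max\{nc,D\})$, while $SC_{\mathrm{\ref{mech}}}=nc+O(D)=O(n\cdot\max\{c,D/n\})$. Making this trade-off precise — in particular handling the interaction between the peak offset $b$ and the cluster geometry so that the hidden constants are genuinely constants and not functions of $n$ — is where the real work lies; the lower bound construction above confirms the $\Theta(n)$ is tight.
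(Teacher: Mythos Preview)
Your lower-bound instance is correct and is essentially the one the paper uses: one agent far to the left of a cluster of $n-1$ coinciding agents; letting the separation grow drives the ratio to $n-1$.

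The upper-bound sketch, however, has a real gap. You take ``diameter'' to mean $S=\sum_i(x_i-x_1)$ and assert $SC_{\mathrm{opt}}=\Omega(S)$, justified by ``a constant fraction of the agents are at distance $\Omega(D)$ from any single point''. Both the conclusion and the justification are false on your own lower-bound instance: with $n-1$ agents at $p$ and one at $p-L$ we have $S=(n-1)L$ while $SC_{\mathrm{opt}}\approx nc+L$, which is not $\Omega(S)$; and only \emph{one} agent, not a constant fraction, is far from the cluster point. Read literally, your chain $SC_{\mathrm{\ref{mech}}}=O(\max\{nc,D\})$ and $SC_{\mathrm{opt}}=\Omega(\max\{nc,D\})$ would even yield ratio $O(1)$, contradicting the lower bound you just proved.

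The repair is short. Let $D=x_n-x_1$ (the true diameter). Then $SC_{\mathrm{\ref{mech}}}(\mathbf x)=nc+\sum_i(x_i-x_1)\le nc+nD$. For the optimum, any $y$ satisfies $\mathrm{cost}(y,x_1)+\mathrm{cost}(y,x_n)\ge 2c+\max(0,D-2b)$ by the triangle inequality applied to the relevant peaks, while every other agent pays at least $c$; hence $SC_{\mathrm{opt}}(\mathbf x)\ge nc+\max(0,D-2b)$. If $D\le 2b$ the ratio is at most $1+2b/c$; if $D>2b$ the ratio is at most $(nc+nD)/(nc+D-2b)$, which is monotone in $D$ and bounded by $\max\{n,\,1+2b/c\}$. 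Either way the ratio is $O(n)$.

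For comparison, the paper's upper-bound proof is quite different and more laborious: it first transforms an arbitrary instance into one where every agent $i\ne 1$ sits to the right of $y=f_{\mathrm{opt}}$ with its left peak nearest $y$ (this preserves $SC_y$ while only increasing $SC_{\mathrm{\ref{mech}}}$), and then carries out a three-case analysis on the position of $y$ relative to $x_1$, optimising the resulting expressions via partial derivatives to obtain the bound $n-1$. The two-extreme-agent argument above is shorter and gives the same $\Theta(n)$.
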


\begin{figure}
\centering
\includegraphics[scale=0.5]{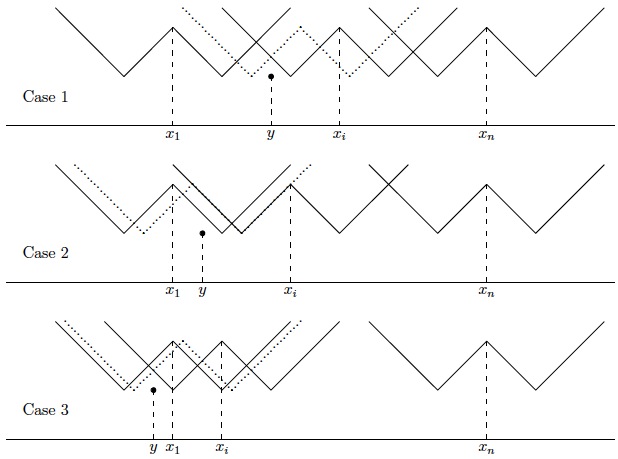}
\caption{Cases of Theorem~\ref{deter-ratio}.\label{detercase} The dotted lines indicate the position of agent $x_i$ before moving her from the left side to the right side of $y$. The cost of every agent with respect to $y$ is the same before and after moving her to the right. The figure depicts only three agents (including $x_1$ and $x_n$).}
\end{figure}

\begin{proof}

Consider any instance $\mathbf{x'}=\langle x'_1,...,x'_n\rangle$ and let $y=f_{\mathrm{opt}}(\mathbf{x'})$. It also holds that $f_{\mathrm{\ref{mech}}}(\mathbf{x'})=x'_1-b$. Denote the social costs of the optimal mechanism and Mechanism \ref{mech} on instance $\mathbf{x'}$ by $SC_{\mathrm{opt}}(\mathbf{x'})$ and $SC_{\mathrm{\ref{mech}}}(\mathbf{x'})$, respectively. Let $\mathbf{x}$ be the instance obtained by instance $\mathbf{x'}$ as follows. For every agent $i, i \neq 1$, if $x'_i+b \le y$, let $x_i=2y-x'_i$;  if $x'_i < y< x'_i + b$, let $x_i =x'_i+2b$; if $x'_i-b<y<x'_i$, let $x_i =2y-x'_i+2b$; otherwise let $x_i=x'_i$. Observe that $|x_i-b-y|=\min\left(|x'_i-b-y|,|x'_i+b-y|\right)$ and hence $\text{cost}(x_i,y)=\text{cost}(x'_i,y)$ for every agent $i$. Similarily, it holds that $(x_i-b)-(x_1-b)\geq (x'_i-b)-(x'_1-b)$ and hence (since $x_1=x'_1$), $\text{cost}(x_i-b,x_1-b)\geq \text{cost}(x'_i-b,x'_1-b)$ for all agents $i$.

We will calculate an upper bound on the approximation ratio on instance $x'$. To do that, we will calculate an upper bound on the value of the ratio $SC_{\mathrm{\ref{mech}}}(\mathbf{x})/SC_{y}(\mathbf{x})$ on instance $\mathbf{x}$, where $SC_{\mathrm{\ref{mech}}}(\mathbf{x})$ is the social cost of Mechanism \ref{mech} on instance $\mathbf{x}$ and $SC_{y}(\mathbf{x})$ is the social cost admitted by $y$ on instance $\mathbf{x}$. By the way the instance was constructed, it holds that $SC_{\mathrm{\ref{mech}}}(\mathbf{x}) \geq SC_{\mathrm{\ref{mech}}}(\mathbf{x'})$ and $SC_{y}(\mathbf{x})=SC_{y}(\mathbf{x'})=SC_{\mathrm{opt}}(\mathbf{x'})$ and hence $SC_{\mathrm{\ref{mech}}}(\mathbf{x})/SC_{y}(\mathbf{x})$ is an upper bound on $SC_{\mathrm{\ref{mech}}}(\mathbf{x'})/SC_{\mathrm{opt}}(\mathbf{x'})$.

Let $d_i=(x_i-b)-y$, for $i=2,...,n$ and let $k = \sum_{i\neq 1}d_i$. Finally let $d = |y - (x_1+b)|$. We consider three cases. (See Figure~\ref{detercase}).\\

\textbf{Case 1:} $x_1+b < y$.


In this case, the social cost admitted by $y$ is $SC_{y}(\mathbf{x}) = nc+ k + d$, and the cost of Mechanism~\ref{mech} is $SC_{\mathrm{\ref{mech}}}(\mathbf{x})= nc + k + (n-1)(d+2b)$. The ratio is
\begin{eqnarray*}
\frac{SC_{\mathrm{\ref{mech}}}(\mathbf{x})}{SC_{y}(\mathbf{x})} = g(k,d,n) = \frac{k+(n-1)(d+2b)+nc}{k+d+nc}.
\end{eqnarray*}
By calculating the partial derivative of the ratio with respect to $k$, we have
\begin{eqnarray*}
\frac{\partial g(k,d,n)}{\partial k} &=& \frac{k+d+nc-k-(n-1)(d+2b)-nc}{(k+d+nc)^2} = \frac{d - (n-1)(d+2b)}{(k+d+nc)^2} < 0.
\end{eqnarray*}
When $k=0$, i.e. $x_i=y+b$ for all agents $i=2,\dots,n$, the ratio achieves the maximum value with respect to $k$. We plug in $k=0$ to the above formula and get
\begin{eqnarray*}
\frac{SC_{\mathrm{\ref{mech}}}(\mathbf{x})}{SC_{y}(\mathbf{x})} = g(d,n) &=& \frac{(n-1)(d+2b)+nc}{d+nc} = \frac{(n-1)d + 2(n-1)b+nc}{d+nc}.
\end{eqnarray*}
Again, we calculate the partial derivative of $g(d,n)$ with respect to $d$,
\begin{eqnarray*}
\frac{\partial g(d,n)}{\partial d} &=& \frac{(n-1)(d+nc) - (n-1)d - 2(n-1)b-nc}{(d+nc)^2} = \frac{c(n^2-4n+2)}{(d+nc)^2}.
\end{eqnarray*}
For sufficiently large $n$ ($n\geq 2b/c$), the above value is positive, so the ratio is maximized when $d$ approaches to infinity, which means that agent $x_1$ is positioned very far away from the rest of the agents coinciding on position $x_i$. Therefore the ratio goes to $n-1$. \\
					
\textbf{Case 2:} $x_1 < y \le x_1+b$, which means $0\le d<b$.


In this case, the social cost admitted by $y$ is $SC_{y}(\mathbf{x}) = k + d +nc$, and the cost of Mechanism~\ref{mech} is $SC_{\mathrm{\ref{mech}}}(\mathbf{x})=k+(n-1)(2b-d)+nc$. So the ratio is
\begin{eqnarray*}
\frac{SC_{\mathrm{\ref{mech}}}(\mathbf{x})}{SC_{y}(\mathbf{x})} = g(k,d,n) = \frac{k+(n-1)(2b-d)+nc}{k+d+nc}.
\end{eqnarray*}
Again, we calculate the partial derivative of $g(k,d,n)$ with respect to $k$,
\begin{eqnarray*}
\frac{\partial g(k,d,n)}{\partial k} &=& \frac{k+d+nc-k-(n-1)(2b-d)-nc}{(k+d+nc)^2} = \frac{d-(n-1)(2b-d)}{(k+d+nc)^2} .
\end{eqnarray*}
Since $0\le d<b$, $d < 2b-d$ and the above value is negative. To get the maximum value of the ratio, we plug in $k=0$ and the approximation ratio becomes
\begin{eqnarray*}
\frac{SC_{\mathrm{\ref{mech}}}(\mathbf{x})}{SC_{y}(\mathbf{x})} = g(d,n) = \frac{(n-1)(2b-d)+nc}{d+nc}.
\end{eqnarray*}
By calculating the partial derivative with respect to $d$, we get
\begin{eqnarray*}
\frac{\partial g(d,n)}{\partial d} &=& \frac{-(n-1)(d+nc)-(n-1)(2b-d)-nc}{(d+nc)^2}= \frac{-(n-2)nc-(n-1)2b}{(d+nc)^2} < 0.
\end{eqnarray*}
So the ratio is maximized when $d=0$, and the ratio is
$1+\frac{(2n-2)b + nc}{nc}$, which (for sufficiently large $n$) is smaller than the ratio in Case 1. \\

\textbf{Case 3:} $x_1-b<y \le x_1$, which means $b \le d <2b$.


In this case, the social cost admitted by $y$ is $SC_{y}(\mathbf{x}) = k + 2b-d + nc$, and cost of Mechanism~\ref{mech} is $SC_{\mathrm{\ref{mech}}}(\mathbf{x})=k+(n-1)(2b-d)+nc$. So the ratio is
\begin{eqnarray*}
\frac{SC_{\mathrm{\ref{mech}}}(\mathbf{x})}{SC_{y}(\mathbf{x})} = g(k,d,n) = \frac{k+(n-1)(2b-d)+nc}{k+2b-d+nc}.
\end{eqnarray*}
We first calculate partial derivative of the function $g$ with respect to $k$,
\begin{eqnarray*}
\frac{\partial g(k,d,n)}{\partial k} &=& \frac{k+2b-d+nc-k-(n-1)(2b-d)-nc}{(k+2b-d+nc)^2} = \frac{(2-n)(2b-d)}{(k+2b-d+nc)^2} < 0.
\end{eqnarray*}
The function is decreasing in $k$ so let's set $k=0$. The function $f$ becomes
\begin{eqnarray*}
\frac{SC_{\mathrm{\ref{mech}}}(\mathbf{x})}{SC_{y}(\mathbf{x})} = g(d,n) = \frac{(n-1)(2b-d)+nc}{2b-d+nc}.
\end{eqnarray*}
Now, we calculate partial derivative of $g$ with respect to $d$ and get
\begin{eqnarray*}
\frac{\partial g(d,n)}{\partial d} &=& \frac{-(n-1)(2b-d)-n(n-1)c +(n-1)(2b-d) +nc}{(2b-d+nc)^2} \le 0.
\end{eqnarray*}
So function $g(d,n)$ is decreasing in $d$. Let's set $d=b$, and the ratio goes to $\frac{(n-1)b+nc}{b+nc}$, which goes to $n-1$ as well.

In all, since $SC_{\mathrm{\ref{mech}}}(\mathbf{x})/SC_{y}(\mathbf{x}) \leq n-1$ the approximation ratio of Mechanism~\ref{mech} is at most $n-1$. The approximation ratio is exactly $n-1$ on any instance $\mathbf{x}=\langle x_1,x_2,...,x_n\rangle$ with $x_2=\dots=x_n$ and $x_1 \ll x_2$, i.e. when agent $1$ lies on the left, really far away from the other $n-1$ agents.
\hfill $\square$
\end{proof}

Next, we will prove a lower bound of $1+b/c$ on the approximation ratio of any anonymous, position invariant, and strategyproof mechanism, when the number of agents is even. We will start with the following lemma.

\begin{lemma}\label{twotomany}
Let $M^n$ be a strategyproof, anonymous and position invariant mechanism for $n$ agents, where $n$ is even. Then, for any location profile $\mathbf{x}=\langle x_1=\ldots=x_{n/2}, x_{n/2+1}=\ldots=x_n\rangle$, it holds that $M^n(\mathbf{x})= x_1-b$.
\end{lemma}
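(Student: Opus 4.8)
The plan is to bootstrap the two‑agent characterization (Theorem~\ref{2unique}) up to the class of balanced two‑cluster profiles. By position invariance I may assume $x_1=\dots=x_{n/2}=0$ and $x_{n/2+1}=\dots=x_n=\gamma$ with $\gamma\ge 0$, and by anonymity $M^n$ sees only the multiset $\{0^{n/2},\gamma^{n/2}\}$; the goal is $M^n(\mathbf x)=-b$ for every such profile. As in the rest of the section I work under the ``left‑peak'' convention fixed for Mechanism~\ref{mech}: what one actually proves is a dichotomy — on all balanced profiles $M^n$ outputs $x_1-b$, or on all of them it outputs $x_n+b$ (the latter being the mirror image coming from the right‑peak version of Mechanism~\ref{mech}) — and I take the former branch throughout.

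The core of the argument mirrors, cluster by cluster, the three ingredients used for $n=2$. First, I would show $M^n(\mathbf x)\notin[x_1+b,x_n-b]=[b,\gamma-b]$, following the proof of Lemma~\ref{Pinstance}: starting from a ``primary‑type'' balanced profile (with $\gamma$ chosen so that $x_1+2b+\epsilon=x_n-b$) one builds the analogues of instances $\mathrm I$–$\mathrm{IVb}$ by letting a \emph{single} agent of one cluster misreport, evaluates $M^n$ on those that are position‑equivalent to the original via position invariance, and extracts a profitable deviation. Second, I would rule out $M^n(\mathbf x)\in(x_1-b,x_1+b)\cup(x_n-b,x_n+b)$ and, simultaneously, propagate the value $-b$ across all balanced profiles, along the lines of Lemma~\ref{leftpeak}: an agent near an end cluster can, by reporting an appropriate location, force the facility onto one of her own peaks (using position invariance and the onto condition it implies), so any output in those peak‑neighbourhoods is manipulable, and the same chain shows that outputting $x_1-b$ on one balanced profile forces it on all of them. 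Third, $M^n(\mathbf x)<x_1-b$ or $M^n(\mathbf x)>x_n+b$ is excluded by the same peak‑chasing deviation (the facility can be pulled strictly closer to every agent's nearer peak). What remains is $M^n(\mathbf x)\in\{x_1-b,\,x_n+b\}$ with a single consistent choice, and the convention selects $x_1-b$.

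\textbf{The main obstacle.} When $n=2$ every one‑agent misreport again yields a two‑agent profile, which Lemma~\ref{leftpeak} already controls; for $n\ge 4$ a single agent leaving a cluster of size $n/2$ produces a \emph{three‑cluster} profile of shape $\{a^{n/2},\,b^{1},\,c^{n/2-1}\}$, on which no earlier result pins $M^n$ down. I do not expect to evaluate $M^n$ exactly on these intermediate profiles; rather I would use only one‑sided bracketing of $M^n$ there — again obtained from ``an agent can force the facility onto her own peak'' — which suffices to see that the deviating agent weakly benefits. To organize this I would prove, by induction on the imbalance $|p-q|$ over two‑cluster profiles $\{0^p,\gamma^q\}$ (base case $p=n$, the all‑coincident profile, whose value is fixed directly by the peak‑chasing deviation that determines $\kappa$ in $M^n(\langle 0,\dots,0\rangle)=\kappa$), that $M^n=-b$ whenever $p\ge q$; the balanced case $p=q=n/2$ is then exactly the statement, and migrating one agent at a time between the two clusters keeps the induction within reach, the transient three‑cluster profiles appearing only inside a single step.

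I expect the genuinely delicate point to be precisely this interplay: arranging the induction and the auxiliary three‑cluster profiles so that the strategyproofness inequality of the migrating agent bites in the right direction — subtle because the double‑peaked cost is non‑monotone, so ``moving the facility toward a cluster'' need not help that cluster's agents once the facility is past their near peak, and one must check the correct branch of $\mathrm{cost}(\cdot,\cdot)$ in each comparison. Finally, the small‑$\gamma$ regime $0<\gamma\le 2b$, where the two peak‑intervals overlap, and the degenerate $\gamma=0$ case would be handled by the same shifting arguments as in Lemma~\ref{Pinstance}, with no new ideas.
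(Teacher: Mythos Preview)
Your plan is workable in spirit but far more laborious than necessary, and the part you flag as ``the main obstacle'' is a real one that you have not resolved: once a single agent leaves a balanced cluster you land on a profile where nothing in the paper (and nothing in your induction hypothesis) controls $M^n$, and ``one-sided bracketing'' via peak-chasing is not enough on its own, because the non-monotonicity of the double-peaked cost means that knowing only that the facility lies on one side of some peak does not tell you whether the deviating agent gained or lost. Your proposed induction on the imbalance $|p-q|$ is also not well-founded as stated: to pin down the base case $p=n$ you already need to analyse a deviation to an $(n-1,1)$ profile, which is not the base case.

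The paper sidesteps all of this with a clean reduction. Define a two-agent mechanism $M^2$ by $M^2(x_1,x_2):=M^n(\underbrace{x_1,\dots,x_1}_{n/2},\underbrace{x_2,\dots,x_2}_{n/2})$ and check that $M^2$ is anonymous, position invariant, and strategyproof; then Theorem~\ref{2unique} forces $M^2$ to be Mechanism~\ref{mech}, i.e.\ $M^2(x_1,x_2)=x_1-b$, which is exactly the statement. Anonymity and position invariance of $M^2$ are immediate from those of $M^n$. For strategyproofness, suppose agent~1 in $M^2$ considers reporting $x_1$ instead of her true $\hat x_1$. In the $n$-agent world, move the $n/2$ left-cluster agents from $\hat x_1$ to $x_1$ \emph{one at a time}: at each step, strategyproofness of $M^n$ (applied to that step's profile, with the moving agent truly located at $\hat x_1$) says the cost of an agent at $\hat x_1$ does not decrease. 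Chaining the $n/2$ inequalities gives $\mathrm{cost}(\hat x_1,M^2(x_1,\hat x_2))\ge \mathrm{cost}(\hat x_1,M^2(\hat x_1,\hat x_2))$, which is exactly strategyproofness of $M^2$. The three-cluster profiles you worried about do appear as intermediate steps in this chain, but you never need to know where $M^n$ puts the facility on them — only that the one-agent SP inequality holds there, which it does by assumption on $M^n$.
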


\begin{proof}
Let $M^2$ be the following mechanism for two agents: On input location profile $\langle x_1,x_2\rangle$, output $M^n(\mathbf{x'})$, where $\mathbf{x'}=\langle x'_1=\dots=x'_{n/2}, x'_{n/2+1}=\ldots=x'_n \rangle$, and $x'_1=x_1$ and $x'_{n/2+1}=x_2$. First, we claim that $M^2$ is strategyproof, anonymous and position invariant. If that is true, then by Theorem \ref{2unique}, $M^2$ is Mechanism \ref{mech} and the lemma follows.

First let $\mathbf{x}=\langle x_1,x_2 \rangle$, $\mathbf{\hat{x}}=\langle \hat{x}_1, \hat{x}_2 \rangle$ be any two position equivalent location profiles. Observe that the corresponding $n$-agent profiles $\mathbf{x'}$ and $\mathbf{\hat{x}'}$ obtained by placing $n/2$ agents on $x_1$ and $\hat{x_1}$ and $n/2$ agents on $x_2$ and $\hat{x_2}$ respectively are also position equivalent. Since $M^n$ is position invariant, it must hold that $M^n(\mathbf{x'})=M^n(\mathbf{\hat{x}'})$ and hence by construction of $M^2$, $M^2(\mathbf{x})=M^2(\mathbf{\hat{x}})$. Since $\mathbf{x}$ and $\mathbf{\hat{x}}$ where arbitrary, Mechanism $M^2$ is position invariant.

Similarly, let $\mathbf{x}=\langle x_1,x_2 \rangle$, $\mathbf{\hat{x}}=\langle \hat{x}_1, \hat{x}_2 \rangle$ be any two location profiles, such that $\mathbf{\hat{x}}$ is obtained by $\mathbf{x}$ by a permutation of the agents. The outcome of $M^n$ on the corresponding $n$-agent location profiles (since the number of agents placed on $x_1$ and $x_2$ is the same) is the same and by construction of $M^2$, $M^2(\mathbf{x})=M^2(\mathbf{\hat{x}})$ and since the profiles where arbitrary, the mechanism is anonymous.

Finally, for strategyproofness, start with a location profile $\mathbf{\hat{x}'}=\langle \hat{x}'_1,\hat{x}'_2 \rangle$ and let $\mathbf{x'}=\langle x'_1=\dots=x'_{n/2}, x'_{n/2+1}=\ldots=x'_n \rangle$ be the corresponding $n$-agent location profile. Let $y=M^n(\mathbf{x'})$ and let $\textrm{cost}(x',y)$ be the cost of agents $x'_1,\ldots,x'_{n/2}$ on $\mathbf{x'}$. For any $x_1$, let $\langle x_1, x'_2=\ldots=x'_{n/2}, x'_{n/2+1}=\ldots=x'_n \rangle$ be the resulting location profile. By strategyproofness of $M^n$, agent $x'_1$ can not decrease her cost by misreporting $x_1$ on profile $\mathbf{x'}$ and hence her cost on the new profile is at least $\textrm{cost}(x',y)$. Next, consider the location profile $\langle x_1=x_2,x'_3=\ldots=x'_{n/2}, x'_{n/2+1}=\ldots=x'_n \rangle$ and observe that by the same argument, the cost of agent $x'_2$ is not smaller on the new profile when compared to $\langle x_1, x'_2=\ldots=x'_{n/2}, x'_{n/2+1}=\ldots=x'_n \rangle$ and hence her cost is at least $\textrm{cost}(x',y)$. Continuing like this, we obtain the profile $\langle x_1=\ldots=x_{n/2},x'_{n/2+1}=\ldots=x'_n \rangle$ and by the same argument, the cost of agent $x'_{n/2}$ on this profile is at least $\textrm{cost}(x',y)$. The location profile $\langle x_1=\ldots=x_{n/2},x'_{n/2+1}=\ldots=x'_n \rangle$ corresponds to the $2$-agent location profile $\mathbf{\hat{x}}=\langle \hat{x}_1,\hat{x}'_2 \rangle$ and by construction of $M^2$, $\textrm{cost}(\hat{x}'_1,M^2(\mathbf{\hat{x}'})) \leq \textrm{cost}(\hat{x}'_1,M^2(\mathbf{\hat{x}}))$ and since the choice of $x_1$ (and hence the choice of $\hat{x_1}$) was arbitrary, Mechanism $M^2$ is strategyproof.  
\end{proof}

\begin{theorem}\label{detlowerbound}
When the number of agents is even, any strategyproof mechanism that satisfies position invariance and anonymity achieves an approximation ratio of at least $1+(b/c)$ for the social cost.
\end{theorem}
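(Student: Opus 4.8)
The engine of this theorem is Lemma~\ref{twotomany}: for even $n$, \emph{every} strategyproof, anonymous, position invariant mechanism $M^n$ is forced to output $x_1-b$ on any ``two-cluster'' profile in which $n/2$ agents sit at one point and the other $n/2$ sit at a weakly larger point. So the whole task reduces to exhibiting one such two-cluster instance on which this forced output has social cost a $(1+b/c)$-factor above optimal; taking the supremum over instances then gives the claim.

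The plan is to use the instance $\mathbf{x}$ with $x_1=\dots=x_{n/2}=0$ and $x_{n/2+1}=\dots=x_n=2b$. The reason for the gap $2b$ is that then the right peak of the left cluster, $0+b=b$, coincides with the left peak of the right cluster, $2b-b=b$; hence the single point $y=b$ gives every agent her minimum cost $c$, so $SC_{\mathrm{opt}}(\mathbf{x})=nc$ with essentially no work (one may also invoke Lemma~\ref{betweenmlmr}, which confines an optimum to $[x_m-b,x_m+b]=[-b,b]$, and then check the cost is constant on $[-b,0]$ and minimized at $y=b$). Then I would apply Lemma~\ref{twotomany} to get $M^n(\mathbf{x})=x_1-b=-b$ and simply read off the costs from the piecewise definition: since $-b\le 0$, each left-cluster agent pays $c+|0-b-(-b)|=c$; since $-b\le 2b$, each right-cluster agent pays $c+|2b-b-(-b)|=c+2b$. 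Summing, $SC_{M^n}(\mathbf{x})=\tfrac{n}{2}c+\tfrac{n}{2}(c+2b)=nc+nb$, and dividing by $SC_{\mathrm{opt}}(\mathbf{x})=nc$ yields the ratio exactly $1+b/c$.

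There is not really a hard step once Lemma~\ref{twotomany} is available; the only points requiring care are (i) verifying the direction conditions in the cost function so that the right cluster's cost is computed from its \emph{left} peak (which holds because $-b\le 2b$), and (ii) confirming the instance genuinely has the form Lemma~\ref{twotomany} applies to, i.e.\ $x_1=0\le 2b=x_{n/2+1}$, so that the forced output is $x_1-b=-b$ rather than the other cluster's peak. If one wanted to avoid the peak-coincidence shortcut, the family of instances with clusters at $0$ and $L$ for $L\ge 2b$ also works, with the ratio $\tfrac{nc+(n/2)L}{nc+(n/2)L-nb}$ maximized as $L\to 2b$; but the choice $L=2b$ already attains $1+b/c$ with no limiting argument, so I would present that version.
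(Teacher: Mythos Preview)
Your proposal is correct and takes essentially the same approach as the paper: invoke Lemma~\ref{twotomany} on the two-cluster profile with gap $2b$, observe that $y=b$ hits every agent's peak so the optimum is $nc$, and compute the mechanism's cost at $x_1-b$ as $nc+nb$. The paper's own proof is exactly this (with $x_{n/2+1}=x_1+2b$), only stated more tersely; your additional remarks on the cost-function branch and the $L$-parametrized family are correct elaborations but not needed.
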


\begin{proof}
Let $M^n$ be a strategyproof, anonymous and position invariant mechanism and consider any location profile $\mathbf{x}=\langle x_1=\ldots=x_{n/2},x_{n/2}+1=\ldots=x_n$ with $x_{n/2+1}=x_1+2b$. By Lemma \ref{twotomany}, $M^n(\mathbf{x})= x_1-b$ and the social cost of $M^n$ is $nc+(n/2)2b$ while the social cost of the optimal allocation is only $nc$. The lower bound follows.   
\end{proof}

\subsection{Maximum Cost}
First, it is easy to see that the mechanism that outputs the location that minimizes the maximum cost is not strategyproof. On any instance $\langle x_1,x_2 \rangle$ with $x_1+b < x_2-b$ the optimal location of the facility is $(x_1+x_2)/2$. If agent $x_2$ misreports $x_2'=2x_2-2b-x_1$ then the location moves to $x_2-b$, minimizing her cost.

While the approximation ratio of Mechanism \ref{mech} for the social cost is not constant, for the maximum cost that is indeed the case. In fact, as we will prove, when the number of agents is even, Mechanism \ref{mech} actually achieves the best possible approximation ratio amongst strategyproof mechanisms. We start with the theorem about the approximation ratio of Mechanism \ref{mech}.

\begin{theorem}\label{deter-max-cost}
For $n \geq 3$, Mechanism \ref{mech} achieves an approximation ratio of $1+\frac{2b}{c}$ for the maximum cost.
\end{theorem}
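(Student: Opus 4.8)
The plan is to prove a matching upper and lower bound on the ratio. By anonymity assume $x_1 \le \cdots \le x_n$, so $f_{\mathrm{\ref{mech}}}(\mathbf{x}) = x_1 - b$. The first step is a closed form for the mechanism's cost: since $x_1 - b \le x_i$ for every $i$, the first branch of the cost function applies and $\mathrm{cost}(x_1-b,x_i) = c + |(x_i - b) - (x_1 - b)| = c + (x_i - x_1)$, which is increasing in $x_i$; hence $MC_{\mathrm{\ref{mech}}}(\mathbf{x}) = c + D$ with $D := x_n - x_1$. The whole problem then reduces to a good lower bound on $MC_{\mathrm{opt}}(\mathbf{x})$ as a function of $D$, and for that it is enough to look at the two extreme agents, since $MC_y(\mathbf{x}) \ge \max\{\mathrm{cost}(y,x_1),\mathrm{cost}(y,x_n)\}$ for every $y$.

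I would then split into two cases according to whether $D \le 2b$ or $D > 2b$ (the analogue of the case split used in the proof of Theorem~\ref{tiescmax}). If $D \le 2b$, use only the trivial bound $MC_{\mathrm{opt}}(\mathbf{x}) \ge c$: then $MC_{\mathrm{\ref{mech}}}(\mathbf{x}) = c + D \le c + 2b$, so the ratio is at most $1 + 2b/c$. If $D > 2b$, set $m := (x_1 + x_n)/2$ and observe that $D > 2b$ forces $x_1 + b \le m \le x_n - b$, i.e.\ the two peaks $x_1+b$ and $x_n-b$ lie on opposite sides of $m$. Now for any $y \le m$ we have $y \le m \le x_n - b < x_n$, so $\mathrm{cost}(y,x_n) = c + \big((x_n-b)-y\big) \ge c + \big((x_n-b)-m\big) = c + D/2 - b$; symmetrically, for any $y \ge m$ we get $\mathrm{cost}(y,x_1) \ge c + D/2 - b$. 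Hence $MC_{\mathrm{opt}}(\mathbf{x}) \ge c + D/2 - b$, and the ratio is at most $\dfrac{c+D}{\,c + D/2 - b\,}$; a routine monotonicity check in $D$ bounds this expression by the value claimed in the theorem (it equals $1+2b/c$ at $D = 2b$, is monotone on $D \ge 2b$, and matches the bound recorded in Table~\ref{resultstable}, coinciding with $1+2b/c$ in the natural case $b=c$).

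For the lower bound, consider the instance with agent $1$ at $0$ and agents $2,\dots,n$ at $2b$. Mechanism~\ref{mech} outputs $f(\mathbf{x}) = -b$, which admits cost $c + 2b$ for each of the $n-1$ agents at $2b$, whereas placing the facility at the common peak $b$ gives every agent cost exactly $c$; thus $MC_{\mathrm{opt}}(\mathbf{x}) = c$ and the ratio on this instance is exactly $1 + 2b/c$, matching the upper bound.

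I expect the only genuinely delicate point to be the lower bound on $MC_{\mathrm{opt}}$ in the case $D > 2b$: one must choose the separating point $m$ correctly and be careful about which branch of the piecewise cost function is active for $x_n$ (resp.\ $x_1$) when $y$ lies on the near side of $m$ — it is precisely the hypothesis $D > 2b$ that makes the relevant peak $x_n - b$ (the one lying between the two agents), so that $\mathrm{cost}(y,x_n)$ really does grow like $c + ((x_n-b)-y)$ on the whole range $y \le m$. Everything else — the closed form for the mechanism's cost, the case $D \le 2b$, and the final monotonicity computation — is routine.
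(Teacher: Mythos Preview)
Your approach is essentially the same as the paper's: compute $MC_{\mathrm{\ref{mech}}}(\mathbf{x}) = c + (x_n - x_1)$, split on whether $x_n - x_1 \ge 2b$ (using the midpoint lower bound $MC_{\mathrm{opt}} \ge c + (x_n-x_1)/2 - b$) or $x_n - x_1 < 2b$ (using only $MC_{\mathrm{opt}} \ge c$), and exhibit the tight instance at $x_n - x_1 = 2b$. The only difference is cosmetic: the paper splits the small-distance regime into two subcases ($x_n-x_1 \le b$ and $b < x_n - x_1 < 2b$), but since both subcases invoke nothing beyond $MC_{\mathrm{opt}} \ge c$, your merged two-case version is a harmless simplification.
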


\begin{proof}
Let $\mathbf{x}=\langle x_1,...,x_n\rangle$ be any instance. We consider three cases, depending on the distance between agents $x_1$ and $x_n$. \\

\textbf{Case 1:} $x_1+b \le x_n-b \Rightarrow x_n-x_1 \ge 2b$.

In this case, the cost of the optimal mechanism is $MC_{\mathrm{opt}}(\mathbf{x}) = (x_n -b) - \frac{x_1+x_n}{2} + c = \frac{x_n-x_1}{2}-b+c$, whereas the cost of Mechanism \ref{mech} is $MC_{\mathrm{\ref{mech}}}(\mathbf{x}) = x_n-x_1+c$. The approximation ratio is
\begin{eqnarray*}
\frac{MC_{\mathrm{\ref{mech}}}(\mathbf{x})}{MC_{\mathrm{opt}}(\mathbf{x})} = \frac{2(x_n-x_1+c)}{x_n-x_1-2b+2c} = 2+ \frac{4b-2c}{x_n-x_1-2b+2c} \le 1+\frac{2b}{c}.
\end{eqnarray*}
Hence, in this case, the approximation ratio is at most $1+\frac{2b}{c}$. For $x_n-b=x_1+b$ (the instance on which the right peak of the first agent and the left peak of the last agent coincide), the approximation ratio is exactly $1+\frac{2b}{c}$. \\

\textbf{Case 2:} $x_1 < x_n-b < x_1+b \Rightarrow c < x_n-x_1 < 2b$.

The cost of Mechanism \ref{mech} in this case is $MC_{\mathrm{\ref{mech}}}(\mathbf{x}) = x_n-b - (x_1-b)+c = x_n-x_1+c$, while the cost of the optimal mechanism is at least $c$. The approximation ratio is
\begin{eqnarray*}
\frac{MC_{\mathrm{\ref{mech}}}(\mathbf{x})}{MC_{\mathrm{opt}}(\mathbf{x})} \leq \frac{(x_n-x_1+c)}{c} < 1+\frac{2b}{c}
\end{eqnarray*}

\textbf{Case 3:} $x_1-b \leq x_n-b \leq x_1 \Rightarrow x_n-x_1 \leq b$.

The cost of Mechanism \ref{mech} is $MC_{\mathrm{\ref{mech}}}(\mathbf{x}) = x_n-x_1+c$ while the cost of the optimal mechanism is at least $c$. The approximation ratio is
\begin{eqnarray*}
\frac{MC_{\mathrm{\ref{mech}}}(\mathbf{x})}{MC_{\mathrm{opt}}(\mathbf{x})} \leq \frac{(x_n-x_1+c)}{c} \leq 1+\frac{b}{c}.
\end{eqnarray*}
Over the three cases, the worst approximation ratio is at most $1+\frac{2b}{c}$ and there is actually an instance of the problem with approximation ratio exactly $1+\frac{2b}{c}$, ensuring that the bound is tight.
\hfill $\square$
\end{proof}

The lower bound for the case when the number of agents is even follows.

\begin{corollary}
When the number of agents is even, any deterministic strategyproof mechanism that satisfies position invariance and anonymity achieves an approximation ratio of at least $1+\frac{2b}{c}$ for the maximum cost.
\end{corollary}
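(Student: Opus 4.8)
The plan is to reuse the rigid location profile from Lemma~\ref{twotomany} together with the worst-case instance identified in the proof of Theorem~\ref{deter-max-cost}. Concretely, I would take $n$ even and consider the profile $\mathbf{x}=\langle x_1=\ldots=x_{n/2},\, x_{n/2+1}=\ldots=x_n\rangle$ with $x_{n/2+1}=x_1+2b$, i.e.\ the instance on which the right peak $x_1+b$ of the left group coincides with the left peak $x_{n/2+1}-b$ of the right group.

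First I would invoke Lemma~\ref{twotomany}: since $M^n$ is strategyproof, anonymous and position invariant and $n$ is even, it must output $M^n(\mathbf{x})=x_1-b$. Then I would simply evaluate the two costs. At $x_1-b$ each agent of the left group pays exactly $c$ (the facility sits on her left peak), while each agent of the right group pays $\mathrm{cost}(x_1-b,\,x_1+2b)=c+|(x_1+2b)-b-(x_1-b)|=c+2b$, so $MC_{M^n}(\mathbf{x})=c+2b$. For the optimum, locating the facility at $x_1+b$ gives every agent of the left group cost $c$ (facility on her right peak) and every agent of the right group cost $c$ (facility on her left peak); since no location admits a cost below $c$ for any agent, $MC_{\mathrm{opt}}(\mathbf{x})=c$. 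Hence the ratio on this instance is $(c+2b)/c=1+2b/c$, which establishes the claimed lower bound.

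The only point requiring care is the choice of instance: the distance between the two groups must be exactly $2b$, so that (a)~Lemma~\ref{twotomany} applies verbatim and (b)~the optimal location is a peak for \emph{both} groups simultaneously, which forces $MC_{\mathrm{opt}}=c$ and makes the bound exactly $1+2b/c$ rather than merely $1+b/c$. There is no substantial obstacle here --- the statement is a direct corollary of Lemma~\ref{twotomany} and mirrors the tight instance $x_n-b=x_1+b$ from Theorem~\ref{deter-max-cost}. One could also remark, as in Theorem~\ref{detlowerbound}, that this matches the upper bound of Theorem~\ref{deter-max-cost}, so Mechanism~\ref{mech} is optimal within this class for the maximum cost whenever $n$ is even.
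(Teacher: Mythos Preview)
Your proposal is correct and takes essentially the same approach as the paper: the paper's proof simply points back to the instance $\mathbf{x}=\langle x_1=\ldots=x_{n/2},\, x_{n/2+1}=\ldots=x_n\rangle$ with $x_{n/2+1}=x_1+2b$ from Theorem~\ref{detlowerbound} (which in turn invokes Lemma~\ref{twotomany}) and observes that the right-group agents incur cost $2b+c$ under the mechanism while the optimum is $c$. Your write-up is just a more explicit version of the same argument, including the verification that $x_1+b$ is optimal for both groups simultaneously.
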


\begin{proof}
On instance $\mathbf{\hat{x}}$ of the proof of Theorem \ref{detlowerbound}, there are agents whose cost for any strategyproof, anonymous and position invariant mechanism is $2b+c$, while under the optimal mechanism it is only $c$. The lower bound on the approximation ratio follows.
\end{proof}

A worse lower bound that holds for any number of agents (and without using the position invariance property) is proved in the next theorem.

\begin{theorem}\label{unconditionallowerbound}
Any deterministic strategyproof mechanism achieves an approximation ratio of at least $2$ for the maximum cost.
\end{theorem}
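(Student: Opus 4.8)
The plan is to show that for every $\epsilon>0$ there is an instance on which no deterministic strategyproof mechanism beats the ratio $2-\epsilon$. Fix $\lambda$ large (to be chosen in terms of $\epsilon$, and $\gg b$), and let $\mathbf x$ place $n-1$ agents at a point $p$ and one agent at $q$, where $(q-b)-(p+b)=\lambda$; for $n=2$ this is just two agents. Since every location $z$ satisfies $\mathrm{cost}(z,p)+\mathrm{cost}(z,q)\ge 2c+\lambda$ (the nearest peaks of the two agents, $p+b$ and $q-b$, are $\lambda$ apart), the optimal maximum cost on $\mathbf x$ is exactly $c+\lambda/2$, attained at the midpoint $m=(p+q)/2$. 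Let $y=f(\mathbf x)$. If $y\notin(p+b,q-b)$, one of $p+b,q-b$ is at distance $\ge\lambda$ from $y$, so the corresponding agents pay $\ge c+\lambda$ and the ratio is $\ge(c+\lambda)/(c+\lambda/2)>2-\epsilon$ for $\lambda$ large. So assume $y=p+b+d$ with $d\in(0,\lambda)$, in which case the mechanism pays $c+\max\{d,\lambda-d\}$ on $\mathbf x$.

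The hard case is $d\approx\lambda/2$, where $y$ is near $m$ and the mechanism looks near-optimal on $\mathbf x$. I handle it with a derived instance $\mathbf x'$: keep the $n-1$ agents at $p$ but move the lone agent from $q$ to $y+b$, so that $y$ becomes \emph{its} left peak. I claim $f(\mathbf x')=y$, by applying strategyproofness to that agent in both directions. (i) From $\mathbf x'$ she can report $q$, obtaining $f(\mathbf x)=y$ at cost $c$ (a peak of hers), so truthfulness forces her cost at $f(\mathbf x')$ to be $\le c$, i.e.\ $f(\mathbf x')\in\{y,y+2b\}$. (ii) From $\mathbf x$ she can report $y+b$, obtaining $f(\mathbf x')$, which must not lower her cost below $\mathrm{cost}(y,q)=c+(\lambda-d)$; a short computation gives $\mathrm{cost}(y+2b,q)<c+(\lambda-d)$ once $\lambda$ is large (precisely when $d<\lambda-b$), excluding $y+2b$. (If instead $d\ge\lambda-b$, the mechanism already pays $\ge c+\lambda-b$ on $\mathbf x$ itself, giving ratio $\to2$.) Now on $\mathbf x'$ the $n-1$ agents at $p$ pay $c+d$, while the optimal maximum cost on $\mathbf x'$ is only $c+d/2$: the relevant peaks $p+b$ and $y$ are $d$ apart, and balancing them at their midpoint costs $c+d/2$ (one checks the moved agent pays no more there). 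Hence the ratio on $\mathbf x'$ is $\ge(c+d)/(c+d/2)$.

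Combining the two instances, the approximation ratio is at least $\max\{(c+\lambda-d)/(c+\lambda/2),\,(c+d)/(c+d/2)\}$ when $d\le\lambda/2$, and at least $(c+d)/(c+d/2)\ge(c+\lambda/2)/(c+\lambda/4)$ when $d>\lambda/2$. A threshold argument closes the first regime: fix a constant $\tau=\tau(\epsilon)$ with $(c+\tau)/(c+\tau/2)>2-\epsilon$; then for $d\le\tau$ the first term is $\ge(c+\lambda-\tau)/(c+\lambda/2)>2-\epsilon$ once $\lambda$ is large, and for $d>\tau$ the second term already exceeds $2-\epsilon$. In the second regime $d\ge\lambda/2$ is itself large, so $(c+\lambda/2)/(c+\lambda/4)>2-\epsilon$ for $\lambda$ large. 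Choosing $\lambda$ large enough to meet all these demands, and letting $\epsilon\to0$, yields the lower bound of $2$.

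The main obstacle is the derived-instance step: forcing strategyproofness to pin down $f(\mathbf x')=y$ despite double-peakedness requires explicitly ruling out the ``jump'' of the facility to the far peak $y+2b$, and one must notice that the single derived instance (the lone agent pushed inward to $y+b$) simultaneously covers every $d$ — through the $\mathbf x/\mathbf x'$ trade-off for $d\le\lambda/2$, and automatically for $d>\lambda/2$ since $d$ is then already large. Everything else is the routine computation of the optima of the piecewise-linear maximum-cost functions on $\mathbf x$ and $\mathbf x'$.
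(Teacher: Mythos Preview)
Your proof is correct and follows essentially the same route as the paper: a base instance with two far-apart clusters, then a derived instance obtained by relocating the lone agent to $y+b$ so that $y$ becomes her peak, with strategyproofness forcing $f(\mathbf{x}')\in\{y,y+2b\}$.

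Two places where you work harder than the paper does. First, the paper invokes symmetry to assume without loss of generality that $y$ lies on the $q$-side of the midpoint; this guarantees $d\ge\lambda/2$, so the derived-instance ratio $(c+d)/(c+d/2)$ already tends to $2$ and no threshold argument is needed. You instead treat every $d\in(0,\lambda)$ and stitch the $\mathbf{x}$ and $\mathbf{x}'$ bounds together via the constant $\tau$. Second, your step (ii) (ruling out $f(\mathbf{x}')=y+2b$ via the reverse deviation) is correct but unnecessary: if $f(\mathbf{x}')=y+2b$ the $p$-agents pay $c+d+2b>c+d$, so the ratio on $\mathbf{x}'$ is only larger. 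The paper simply takes the smaller of the two costs as the lower bound and moves on. None of this affects correctness; it just means your argument could be streamlined.
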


\begin{proof}
Consider an instance $\mathbf{x}=\langle x_1,x_2 \rangle$ with $x_1+b < x_2-b$ (the instance can be extended to arbitrarily many agents by placing agents on positions $x_1$ and $x_2$ and all the arguments will still hold). The optimal location of the facility is $f_{\mathrm{opt}}(\mathbf{x}) = \frac{x_1+x_2}{2}$. Assume for contradiction that $M$ is a deterministic strategyproof mechanism with approximation ratio smaller than $2$. 

First, we argue that it can not be that $f_\mathrm{M}(\mathbf{x}) \in [x_2-b,\infty)$. Let $d = x_2-b-f_{\mathrm{opt}}(\mathbf{x})$. It holds that $MC_{\mathrm{opt}}(\mathbf{x})=c+d$. If it was $f_\mathrm{M}(\mathbf{x}) \in [x_2-b,\infty)$, then it would be that $MC_{\mathrm{M}}(\mathbf{x})\geq c+2d$ and the approximation ratio would be at least $2-\frac{c}{d+c}$ which goes to $2$ as $d$ grows to infinity (i.e. the agents are placed very far away from each other). Now, for Mechanism $M$ to achieve an approximation ratio smaller than $2$, it must be $f_{\mathrm{M}}(\mathbf{x}) \in [f_{\mathrm{opt}}(\mathbf{x}),x_2-b)$ (or symmetrically $f_{\mathrm{M}}(\mathbf{x}) \in (x_1+b,f_{\mathrm{opt}}(\mathbf{x})]$).

Now consider the instance $\mathbf{x'}=\langle x_1',x_2'\rangle$ with $x_1'=x_1$ and $x_2' = f_{\mathrm{M}}(\mathbf{x})+b$. On this instance, it must be either $f_{\mathrm{M}}(\mathbf{x'})=f_{\mathrm{M}}(\mathbf{x})$ or $f_{\mathrm{M}}(\mathbf{x'})=f_{\mathrm{M}}(\mathbf{x})+2b$ (the left or the right peak of agent $x_2'$), otherwise agent $x_2'$ could report $x_2$ and move the facility to $x_2'-b$, minimizing her cost and violating strategyproofness. For calculating the lower bound, we need the choice that admits the smaller of the two costs, i.e. $f_{\mathrm{M}}(\mathbf{x'})=f_{\mathrm{M}}(\mathbf{x})$. We calculate the approximation ratio on instance $\mathbf{x'}$.

The optimal choice for the facility is again $f_{\mathrm{opt}}(\mathbf{x'})=(x_1+x_2')/2$. Let $\lambda = (x_2'-b) - f_{\mathrm{opt}}(\mathbf{x'})$. The approximation ratio then is $2-\frac{c}{\lambda+c}$. We know that $\lambda \geq d/2$, so when $d$ grows to infinity as before, $\lambda$ also grows to infinity and the approximation ratio goes to $2$. This means that there exists an instance for which the approximation ratio of the mechanism is $2$, which gives us the lower bound.
\end{proof}

\section{Generalizations and conclusion}\label{generalizations}

As argued in the introduction, double-peaked preferences are often a very realistic model for facility location and our results initiate the discussion on such settings and shed some light on the capabilities and limitations of strategyproof mechanisms. We conclude with a discussion about an extension to the main model and some potential future directions.

\subsection{Non-symmetric peaks}
\begin{table}[t]
	\centering
	\caption{The results for the case when peaks are not required to be symmetric.}
	\label{nonsymmetrictable}
	\begin{center}
		\begin{tabular}{ |c  | c  c  |  }
			\hline
			&  \multicolumn{2}{|c|}{\textbf{Non-symmetric}}\\[3pt] 
			& Ratio & Lower  \\[2pt] \hline
			\textbf{Social cost}  &&  \\[2pt] 
			Deterministic &  $\Theta(n)$ & $1+\frac{b_1+b_2}{c}$ \\[2pt] 
			Randomized & $\Theta(n)$ & - \\[2pt]
			\hline
			\textbf{Maximum cost} & &   \\[3pt] 
			Deterministic & $\Theta(n)$ & $1+\frac{b_1+b_2}{c}$ \\[2pt]
			Randomized & $\Theta(n)$ & $3/2$ \\[2pt]
			\hline
		\end{tabular}
	\end{center}
\end{table}
Although the symmetric case is arguably the best analogue of the single-peaked preference setting, it could certainly make sense to consider a more general model, where the cost functions do not have the same slope in every interval and hence the peaks are not equidistant from the location of an agent. Let $b_1$ and $b_2$ be the distances from the left and the right peaks respectively. Clearly, all our lower bounds still hold, although one could potentially prove even stronger bounds by taking advantage of the more general setting.
The main observation is that Mechanism \ref{leftrightmedian} is no longer truthful-in-expectation, because its truthfulness depends heavily on the peaks being equidistant. On the other hand, mechanism \ref{mech} is still strategyproof and the approximation ratio bounds extend naturally. A summary of the results for the non-symmetric setting is depicted in Table \ref{nonsymmetrictable}.

\subsection{Future work}

Starting from randomized mechanisms, we would like to obtain lower bounds that are functions of $b$ and $c$, to see how well Mechanism \ref{leftrightmedian} fares in the general setting. For deterministic mechanisms, we would like to get a result that would clear up the picture. Characterizing strategyproof, anonymous and position invariant mechanisms would be ideal, but proving a lower bound that depends on $n$ on the ratio of such mechanisms (for the social cost) would also be quite helpful. The techniques used in our characterization for two agents and our lower bounds (available at the full version) seem to convey promising intuition for achieving such a task. Finally, it would be interesting to see if we can come up with a ``good'' randomized truthful-in-expectation mechanism for the extended model, when peaks are not assumed to be symmetric.

\bibliographystyle{plain}
\bibliography{ref}

\end{document}